\documentclass{amsart}

\usepackage{amsmath,amssymb}
\usepackage{amsthm}
\usepackage[alphabetic]{amsrefs}
\usepackage{indentfirst}
\usepackage{mathrsfs}
\usepackage{hyperref}
\usepackage{fourier}
\usepackage[margin=1.5in]{geometry}
\usepackage{enumitem}
\usepackage{mathtools}
\usepackage{xcolor}
\theoremstyle{plain}
\newtheorem{theorem}{Theorem}[section]
\newtheorem*{informaltheorem}{Main results}
\newtheorem{lemma}[theorem]{Lemma}

\newtheorem{prop}[theorem]{Proposition}

\theoremstyle{definition}

\newtheorem{assump}[theorem]{Assumption}

\theoremstyle{remark}
\newtheorem{remark}[theorem]{Remark}

\DeclareMathOperator{\tr}{tr}

\DeclareMathOperator{\Ent}{Ent}

\DeclareMathOperator{\diag}{diag}

\newcommand{\dd}{\,\mathrm{d}}

\IfFileExists{mathabx.sty}%
  {\DeclareFontFamily{U}{mathx}{\hyphenchar\font45}%
   \DeclareFontShape{U}{mathx}{m}{n}{<->mathx10}{}%
   \DeclareSymbolFont{mathx}{U}{mathx}{m}{n}%
   \DeclareFontSubstitution{U}{mathx}{m}{n}%
   \DeclareMathAccent{\widebar}{0}{mathx}{"73}%
}{%
  \PackageWarning{mathabx}{%
    Package mathabx not available, therefore\MessageBreak substituting
    widebar with overline\MessageBreak }%
  \newcommand{\widebar}[1]{\overline{#1}}%
}

\newcommand{\mc}[1]{\mathcal{#1}}

\newcommand{\norm}[1]{\lVert#1\rVert}

\renewcommand{\Re}{\mathfrak{Re}}
\renewcommand{\Im}{\mathfrak{Im}}

\newcommand{\R}{\mathbb{R}}
\newcommand{\E}{\mathbb{E}}
\newcommand{\cI}{\mathcal{I}}
\newcommand{\cJ}{\mathcal{J}}

\newcommand{\cR}{\mathcal{R}}
\newcommand{\ip}[2]{\left\langle #1,#2\right\rangle}

\title[Log-Sobolev inequalities for boundary-driven chains]{Log-Sobolev inequalities for boundary-driven anharmonic chains}
\author{Jianfeng Lu}
\address{Mathematics Department, Duke University, Box 90320, Durham, NC 27705 USA.}
\email{jianfeng@math.duke.edu}
\thanks{The work of J.L. is supported in part by the National Science Foundation under grant DMS-2309378. We thank Bowen Li, Jonathan Mattingly, Felix Otto, and Yuliang Wang for helpful discussions.}
\subjclass[2020]{Primary 60H10, 82C05; Secondary 35H10, 35Q84, 60J60, 82C31}
\keywords{Logarithmic Sobolev inequality, non-equilibrium steady state,
anharmonic oscillator chain, boundary-driven Langevin dynamics, entropy
dissipation, hypoellipticity, controllability}
\date{\today}
\hypersetup{
  pdftitle={Log-Sobolev inequalities for boundary-driven anharmonic chains},
  pdfauthor={Jianfeng Lu}
}

\begin{document}
\begin{abstract}
We study the non-equilibrium steady state of a weakly anharmonic chain of $N$ oscillators driven at its boundary by Langevin thermostats at unequal temperatures. 
Under a perturbative weak-anharmonicity condition, we prove a full-gradient logarithmic Sobolev inequality whose constant is independent of the chain length $N$.  
For homogeneous pinned chains, an additional quantitative regularity assumption yields a boundary space-time logarithmic Sobolev inequality and relative-entropy decay on the same $O(N^3)$ relaxation time scale as the harmonic chain.
The proof extracts a finite-dimensional Gaussian component from the boundary noise and compares conditional terminal-state laws by a change of variables.  The estimates are uniform over bounded positive temperatures and require no near-equilibrium assumption on their difference.
\end{abstract}

\maketitle

\section{Introduction}\label{sec:intro}

Non-equilibrium steady states (NESS) for systems coupled to thermostats provide
microscopic models of stationary heat transport.  Compared with equilibrium
Gibbs states, their analysis faces several challenges: the invariant density is
usually not explicit, the dynamics is non-reversible, and entropy decreases
only through the variables in contact with the boundary thermostats.  We study
functional inequalities for a prototypical example, a boundary-driven
oscillator chain.  Fix $N\ge2$, write
$z=(q,p)\in\R^N\times\R^N$, and consider the pinned, weakly anharmonic chain
with Hamiltonian
\[
   H_N(q,p)=\tfrac12|p|^2+V_N(q),\qquad
   V_N(q)=V_N^0(q)+W_N(q),
\]
whose quadratic part is a homogeneous chain of on-site and nearest-neighbor
springs,
\[
   V_N^0(q)=\frac\nu2\sum_{i=1}^N q_i^2
      +\frac\kappa2\sum_{i=0}^N(q_{i+1}-q_i)^2,
   \qquad q_0=q_{N+1}=0,\quad \nu,\kappa>0.
\]
Thus each oscillator is pinned to its site with strength $\nu$ and coupled,
with strength $\kappa$, to its neighbors and, at the two ends, to fixed walls.
We later allow a broader class of harmonic reference chains.
The perturbation $W_N$ is a weak anharmonic potential, to be specified later.
The boundary sites $\partial=\{1,N\}$ are coupled to Langevin
thermostats at temperatures $T_1,T_N$, each with friction coefficient
$\gamma>0$, so that the resulting dynamics is the
boundary-driven hypoelliptic diffusion
\begin{equation}\label{eq:SDE}
\begin{aligned}
  \dd q_i(t)&=p_i(t)\,\dd t, &&1\le i\le N,\\
  \dd p_1(t)&=-\partial_{q_1}V_N(q_t)\,\dd t-\gamma p_1(t)\,\dd t
     +\sqrt{2\gamma T_1}\,\dd W_{\mathrm{L},t},\\
  \dd p_i(t)&=-\partial_{q_i}V_N(q_t)\,\dd t, &&2\le i\le N-1,\\
  \dd p_N(t)&=-\partial_{q_N}V_N(q_t)\,\dd t-\gamma p_N(t)\,\dd t
     +\sqrt{2\gamma T_N}\,\dd W_{\mathrm{R},t},
\end{aligned}
\end{equation}
where $W_{\mathrm{L},t},W_{\mathrm{R},t}$ are independent standard Brownian motions.
When $T_1\ne T_N$, the invariant measure $\pi_N$, when it exists, is the
non-equilibrium steady state.

Although $\pi_N$ is not explicit, entropy inequalities allow us to study both
its static properties and the approach to stationarity.  Throughout, $f$ denotes a probability density with respect to $\pi_N$, we
write $\Ent_{\pi_N}(f)$ for its relative entropy, and $\mc P_t^*f$ for the
density obtained by evolving the law $f\pi_N$ for time $t$ under \eqref{eq:SDE}.  We focus on two
questions:

\begin{enumerate}[label=(\Alph*),leftmargin=*]
\item \emph{How concentrated is $\pi_N$?}  A dimension-free full-gradient LSI,
which controls the relative entropy of $f$ by its full-gradient Fisher
information $\cI_{\mathrm{full}}(f)$, defined in \eqref{eq:full-fisher}, with a
constant of size $O(1)$, gives uniform fluctuation bounds for NESS observables.
This Fisher information is the Dirichlet energy of $\sqrt f$ in all $2N$
phase-space directions.
\item \emph{How fast does the chain reach $\pi_N$?}  A boundary space-time LSI,
which controls the relative entropy of $f$ by its boundary Fisher information
$\cI_\partial(f)$, defined in \eqref{eq:boundary-fisher}, integrated along the
flow, gives relaxation from arbitrary finite-entropy initial densities.  This
Fisher information involves only derivatives in the two boundary momenta.
\end{enumerate}

We answer both questions in a weakly anharmonic perturbative regime.  The
precise smallness assumptions appear later in
Assumptions~\ref{ass:admissible-potential} and~\ref{ass:weak}.  The
dimension-free full-gradient LSI holds for a slightly broader class of harmonic
reference chains, specified in Section~\ref{sec:model}.

\begin{informaltheorem}[informal]
Let $\pi_N$ be the NESS of the boundary-driven chain and define
\[
   \delta_N=\sup_{q\in\R^N}\norm{\nabla^2W_N(q)}_{\mathrm{op}},
   \qquad
   \Lambda_N=
   \sup_{q\in\R^N}\sup_{|h|=1}
   \norm{D(\nabla^2W_N)(q)[h]}_{\mathrm{HS}},
\]
where $\norm{\cdot}_{\mathrm{op}}$ and $\norm{\cdot}_{\mathrm{HS}}$ are the operator and Hilbert--Schmidt norms on $N\times N$ matrices.
Then the following statements hold with constants independent of $N$.
\begin{enumerate}[label=\textup{(\Alph*)},leftmargin=*]
\item \emph{Dimension-free full-gradient LSI.}  Under the admissibility
condition of Assumption~\ref{ass:admissible-potential}, every density $f$ with
respect to $\pi_N$ obeys
   \[
      \Ent_{\pi_N}(f)\le C_{\mathrm{LSI}}\,\cI_{\mathrm{full}}(f).
   \]
\item \emph{Quantitative boundary space-time LSI.}  For the homogeneous pinned
chain, if $N^3\delta_N$ is sufficiently small and
$\Lambda_N<\infty$, there is a window
\[
   T_\ast\le C N^3\bigl[1+\log(2+(N^3\Lambda_N)^2)\bigr]
\]
such that
\[
   \Ent_{\pi_N}(f)
      \le 2\int_0^{T_\ast}
            \cI_\partial(\mc P_t^*f)\,\dd t.
\]
If $\sup_N N^3\Lambda_N<\infty$, then $T_\ast\le\tau_*N^3$ and
\[
   \Ent_{\pi_N}(\mc P_t^*f)
      \le 2e^{-c_*t/N^3}\Ent_{\pi_N}(f),
\]
where $\tau_*$ and $c_*$ are independent of $N$.
\end{enumerate}
\end{informaltheorem}

Result~(A) is a static concentration estimate for the NESS, uniform in the
chain length.  Result~(B) is dynamical: it converts dissipation through only
the two boundary momenta into relative-entropy decay from every finite-entropy
initial density.  When the additional regularity bound is uniform in $N$, the
decay occurs at the rate $cN^{-3}$ of the harmonic chain.

Although the two estimates are tied to the same entropy-production identity,
they are not redundant: neither implies the other.  The dimension-free
full-gradient LSI does not by itself yield the boundary space-time LSI\@.
Entropy production dissipates only the boundary Fisher information,
$\frac{\dd}{\dd t}\Ent_{\pi_N}(\mc P_t^*f)
=-\cI_\partial(\mc P_t^*f)$, whereas the full-gradient LSI uses
$\cI_{\mathrm{full}}$.  Conversely, the boundary space-time LSI is a time-integrated
boundary statement, not an instantaneous full-gradient LSI\@.  An instantaneous
boundary-only LSI is impossible: a nonconstant normalized density depending
only on positions has zero boundary Fisher information but positive entropy.
The time integration is what allows Hamiltonian transport to carry dissipation
from the boundary thermostats to the bulk.

The two LSIs are nevertheless connected.  Integrating the
entropy-production identity and using
$\Ent_{\pi_N}(\mc P_t^*f)\to0$ gives, with the quantitative time horizon
$T_\ast$ of Theorem~\ref{thm:main-stlsi},
\begin{equation}\label{eq:intro-entropy-chain}
  \underbrace{\tfrac12\Ent_{\pi_N}(f)
     \ \le\ \int_0^{T_\ast}\!\cI_\partial(\mc P_t^*f)\,\dd t}_{\text{boundary space-time LSI, (B)}}
  \ \le\ \int_0^\infty\!\cI_\partial(\mc P_t^*f)\,\dd t
  \ =\ \Ent_{\pi_N}(f)
  \ \le\
  \underbrace{C_{\mathrm{LSI}}\,\cI_{\mathrm{full}}(f)}_{\substack{\text{dimension-free}\\\text{full-gradient LSI, (A)}}}.
\end{equation}
Both results therefore bound the same entropy.  The boundary space-time LSI
uses boundary Fisher information over a quantitative finite window, while the
dimension-free full-gradient LSI uses the instantaneous full-gradient Fisher
information.  The middle equality in \eqref{eq:intro-entropy-chain} identifies
the total boundary Fisher information exactly with the entropy.  When
$\sup_N N^3\Lambda_N<\infty$, the finite window satisfies
$T_\ast\le\tau_*N^3$.

The proofs use different consequences of the boundary noise.  For Result~(A),
a Gaussian logarithmic Sobolev inequality is transferred from the noise to the
finite-time laws through a uniform estimate on the linearized propagator, and then
passed to the NESS.  For Result~(B), controllability of the harmonic chain and conditioning
reduce the comparison of transition kernels to a finite-dimensional change of
variables, which preserves the $N^3$ time scale under weak
anharmonicity.

The rest of the paper is organized as follows.  Section~\ref{sec:model} fixes
the model and states the precise assumptions and main results.
Section~\ref{sec:inputs} establishes estimates for the harmonic chain on the
$N^3$ relaxation and controllability scale.
Section~\ref{sec:lsi} establishes the uniform dual boundary estimate and proves
the dimension-free full-gradient LSI, while Section~\ref{sec:stlsi} proves the
boundary space-time LSI.

\subsection*{Related work}
The mathematical study of boundary-driven oscillator chains begins with the
model of a harmonic chain.  Rieder--Lebowitz--Lieb computed the Gaussian stationary state
and the heat current of a finite harmonic chain coupled to boundary reservoirs
\cite{RiederLebowitzLieb}.  Further early work includes
\cites{CasherLebowitz,SpohnLebowitz}; see
\cites{BonettoLebowitzReyBellet,LepriLiviPoliti} for broader background.

For anharmonic chains, the foundational rigorous results mainly concern
existence, uniqueness, and convergence at fixed system size.
Eckmann--Pillet--Rey-Bellet prove existence of a stationary state for arbitrary
temperature differences and uniqueness and mixing near equilibrium
\cite{EckmannPilletReyBellet}.  Their companion work extends uniqueness and
convergence to arbitrary temperature differences and relates entropy
production to the mean energy flux
\cite{EckmannPilletReyBelletEntropy}.  Eckmann--Hairer extend existence and
uniqueness to strongly anharmonic potentials with essentially arbitrary growth,
using a strengthened H\"ormander theory on unbounded domains
\cite{EckmannHairer}.  Rey-Bellet and Thomas prove exponential convergence for
chains coupled to Hamiltonian reservoirs \cite{ReyBelletThomas}, while Carmona
obtains exponential convergence for related stochastic heat-bath models
\cite{Carmona}.  Cuneo, Eckmann, Hairer, and Rey-Bellet establish existence,
uniqueness, and exponential convergence for more general oscillator networks
under controllability and assumptions on the potentials at infinity
\cite{CuneoEHRB}.  Taken together, these works establish fixed-$N$ ergodicity
by combining propagation of the boundary noise, deterministic control, and
Lyapunov--Harris arguments.  Our study is more on the quantitative side, focusing on the dependence of
functional inequalities and relaxation estimates on the chain length.

System-size dependence is understood most sharply for harmonic chains.
For the homogeneous pinned chain, Becker--Menegaki determine the sharp
$N^{-3}$ scaling of the spectral gap \cite{BeckerMenegaki}.  Adapting
Baudoin's generalized Bakry--\'Emery method \cite{Baudoin}, Menegaki proves
quantitative Wasserstein and relative-entropy convergence for weakly
anharmonic chains \cite{Menegaki}.  The work assumes that the sum
of the pinning and interaction Hessian bounds is $O(N^{-9/2})$ and gives a
Wasserstein estimate with rate $cN^{-3}$ and prefactor $O(N^{3/2})$.  Its
twisted-gradient LSI has constant $O(N^3)$
\cite{Menegaki}*{Proposition~1.5}.  The twisted gradient is comparable to the
Euclidean full gradient only up to a factor $O(N^3)$
\cite{Menegaki}*{equation~(3.2)}, since the twisting matrix has norm $O(N^3)$
while its inverse has norm $O(1)$.  Combining the two therefore yields an LSI
for the Euclidean full gradient with constant $O(N^6)$.
The full-gradient LSI proved here instead has a constant independent of $N$.
Its proof passes Gross's Gaussian LSI from finite-dimensional polygonal
approximations of the boundary Brownian motion to the transition laws and then
to the NESS, using an estimate for the adjoint of the variational propagator measured in the
norm dual to the energy norm of the harmonic chain.  For the homogeneous chain,
the full-gradient LSI applies when $N^3\delta_N$ is sufficiently small.

The closest general results concern functional inequalities for non-equilibrium
diffusions with non-explicit invariant measures.  Monmarch\'e and Wang develop
two approaches to logarithmic Sobolev inequalities for such measures: a
Holley--Stroock and Aida--Shigekawa perturbation, in which the non-explicit
correction is controlled by stochastic control, and a method combining
Wasserstein contraction with hypercontractivity \cite{MonmarcheWang}.  Huang,
Kopfer, Monmarch\'e, and Ren prove relative-entropy and Wasserstein ergodicity
for kinetic SDEs whose drift is dissipative outside a compact set, by deriving
hypercontractivity from hyperboundedness and $L^2$-ergodicity
\cite{HuangKopferMonmarcheRen}.  These results do not apply to the present
chain because the boundary noise is degenerate on the bulk momenta.

Finally, Li and Lu develop a space-time logarithmic Sobolev inequality and hypocoercive hypercontractivity for equilibrium underdamped Langevin dynamics
\cite{LiLu2026}, using the 
modified-entropy method with an optimal-transport corrector \cite{Lu2026}.  At
the $L^2$ level, Albritton--Armstrong--Mourrat--Novack develop a
variational theory and functional inequalities of Poincar\'e 
type for the kinetic Fokker--Planck equation \cite{AAMN}, while Cao--Lu--Wang
obtain explicit $L^2$ hypocoercive rates \cite{CaoLuWang2023}.  See
\cite{Villani} for general background on hypocoercivity.  These works connect
space-time or modified functional inequalities with hypocoercive relaxation in
equilibrium.  Our boundary space-time LSI plays an analogous role for a
different degeneracy: dissipation acts only on the two boundary momenta and the
invariant measure is a non-explicit NESS. 

\subsection*{Use of AI tools}
LLM-based assistants were used in the preparation of this manuscript for
drafting, language editing, and consistency checks.  The author is responsible
for independently checking every statement and proof and for the correctness
and integrity of the final manuscript.

\section{The model and main results}\label{sec:model}

\subsection{Boundary-driven oscillator chains}

We study a nearest-neighbor chain of $N$ oscillators
$(q,p)\in\R^N\times\R^N$, with Hamiltonian
$H_N(q,p)=\tfrac12|p|^2+V_N(q)$, coupled to Langevin thermostats at its boundary
sites $\partial=\{1,N\}$.  The
dynamics is the boundary-driven diffusion
\eqref{eq:SDE} introduced in Section~\ref{sec:intro}.  To write it more compactly, set
\begin{equation}\label{eq:sigma}
  \Gamma_\partial=\gamma(e_1e_1^{\top}+e_Ne_N^{\top}),
  \qquad
  B_{\partial,N}=
  \begin{pmatrix}
    0_{N\times1} & 0_{N\times1} \\
    \sqrt{2\gamma T_1}\,e_1 & \sqrt{2\gamma T_N}\,e_N
  \end{pmatrix}\in\R^{2N\times2},
\end{equation}
where $e_i$ are the coordinate vectors in $\R^N$. For $z=(q,p)$, define
\[
   b(z):=\binom{p}{-\nabla V_N(q)-\Gamma_\partial p}.
\]
Then the SDE \eqref{eq:SDE} can be written as
\begin{equation}\label{eq:SDE-vector}
  \dd Z_t=b(Z_t)\dd t
  +B_{\partial,N}\dd W_t,
\end{equation}
where
$W_t=(W_{\mathrm{L},t},W_{\mathrm{R},t})$ collects the two boundary noises into a
standard two-dimensional Brownian motion.
The associated generator is the boundary-driven hypoelliptic operator
\[
   \mc L_N=p\cdot\nabla_q-\nabla V_N(q)\cdot\nabla_p
      +\gamma\Bigl(T_1\partial_{p_1}^2-p_1\partial_{p_1}
         +T_N\partial_{p_N}^2-p_N\partial_{p_N}\Bigr).
\]
The noise and damping act \emph{only} at the boundary sites, and the temperatures $T_1,T_N$ may be unequal and satisfy
\begin{equation}\label{eq:T-bounds}
   0<\underline T\le T_1,T_N\le\overline T<\infty.
\end{equation}

We write the potential as a perturbation of a harmonic reference,
$V_N=V_N^0+W_N$.  Throughout the paper, $W_N\in C^\infty(\R^N)$ and is
nearest-neighbor at the Hessian level:
$\partial_{q_iq_j}^2W_N\equiv0$ whenever $|i-j|>1$.  Put
\[
   \delta_N:=\norm{\nabla^2W_N}_\infty,
   \qquad
   \Lambda_N:=
   \sup_{q\in\R^N}\sup_{|h|=1}
      \norm{D(\nabla^2W_N)(q)[h]}_{\mathrm{HS}}.
\]
Here $D(\nabla^2W_N)(q)[h]$ denotes the directional derivative of the Hessian at $q$ in the direction $h$; its $(i,j)$ entry is
$\sum_{k=1}^N\partial_{q_iq_jq_k}^3W_N(q)h_k$.  Moreover,
$\norm{\nabla^2W_N}_\infty:=
\sup_q\norm{\nabla^2W_N(q)}_{\mathrm{op}}$ uses the Euclidean operator norm, while
$\norm{\cdot}_{\mathrm{HS}}$ is the Euclidean Hilbert--Schmidt norm on $N\times N$
matrices.  The quantity $\delta_N$ is the perturbative Hessian
size used in both main results.  The Hessian-Lipschitz quantity $\Lambda_N$ is
used only for the boundary space-time LSI\@.  The affine part of $W_N$ is left
unconstrained, since differences of trajectories and the linearized equations depend
on $W_N$ only through its Hessian.  The reference is quadratic,
$V_N^0(q)=\frac{1}{2}q^{\top}K_N^0q$, and has the same nearest-neighbor
structure as the chain: its stiffness matrix is
symmetric tridiagonal and uniformly elliptic,
\begin{equation}\label{eq:K-elliptic}
   m_0\,I_N\le K_N^0 = \nabla^2V_N^0\le M_0\,I_N,\qquad m_0,M_0>0\ \text{independent of }N.
\end{equation}
For the homogeneous chain of Section~\ref{sec:intro}, with on-site strength
$\nu$ and coupling $\kappa$, $K_N^0$ is symmetric tridiagonal with diagonal
entries $\nu+2\kappa$ and off-diagonal entries $-\kappa$.  Thus
\eqref{eq:K-elliptic} holds with $m_0=\nu$ and $M_0=\nu+4\kappa$.
We use the associated energy inner product, its norm, and the Euclidean dual
norm,
\begin{equation}\label{eq:energy-dual-norms}
\begin{aligned}
   \langle z,z'\rangle_E
      &=q^{\top}K_N^0q'+p^{\top}p',
      \qquad z=(q,p),\quad z'=(q',p'),\\
   \norm{z}_E^2&=q^{\top}K_N^0q+|p|^2,\qquad z=(q,p),\\
   \norm{w}_{E,*}^2&=w_q^{\top}(K_N^0)^{-1}w_q+|w_p|^2,
      \qquad w=(w_q,w_p).
\end{aligned}
\end{equation}
By the ellipticity bounds \eqref{eq:K-elliptic}, the energy norm and its dual
are uniformly equivalent to the Euclidean norm, with constants independent of
$N$: for all $z,w\in\R^{2N}$,
\begin{equation}\label{eq:norm-equivalence}
   \min\{m_0,1\}\,|z|^2\le\norm{z}_E^2\le\max\{M_0,1\}\,|z|^2,
   \qquad
   \frac{|w|^2}{\max\{M_0,1\}}\le\norm{w}_{E,*}^2\le\max\{1/m_0,1\}\,|w|^2 .
\end{equation}
In particular $|w|^2\le\max\{M_0,1\}\norm{w}_{E,*}^2$.
The energy norm induces the extended quadratic Wasserstein distance
\[
   \mathsf{W}_E(\mu_1,\mu_2)^2
      :=\inf_{\lambda\in\Pi(\mu_1,\mu_2)}
        \int_{\R^{2N}\times\R^{2N}} \norm{z-z'}_E^2\,\lambda(\dd z,\dd z'),
\]
for probability measures $\mu_1,\mu_2$ on $\R^{2N}$, where
$\Pi(\mu_1,\mu_2)$ is the set of their couplings.  The distance is understood
to be $+\infty$ when the infimum is infinite.

For each fixed $N$, under the admissibility hypothesis below, the dynamics
\eqref{eq:SDE} is ergodic by Proposition~\ref{prop:fixedN-mixing} and has a
unique invariant probability measure with smooth strictly positive density.  We
denote this invariant measure by $\pi_N$.

Let $\mc P_t=e^{t\mc L_N}$ be the Markov semigroup acting on observables,
\[
   \mc P_tg(z)=\E_z g(Z_t).
\]
Denote by $\mc P_t^*$ the adjoint of $\mc P_t$ on densities with respect to
$\pi_N$:
\[
   \int (\mc P_t\varphi)\,f\dd\pi_N=\int\varphi\,(\mc P_t^* f)\dd\pi_N.
\]
If a law has density $f$ with respect to $\pi_N$, then $\mc P_t^* f$ is the
density of its evolution by the SDE at time $t$.  Since the dynamics
\eqref{eq:SDE} is non-reversible, $\mc P_t^*\ne\mc P_t$ in general.

Define the boundary carr\'e du champ by
\[
   \Gamma(f)=\gamma\sum_{i\in\partial}T_i\,|\partial_{p_i}f|^2.
\]
Let $\mc E_\partial$ be the closure in $L^2(\pi_N)$ of the quadratic form
\[
   \mc E_\partial(u,u)
      :=\int\Gamma(u)\dd\pi_N
      =\gamma\sum_{i\in\partial}T_i
         \int|\partial_{p_i}u|^2\dd\pi_N,
   \qquad u\in C_c^\infty(\R^{2N}).
\]
It is closable because $\pi_N$ has a smooth strictly positive density: on
every compact set its weight is bounded above and below, so a sequence
converging to zero in $L^2(\pi_N)$ whose boundary gradients are Cauchy has
zero distributional boundary-gradient limit.  The usual cutoff argument also
shows
that every locally Sobolev function in $L^2(\pi_N)$ with finite displayed
boundary energy belongs to the form domain.

For any density $f$ with respect to $\pi_N$, define its boundary Fisher
information by
\begin{equation}\label{eq:boundary-fisher}
   \cI_\partial(f):=4\,\mc E_\partial(\sqrt f,\sqrt f),
\end{equation}
with value $+\infty$ when $\sqrt f$ is outside the form domain.  If $f$ is
smooth and strictly positive, the chain rule gives the familiar formula
\[
   \cI_\partial(f)=\int\frac{\Gamma(f)}{f}\dd\pi_N
      =4\gamma\sum_{i\in\partial}T_i
         \int |\partial_{p_i}\sqrt f|^2\dd\pi_N.
\]
The definition through the closed quadratic form is needed only to state the
entropy identity for nonsmooth data.

For a probability measure $\mu$ and a nonnegative integrable function $g$, set
\[
   \Ent_\mu(g)
      :=\int g\log\!\left(\frac{g}{\int g\dd\mu}\right)\dd\mu,
\]
with value zero when $g=0$ $\mu$-almost everywhere.  For probability measures
$\mu_1,\mu_2$, write $\mathsf{KL}(\mu_1\mid\mu_2)$ for their relative entropy,
with value $+\infty$ when $\mu_1\not\ll\mu_2$.  In particular, if $f$ is a
density with respect to $\pi_N$, then
\[
   \Ent_{\pi_N}(f)=\int f\log f\dd\pi_N
      =\mathsf{KL}(f\pi_N\mid\pi_N).
\]
For smooth positive data for which differentiation and integration by parts
are justified, set $h_t=\mc P_t^*f$.  Invariance of $\pi_N$ and the diffusion
chain rule give
\begin{equation}\label{eq:entropy-id}
\begin{aligned}
   \frac{\dd}{\dd t}\Ent_{\pi_N}(h_t)
      &=\int(1+\log h_t)\,\partial_t h_t\dd\pi_N \\
      &=-\int\frac{\Gamma(h_t)}{h_t}\dd\pi_N
       =-\cI_\partial(h_t).
\end{aligned}
\end{equation}
The next lemma records the corresponding exact integrated identity for
arbitrary finite-entropy data; we omit the proof for the standard result, while referring to Fontbona and Jourdain
\cite{FontbonaJourdain}. 

\begin{lemma}[Exact integrated entropy dissipation]\label{lem:integrated-entropy}
If $f$ is a density with $\Ent_{\pi_N}(f)<\infty$, then
$t\mapsto\Ent_{\pi_N}(\mc P_t^*f)$ is nonincreasing and, for
$0\le s<t<\infty$,
\begin{equation}\label{eq:integrated-entropy}
   \Ent_{\pi_N}(\mc P_s^*f)-\Ent_{\pi_N}(\mc P_t^*f)
      =\int_s^t\cI_\partial(\mc P_r^*f)\,\dd r.
\end{equation}
The integrand is understood through the closed boundary form in \eqref{eq:boundary-fisher}.
\end{lemma}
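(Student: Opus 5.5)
The plan is to prove \eqref{eq:integrated-entropy} first for smooth, strictly positive, well-localized initial densities, and then extend it to arbitrary finite-entropy $f$ by approximation.

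\textbf{Step 1: regular data.} Let $f$ be a smooth positive density, equal to a constant multiple of $1$ outside a compact set and bounded above and below by positive constants. The dynamics \eqref{eq:SDE} is hypoelliptic by H\"ormander's condition, which is guaranteed by the controllability underlying Proposition~\ref{prop:fixedN-mixing}. Hence $h_t=\mc P_t^*f$ is smooth, and the maximum principle keeps it between the same constants. Indeed, $h_t$ solves $\partial_t h=\mc L_N^*h$, where $\mc L_N^*$ is the $L^2(\pi_N)$-adjoint generator, and $\mc L_N^* 1=0$ by invariance. The formal computation \eqref{eq:entropy-id} uses $\mc L_N^*=-\mc A+\mc S$, with the second-order part equal to that of $\mc L_N$, so that
\[
\int \varphi(h)\,\mc L_N^*h\dd\pi_N=-\int\varphi'(h)\Gamma(h)\dd\pi_N .
\]
To justify the integrations by parts on $\R^{2N}$, I would insert a cutoff $\chi_R(z)=\chi(H_N(z)/R)$ and pass $R\to\infty$. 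The error terms involve $\mc L_N\chi_R$, which is bounded uniformly in $R$ (energy growth is controlled because the drift is Lipschitz), multiplied by $h\log h$ and $h$, which are bounded. Dominated convergence then gives the differential identity, and integrating over $[s,t]$ gives \eqref{eq:integrated-entropy} for such $f$.

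\textbf{Step 2: approximation of general $f$.} For general $f$ with finite entropy, I would take regular $f_n\to f$ in $L^1(\pi_N)$ with $\Ent_{\pi_N}(f_n)\to\Ent_{\pi_N}(f)$, obtained by truncation, mollification and renormalization. Since $\mc P_r^*$ is an $L^1$ contraction, $\mc P_r^*f_n\to\mc P_r^*f$ in $L^1$. Lower semicontinuity of entropy, together with the data-processing inequality $\Ent(\mc P_r^*g)\le\Ent(g)$, yields two facts:
\begin{itemize}
\item monotonicity in time for general $f$;
\item convergence at time $s$, namely $\Ent(\mc P_s^*f_n)\to\Ent(\mc P_s^*f)$. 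The $\limsup$ direction needs a convexity argument: choose $f_n$ with $\Ent(f_n\mid f)$-type control, or use a joint convexity bound.
\end{itemize}
For the Fisher information, the map $h\mapsto 4\mc E_\partial(\sqrt h,\sqrt h)$ is convex and lower semicontinuous in $L^1$, because the boundary form is closed. Fatou's lemma then gives
\[
\int_s^t\cI_\partial(\mc P_r^*f)\dd r\le\liminf_n\int_s^t\cI_\partial(\mc P_r^*f_n)\dd r ,
\]
which produces the inequality "$\ge$" in \eqref{eq:integrated-entropy} once the entropies at time $s$ converge.

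\textbf{Step 3: the reverse inequality (main obstacle).} The direction "$\le$" is the genuinely delicate part, and it is where Fontbona--Jourdain's argument is needed. I would follow their time-reversal approach. Consider the path-space relative entropy between the law of the process started from $f\pi_N$ and the stationary process on $[s,t]$. The time reversal of the stationary process is again a diffusion, with generator $\mc L_N^*$ and noise only in the boundary momenta. Apply Girsanov's theorem to the reversed processes. The drift difference between the reversed nonstationary and reversed stationary processes is $B_{\partial,N}B_{\partial,N}^{\top}\nabla\log h_r$, which is justified by Haussmann--Pardoux-type reversal under finite entropy. This gives the chain-rule decomposition
\[
\Ent(\mc P_s^*f)-\Ent(\mc P_t^*f)=\int_s^t\cI_\partial(h_r)\dd r ,
\]
with both sides finite, and this holds without any smoothness of $f$. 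The Fisher information appears exactly as the squared Girsanov drift, $\gamma\sum_{i\in\partial}T_i|\partial_{p_i}\log h_r|^2$ integrated against $h_r\pi_N$. The technical checks are the finiteness of $\int_s^t\cI_\partial\dd r$, which follows from Step 2, and the identification of the reversed drift on the degenerate directions. The latter I would handle by localization, using the smooth positive density of $\pi_N$.
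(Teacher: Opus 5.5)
Your proposal is correct, and its decisive Step~3 is the Fontbona--Jourdain time-reversal and Girsanov argument that the paper cites in place of a proof: the path-space relative entropy on $[s,t]$ equals $\Ent_{\pi_N}(\mc P_s^*f)$ when read forward, and $\Ent_{\pi_N}(\mc P_t^*f)+\int_s^t\cI_\partial(\mc P_r^*f)\,\dd r$ when read backward. Steps~1--2 are consistent but only supply a priori finiteness, and the vague $\limsup$ issue at time $s$ disappears if you apply the case $s=0$ to $\mc P_s^*f$ via the semigroup property.
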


For the static LSI, let $\mc E_{\mathrm{full}}$ be the closure in $L^2(\pi_N)$ of
\[
   \mc E_{\mathrm{full}}(u,u):=\int|\nabla u|^2\dd\pi_N,
   \qquad u\in C_c^\infty(\R^{2N}),
\]
where $\nabla=(\nabla_q,\nabla_p)$ is the Euclidean phase-space gradient.
Define the full Fisher information of a density $f$ by
\begin{equation}\label{eq:full-fisher}
   \cI_{\mathrm{full}}(f):=4\,\mc E_{\mathrm{full}}(\sqrt f,\sqrt f),
\end{equation}
with value $+\infty$ outside the form domain.  For smooth positive $f$, the
Sobolev chain rule gives
\[
   \cI_{\mathrm{full}}(f)
      =\int\frac{|\nabla f|^2}{f}\dd\pi_N
      =4\int|\nabla\sqrt f|^2\dd\pi_N.
\]
As for $\mc E_\partial$, the smooth positive density of $\pi_N$ makes the
pre-form closable, and the usual cutoff argument gives its weighted Sobolev
characterization.

\subsection{The dimension-free full-gradient LSI}

We first state the dimension-free full-gradient LSI under the following
assumption.

\begin{assump}[Admissible perturbation]\label{ass:admissible-potential}
The harmonic reference $K_N^0$ is symmetric tridiagonal and obeys
\eqref{eq:K-elliptic}, and the perturbation $W_N$ is \emph{admissible}:
\begin{enumerate}[label=(\roman*)]
\item the perturbed bonds remain nondegenerate:
   \begin{equation}\label{eq:bond-nondegenerate}
      |\partial_{q_iq_{i+1}}^2V_N(q)|
      \ge\frac12\min_{1\le j<N}|(K_N^0)_{j,j+1}|>0,
      \qquad 1\le i<N,\quad q\in\R^N;
   \end{equation}
\item the perturbation is sufficiently small:
   \[
      \Theta_N:=\norm{\nabla^2W_N}_\infty
      \max\{\norm{G_N},m_0^{-1}\}\le\theta_0,
   \]
   where $G_N$ is defined in \eqref{eq:GN} and $\theta_0>0$ is sufficiently small,
   depending only on $m_0,M_0,\gamma,\underline T,\overline T$.
\end{enumerate}
The first condition guarantees that no adjacent coupling vanishes.  This
allows the boundary noise to propagate to all sites.  The second condition is a smallness assumption and carries the $N^3$ scale.  By
Lemma~\ref{lem:gramian-bound}, on the homogeneous chain,
\[
   \norm{G_N}\le CN^3
   \quad \Rightarrow \quad
   \Theta_N\le CN^3\norm{\nabla^2W_N}_\infty,
\]
which gives the natural $N^3$ scaling for the homogeneous chain.
\end{assump}

Under this admissibility hypothesis, the boundary noise propagates to every site and yields ergodicity.  We record the fixed-$N$ statement here; the proof is standard by a Lyapunov--Harris argument, see e.g., \cite{HairerMattinglyHarris}.  Smoothness and strict positivity of the invariant density come from the H\"ormander condition, which \eqref{eq:bond-nondegenerate} guarantees; see \cite{CuneoEHRB}.

\begin{prop}[Fixed-$N$ ergodicity]\label{prop:fixedN-mixing}
Assume $W_N$ is admissible.  The semigroup has a unique invariant probability
measure $\pi_N$ with smooth strictly positive density and finite second moment
in the energy norm.  For every $z\in\R^{2N}$,
\begin{equation}\label{eq:pointwise-TV}
   \norm{\mc P_t(z,\cdot)-\pi_N}_{\mathrm{TV}}\longrightarrow0.
\end{equation}
\end{prop}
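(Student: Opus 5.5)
We plan to follow the standard Harris route, as in \cite{HairerMattinglyHarris}, in three steps: a Lyapunov drift bound, a minorization on sublevel sets coming from hypoellipticity and controllability, and then Harris's theorem. Smoothness and positivity of the invariant density will be obtained separately from H\"ormander's condition.

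\textbf{Lyapunov function.} The first step is a drift bound. We take $\mathcal V(z)=H_N(z)+\varepsilon\,\Phi(z)+c$, where $\Phi$ is a bounded-growth correction of the type used in \cite{CuneoEHRB}. One example is $\Phi=\sum_i a_i\,p_iq_i$ with suitable weights, chosen so that the energy dissipated at the boundary is transported into the bulk. We have
\[
\mc L_N H_N=\gamma\sum_{i\in\partial}(T_i-p_i^2).
\]
The dissipation therefore only sees the boundary momenta. The correction $\Phi$ is meant to produce $-c\norm{z}_E^2$ from $\mc L_N\Phi$, up to the bounded terms.

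This step uses the nearly harmonic structure. Since $\nabla^2 V_N$ is uniformly close to $K_N^0$ by Assumption~\ref{ass:admissible-potential}(ii), $\nabla V_N$ grows linearly: $\nabla V_N(q)=K_N^0q+\nabla W_N(0)+O(\delta_N|q|)$. A quadratic Lyapunov function of the form $z^\top Pz$ then works, with $P$ solving a Lyapunov equation for the linear drift matrix $A_0$ of the harmonic chain. This $A_0$ is Hurwitz because $(A_0,B_{\partial,N})$ is controllable. Controllability follows from the nondegenerate bonds \eqref{eq:bond-nondegenerate}, via the Kalman rank condition for a tridiagonal $K$ with nonzero off-diagonals.

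Since $\Theta_N\le\theta_0$ and $\norm{P}$ is comparable to $\norm{G_N}$, the perturbation $2\delta_N\norm{P}|z|^2$ is absorbed. This gives
\[
\mc L_N\mathcal V\le -c\mathcal V+C,
\]
where $c$ and $C$ may depend on $N$. Only fixed-$N$ constants are needed here. The same inequality also gives $\pi_N(\norm{\cdot}_E^2)<\infty$.

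\textbf{Minorization.} Next we establish a minorization on sublevel sets. Because the second derivatives $\partial^2_{q_iq_{i+1}}V_N$ never vanish, iterated brackets of $\partial_{p_1}$ with the drift vector field generate $\partial_{q_1},\partial_{p_2},\partial_{q_2},\dots$ successively along the chain. Hence H\"ormander's condition holds everywhere. Consequently the transition kernels have smooth densities $p_t(z,z')$, and any invariant measure has a smooth density.

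Strict positivity comes from the Stroock--Varadhan support theorem. This requires the control problem $\dot z=b(z)+B_{\partial,N}u$ to be exactly controllable between any two points in time $t$. We would prove this by a perturbation of the linear controllability Gramian, or by the explicit recursive construction in \cite{CuneoEHRB} that uses the nonvanishing bonds. Together, smoothness and positivity give a uniform lower bound $p_t(z,z')\ge\eta$ for $z$ in a compact sublevel set and $z'$ in a small ball. This is the Doeblin minorization.

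\textbf{Conclusion and main obstacle.} Harris's theorem then gives existence and uniqueness of $\pi_N$, and convergence in $\mathcal V$-weighted total variation for every initial point, which is \eqref{eq:pointwise-TV}. Positivity of the density of $\pi_N$ follows from $\pi_N=\int p_t(z,\cdot)\,\pi_N(\dd z)$ together with the positivity of $p_t$.

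We expect global controllability of the nonlinear system to be the main obstacle. It needs both the exact bond condition \eqref{eq:bond-nondegenerate} and the small-Hessian assumption, because the linearized control argument must hold uniformly over all of $\R^{2N}$. Our first approach would be a fixed-point argument around the harmonic control, using the inverse Gramian $G_N$ and the Lipschitz bound $\delta_N\norm{G_N}\le\theta_0$.
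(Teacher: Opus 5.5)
This is essentially the paper's route. The paper just invokes a Lyapunov--Harris argument \cite{HairerMattinglyHarris} together with the H\"ormander condition supplied by \eqref{eq:bond-nondegenerate} and \cite{CuneoEHRB}, and your quadratic Lyapunov function (with $\norm{P}\le C\norm{G_N}$, from the momentum-flip conjugation in the proof of Theorem~\ref{thm:harmonic}) and your bracket computation fill in those details correctly. Some minor repairs are needed:

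\begin{enumerate}[label=(\roman*)]
\item Hurwitz stability of $A_N^0$ needs the energy dissipation together with controllability, not controllability alone.
\item The support theorem plus continuity only makes $p_t(z,\cdot)$ positive on an open dense set. So the Doeblin bound and the everywhere positivity of the density of $\pi_N$ should go through the standard reachable-point/Chapman--Kolmogorov argument (or Bony's maximum principle), not through pointwise positivity of $p_t$.
\item A fixed-point controllability argument must use the boundary controllability Gramian $\int_0^t e^{sA_N^0}B_{\partial,N}B_{\partial,N}^{\top}e^{s(A_N^0)^{\top}}\dd s$, not $G_N$. Alternatively, a recursive (flatness) construction along the chain needs only \eqref{eq:bond-nondegenerate} and no smallness.
\end{enumerate}
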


By \eqref{eq:pointwise-TV} and dominated convergence, $\mc P_t^*f\to1$ in
$L^1(\pi_N)$ for every density $f$, and a standard truncation argument upgrades
this to $\Ent_{\pi_N}(\mc P_t^*f)\to0$ whenever $\Ent_{\pi_N}(f)<\infty$.
Letting $t\to\infty$ in Lemma~\ref{lem:integrated-entropy} then gives
\begin{equation}\label{eq:entropy-id-infinite-adjoint}
   \Ent_{\pi_N}(f)=\int_0^\infty
      \cI_\partial(\mc P_t^*f)\,\dd t ,
\end{equation}
the infinite-time equality in \eqref{eq:intro-entropy-chain}; the proof below
uses the finite-time identity directly.  Note that no formula for the generator
of $\mc P_t^*$ is needed in the argument, which is in general not available. 

With the invariant measure fixed, we can state the first main estimate.  It is
a static inequality for the NESS: relative entropy is controlled by the
full-gradient Fisher information with a constant independent of the chain
length.

\begin{theorem}[Dimension-free full-gradient LSI]\label{thm:main-lsi}
Under Assumption~\ref{ass:admissible-potential}, there is a constant
$C_{\mathrm{LSI}}$, depending only on $m_0,M_0$, $\gamma$, and the temperature
bounds in \eqref{eq:T-bounds}, such that for each $N$
\begin{equation}\label{eq:main-lsi}
   \Ent_{\pi_N}(f)\le C_{\mathrm{LSI}}\,\cI_{\mathrm{full}}(f)
\end{equation}
for every density $f$ with respect to $\pi_N$.
\end{theorem}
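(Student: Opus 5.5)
The plan follows the route announced in the introduction: transfer Gross's Gaussian LSI from the boundary noise to the transition laws $\mc P_t(z,\cdot)$, then pass to $\pi_N$.

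\emph{Step 1 (fixed noise path).} Fix a time $t$ and an initial point $z$. The endpoint $Z_t=\Phi_t(z,W)$ is a functional of the two-dimensional Brownian path $W$ on $[0,t]$. We approximate $W$ by polygonal interpolations on a mesh of size $t/n$. This makes $Z_t^{(n)}$ a smooth function of finitely many independent Gaussians, each scaled by $\sqrt{t/n}$.

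\emph{Step 2 (Gaussian LSI for each approximation).} For a smooth $g$, Gross's LSI applied to $\xi\mapsto g^2(\Phi_t^{(n)}(z,\xi))$ gives
\[
\Ent(g^2(Z_t^{(n)}))\le 2\,\E\bigl|D_\xi[g(Z_t^{(n)})]\bigr|^2 .
\]
By the chain rule, $D_\xi g(Z_t)$ pairs $\nabla g(Z_t)$ with the Malliavin derivative
\[
D_sZ_t=J_{s,t}B_{\partial,N},
\]
where $J_{s,t}$ is the variational propagator solving the linearized equation
\[
\dot\zeta=\begin{pmatrix}0&I\\-\nabla^2V_N(q)&-\Gamma_\partial\end{pmatrix}\zeta .
\]
The Malliavin derivative acts as a map from the Cameron--Martin space into the phase space. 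The quantity to bound is therefore
\[
\int_0^t|B_{\partial,N}^\top J_{s,t}^\top\nabla g(Z_t)|^2\,\dd s .
\]
We aim for the uniform-in-$N$ estimate
\[
\int_0^t|B_{\partial,N}^\top J_{s,t}^\top w|^2\,\dd s\le C\norm{w}_{E,*}^2
\quad\text{for all } w\in\R^{2N},\ t\ge0,
\]
with $C$ depending only on $m_0,M_0,\gamma,\underline T,\overline T$. Combined with \eqref{eq:norm-equivalence}, this yields
\[
\Ent_{\mc P_t(z,\cdot)}(g^2)\le C'\,\mc P_t|\nabla g|^2(z)
\]
after letting $n\to\infty$; the limit is justified by convergence in probability of $Z^{(n)}_t$ (Wong--Zakai) together with lower semicontinuity of entropy.

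\emph{Step 3 (the dual boundary estimate — main obstacle).} This is the estimate from the Step~2 display, and I expect it to be the hardest part. For the harmonic reference ($W_N=0$), the adjoint propagator is $e^{(t-s)A_0^\top}$. In the energy inner product, $A_0$ is the skew part minus boundary damping, so an energy-dissipation identity applies. Differentiating $\norm{e^{rA_0^\top}w}_{E,*}^2$ yields exactly $-2\gamma$ times the boundary momentum components squared. Integrating in $r$ then gives
\[
\int_0^\infty|B_{\partial,N}^\top e^{rA_0^\top}w|^2\,\dd r\le C\norm{w}_{E,*}^2
\]
with no $N$-dependence, since $T_i\le\overline T$. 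For $W_N\ne0$, I would treat $\nabla^2W_N(q_s)$ as a time-dependent perturbation and use Duhamel's formula. The perturbation destroys exact dissipation, so one must control how much energy the perturbation injects over the $N^3$ relaxation scale. Here I would use the Gramian $G_N$ of \eqref{eq:GN}: it converts boundary observations back into energy control, so that the excess energy is bounded by $\Theta_N$ times the dissipated boundary energy. Smallness $\Theta_N\le\theta_0$ then closes the estimate by absorption, with a constant independent of $t$ and $N$. Condition (i) of Assumption~\ref{ass:admissible-potential} is used only for ergodicity, via Proposition~\ref{prop:fixedN-mixing}.

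\emph{Step 4 (passage to the NESS).} Apply the Step~2 inequality to $g=\sqrt f$ with $f$ smooth and compactly supported. Take an arbitrary initial point $z_0$ and let $t\to\infty$. By \eqref{eq:pointwise-TV} and a truncation argument, $\mc P_t(z_0,\cdot)\to\pi_N$, and both the entropy and the Fisher-type term converge. The resulting bound is
\[
\Ent_{\pi_N}(f)\le \tfrac{C'}{4}\,\cI_{\mathrm{full}}(f).
\]
To see the conversion of constants, note that
\[
\Ent_{\mc P_t(z_0,\cdot)}(f)\le C'\,\mc P_t|\nabla\sqrt f|^2(z_0)\to C'\int|\nabla\sqrt f|^2\dd\pi_N=\tfrac{C'}{4}\,\cI_{\mathrm{full}}(f).
\]
A density approximation in the closed form $\mc E_{\mathrm{full}}$, together with lower semicontinuity of the entropy, extends the inequality to every density $f$.
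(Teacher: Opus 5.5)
Your architecture is the paper's. You apply Gross's inequality to polygonal approximations of the boundary noise and use the chain rule plus Bessel's inequality to reduce everything to the dual boundary integral. You then need a bound on that integral that is uniform in the horizon, in $N$ and in the coefficient path, and finally you pass to $t\to\infty$ via \eqref{eq:pointwise-TV} and closure of the form. In Step~4, work with $u\in C_c^\infty$ and $\Ent(u^2)$ rather than $\sqrt f$, which need not be $C^1$.

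\textbf{The gap is in Step~3}, the only nontrivial estimate. There you bound the energy injected by $\nabla^2W_N$ by $\Theta_N$ times the \emph{dissipated boundary energy}, with $G_N$ ``converting boundary observations into energy control.'' That is not what $G_N$ provides. The identity $v^\top G_Nv=\int_0^\infty|e^{\tau(A_N^0)^\top}v|^2\dd\tau$ controls the time-integrated \emph{full} norm of the \emph{unperturbed} adjoint trajectory by its \emph{initial datum}. Absorbing into the dissipation would instead need a window-wise observability inequality $\int_0^T|w|^2\dd\tau\lesssim\norm{G_N}\int_0^T|B_{\partial,N}^\top w|^2\dd\tau$ for the perturbed flow. $G_N$ does not give this. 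It also fails on short windows: adjoint data localized in the bulk are essentially invisible at the boundary until the wave arrives. Yet the estimate is used on every window $[0,T]$ in Lemma~\ref{lem:finite-horizon-lsi}. As stated, your absorption does not close.

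\textbf{What is missing} is a horizon-uniform bound on $\int_0^T|w|^2$ for the \emph{perturbed} adjoint flow $\dot w=(A_N^0+P_\tau)^\top w$, where $P_\tau=\begin{pmatrix}0&0\\-K_{t-\tau}&0\end{pmatrix}$. The paper gets it by using $w^\top G_Nw$ as a Lyapunov function. By \eqref{eq:GN-Lyapunov},
\[
   \frac{\dd}{\dd\tau}\bigl(w^\top G_Nw\bigr)
   =-|w|^2+2\,w^\top P_\tau G_Nw
   \le-(1-2\delta_N\norm{G_N})|w|^2\le-\tfrac12|w|^2 ,
\]
so $\int_0^T|w|^2\dd\tau\le2\norm{G_N}|v|^2$ for every $T$. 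The absorption happens here, against $-|w|^2$, not against the boundary dissipation.

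The perturbed dual energy balance is
\[
   \frac12\frac{\dd}{\dd\tau}\norm{w}_{E,*}^2
   =-\langle w_p,\Gamma_\partial w_p\rangle
    -w_q^\top(K_N^0)^{-1}K_{t-\tau}w_p .
\]
It bounds the injected energy by $(\delta_N/m_0)\int_0^T|w_q||w_p|\dd\tau\le C\delta_N\norm{G_N}|v|^2\le C\theta_0\norm{v}_{E,*}^2$. That is a bound by the \emph{initial} dual energy, and \eqref{eq:admissible-dual-boundary} follows.

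Your Duhamel route can also be closed. From \eqref{eq:GN-Lyapunov} one gets $GR^*+RG=RR^*$ for $R=(i\omega-A_N^0)^{-1}$, hence $\sup_\omega\norm{R}\le2\norm{G_N}$. This gives $\norm{w}_{L^2(0,T)}\le\norm{G_N}^{1/2}|v|+2\delta_N\norm{G_N}\norm{w}_{L^2(0,T)}$. Either way, a quantitative statement of this kind has to be supplied.
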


We outline the proof, which is given in Section~\ref{sec:lsi}.  Because the
invariant measure $\pi_N$ is not explicit, we first prove the same inequality
for the law of the chain at a finite time, starting from a fixed state.  The
constant is uniform in the starting state, the elapsed time, and the chain
length.  As time tends to infinity, these laws converge to $\pi_N$ by
Proposition~\ref{prop:fixedN-mixing}, and the inequality passes to the limit.

The advantage of working at finite time is that the final state is determined
by the two boundary Brownian motions.  We approximate these motions by paths
depending on finitely many Gaussian variables and apply the Gaussian
logarithmic Sobolev inequality \cite{Gross1975}.  This reduces the proof to
controlling how strongly the final state responds to changes in the boundary
noise.  That response is governed by the linearization of the dynamics along
the trajectory, and the perturbation enters only through its Hessian.

It remains to show that the accumulated response to the two boundary noises is
bounded uniformly in both time and chain length.  For the harmonic chain, an
energy identity gives this bound directly: the conservative part of the
dynamics cancels, leaving only the two boundary terms
(Proposition~\ref{prop:dual-boundary-energy}).  The same estimate remains valid
for the weakly anharmonic chain because the perturbation is small in the sense
of Assumption~\ref{ass:admissible-potential}; see
Lemma~\ref{lem:admissible-dual-boundary}.  This yields a uniform logarithmic
Sobolev constant for every finite-time law and hence for $\pi_N$.

\subsection{The boundary space-time LSI}

We next state the boundary space-time LSI, which requires a homogeneous
reference chain and a weakly anharmonic perturbation.

\begin{assump}[Weak anharmonicity with controlled Hessian variation]\label{ass:weak}
The reference system is the homogeneous pinned harmonic chain, and the two
temperatures (which need not be equal) satisfy \eqref{eq:T-bounds}.  We impose
the following two conditions on the perturbation $W_N$:
\begin{enumerate}[label=(\roman*)]
\item its Hessian is small on the relaxation scale of the chain:
   \begin{equation}\label{eq:weak-bound}
      N^3\delta_N=N^3\norm{\nabla^2W_N}_\infty\le\theta_0,
   \end{equation}
   where $\theta_0>0$ is sufficiently small and depends only on
   $\nu,\kappa,\gamma,\underline T,\overline T$;
\item its Hessian has finite uniform variation, that is,
   \[
      \Lambda_N
      :=\sup_{q\in\R^N}\sup_{|h|=1}
      \norm{D(\nabla^2W_N)(q)[h]}_{\mathrm{HS}}<\infty.
   \]
\end{enumerate}
\end{assump}

The first condition makes the anharmonic force a small perturbation over times
of order $N^3$, while the second controls how its linearization changes with
the configuration.  Moreover, for $\theta_0$ sufficiently small,
Assumption~\ref{ass:weak} implies
Assumption~\ref{ass:admissible-potential}.  Indeed, every entry of
$\nabla^2W_N$ is bounded by $\delta_N$, so the bonds of the reference chain remain
nondegenerate, while Lemma~\ref{lem:gramian-bound} and \eqref{eq:weak-bound} give
$\Theta_N\le CN^3\delta_N\le C\theta_0$.

\begin{theorem}[Quantitative boundary space-time LSI]\label{thm:main-stlsi}
Suppose Assumption~\ref{ass:weak} holds.  Then the following statements hold.
\begin{enumerate}[label=(\roman*)]
\item There exist constants
$\tau_c,C>0$, depending only on
$\nu,\kappa,\gamma,\underline T,\overline T$, such that, with $T_c=\tau_cN^3$,
for every $z,z'\in\R^{2N}$,
\begin{equation}\label{eq:kernel-entropy}
   \mathsf{KL}\bigl(\mc P_{T_c}(z',\cdot)\mid\mc P_{T_c}(z,\cdot)\bigr)
      \le C\bigl[1+(N^3\Lambda_N)^2\bigr]
      \norm{z-z'}_E^2 .
\end{equation}
\item There exists a constant
$C>0$, depending only on
$\nu,\kappa,\gamma,\underline T,\overline T$, and an observation time $T_\ast$ satisfying
\begin{equation}\label{eq:TN-bound}
   T_\ast
      \le CN^3
      \bigl[1+\log\bigl(2+(N^3\Lambda_N)^2\bigr)\bigr], 
\end{equation}
such that for every density $f$ with respect to $\pi_N$ and 
$\Ent_{\pi_N}(f)<\infty$, we have 
\begin{equation}
   \Ent_{\pi_N}(f)
      \le2\int_0^{T_\ast}
         \cI_\partial(\mc P_t^*f)\,\dd t. \label{eq:quant-stlsi}
\end{equation}
\item If, in addition,
$\sup_N N^3\Lambda_N<\infty$, then there exist constants $\tau_*,c_*>0$,
depending only on $\nu,\kappa,\gamma,\underline T,\overline T$ and
$\sup_N N^3\Lambda_N$, such that, for every $N$ and every density $f$ with
respect to $\pi_N$ with finite entropy,
\begin{align}
   \Ent_{\pi_N}(f)
      &\le2\int_0^{\tau_*N^3}
          \cI_\partial(\mc P_t^*f)\,\dd t, \label{eq:main-stlsi}\\
   \Ent_{\pi_N}(\mc P_t^*f)
      &\le2e^{-c_*t/N^3}\Ent_{\pi_N}(f),\qquad t\ge0. \label{eq:main-entropy-decay}
\end{align}
\end{enumerate}
\end{theorem}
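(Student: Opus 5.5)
Part (i) will be proved by a coupling/Girsanov-type change of variables on the driving noise, reduced to finitely many dimensions. Write the solution map at time $T_c$ as $Z_{T_c}=\Phi_{T_c}(z,W)$. For the harmonic chain, $\Phi$ is affine: $Z_{T_c}=e^{T_cA}z+\int_0^{T_c}e^{(T_c-s)A}B_{\partial,N}\dd W_s$. The Gramian $G$ of $(A,B_{\partial,N})$ on $[0,T_c]$ is invertible, and by the harmonic estimates of Section~\ref{sec:inputs} (Lemma~\ref{lem:gramian-bound}) its inverse is bounded in the energy norm by $CN^3$ once $T_c=\tau_cN^3$. I will extract a Gaussian component: project $W$ onto the $2N$-dimensional span of the control functions $u_k(s)=B_{\partial,N}^\top e^{(T_c-s)A^\top}(\cdot)$, and condition on the orthogonal complement. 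The conditional law of $Z_{T_c}$ is then the pushforward of a standard Gaussian $\xi\in\R^{2N}$ under a map $\xi\mapsto F_z(\xi)$. For the harmonic chain the two kernels are Gaussians with the same covariance, and KL equals $\tfrac12|G^{-1/2}e^{T_cA}(z-z')|^2\le C\norm{z-z'}_E^2$, using the controllability cost bound $\sim N^{-3}\cdot N^3$ in energy norm. For the anharmonic chain, $F_z$ is a perturbation of the affine map. Its Jacobian $D_\xi F_z$ differs from the harmonic one by $O(N^3\delta_N)\le C\theta_0$ in the scaled norm, so $F_z$ is a diffeomorphism onto $\R^{2N}$ by a Hadamard-type argument. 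Then
\[
\mathsf{KL}\le \E\,\mathsf{KL}(\text{conditional}),
\]
by convexity (the chain rule for KL). The conditional KL is bounded via the change of variables $\xi'=F_{z'}^{-1}\circ F_z(\xi)$: it contributes $\tfrac12\E|\xi'-\xi|^2$ plus the log-Jacobian difference $\E|\log\det D F_{z'}^{-1}F_z|$. The first term is $C\norm{z-z'}_E^2$ as in the harmonic case. The log-determinant is controlled by the variation of $D_\xi F$ in $z$. That variation involves the third derivatives of $W_N$, giving a Hilbert--Schmidt factor $N^3\Lambda_N\norm{z-z'}_E$. Squaring, through a second-order expansion of $\log\det$ in HS norm, yields $(N^3\Lambda_N)^2\norm{z-z'}_E^2$.

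Part (ii) will follow from (i) together with the dimension-free LSI of Theorem~\ref{thm:main-lsi}. Write $h=\mc P_{t}^*f$. For $t\ge T_c$, Jensen/convexity gives a reverse-transport (Wang-type) bound: the entropy of $\mc P_{T_c}^*g$ is at most the entropy of $g$, minus a gain term controlled through the KL kernel estimate (i). The more direct route, which I will take, uses the log-Harnack inequality derived from (i):
\[
\mc P_{T_c}\log\varphi(z')\le\log\mc P_{T_c}\varphi(z)+C_\Lambda\norm{z-z'}_E^2,\qquad C_\Lambda=C[1+(N^3\Lambda_N)^2].
\]
Combined with the full-gradient LSI and the $\mathsf W_E$ geometry, this gives an entropy contraction after an additional delay. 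Concretely, I will iterate: over each window of length $T_c$, compare $\Ent(\mc P^*_{T_c}h)$ with $\Ent(h)$ using the log-Harnack and the $O(1)$ LSI (or equivalently, Talagrand from Theorem~\ref{thm:main-lsi}). This yields $\Ent(\mc P_{kT_c}^*f)\le (C_\Lambda/k)$-type decay, or a geometric decay after $k\sim\log(2+C_\Lambda)$ windows. Choosing $T_\ast=kT_c$ with $k\approx C\log(2+(N^3\Lambda_N)^2)$ gives $\Ent(\mc P^*_{T_\ast}f)\le\tfrac12\Ent(f)$. Lemma~\ref{lem:integrated-entropy} then converts this into \eqref{eq:quant-stlsi}.

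Part (iii) is immediate. The bound $T_\ast\le\tau_*N^3$ follows from \eqref{eq:TN-bound}. Iterating $\Ent(\mc P^*_{T_\ast}f)\le\tfrac12\Ent(f)$, via monotonicity, gives \eqref{eq:main-entropy-decay} with $c_*=\log 2/\tau_*$, with the prefactor $2$ absorbing the partial window.

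The main obstacle is the step in (i) bounding the log-Jacobian difference uniformly in $N$. It requires showing that the derivative of the noise-to-state map with respect to the initial condition, composed with the inverse Gramian, is Hilbert--Schmidt with norm $\lesssim N^3\Lambda_N$ and not worse by a factor of $\sqrt N$. I would first try to write $\partial_z D_\xi F$ as a Duhamel integral of $D(\nabla^2W_N)[\delta q_s]$ against the harmonic propagator, bounding it by $\int_0^{T_c}\norm{D(\nabla^2 W_N)}_{\mathrm{HS}}\,|\delta q_s|\dd s\cdot\norm{G^{-1}}$ with $|\delta q_s|\lesssim\norm{z-z'}_E$ uniformly on $[0,T_c]$.
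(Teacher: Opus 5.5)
The genuine gap is in part (ii): nothing in your argument makes the entropy decrease. The kernel estimate \eqref{eq:kernel-entropy}, or equivalently your log-Harnack inequality, gives only a smoothing bound once you apply joint convexity: $\mathsf{KL}(\mu\mc P_{T_c}\mid\pi_N)\le C_\Lambda\,\mathsf{W}_E(\mu,\pi_N)^2$. Talagrand then yields $\Ent_{\pi_N}(\mc P_{T_c}^*f)\le C_\Lambda C_T\Ent_{\pi_N}(f)$, and $C_\Lambda C_T$ is not small. Iterating this window by window, as you propose, gives at best $(C_\Lambda C_T)^k\Ent_{\pi_N}(f)$, which is no better than monotonicity. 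So neither the ``$C_\Lambda/k$-type'' decay nor geometric decay after $k\sim\log(2+C_\Lambda)$ windows follows.

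The missing ingredient is contraction of the dynamics in $\mathsf{W}_E$: $\mathsf{W}_E(\mu_1\mc P_t,\mu_2\mc P_t)\le C_We^{-c_Wt/N^3}\mathsf{W}_E(\mu_1,\mu_2)$. It comes from synchronous coupling. The difference of two solutions driven by the same Brownian motion solves \eqref{eq:linear-family} with the averaged Hessian, and Lemma~\ref{lem:robust-stability} gives pathwise decay at rate $N^{-3}$ whenever $N^3\delta_N\le\theta_0$. One lets the chain contract for a delay $s_N\sim N^3\log(2+C_\Lambda)$ before the smoothing window. This gives $\Ent_{\pi_N}(\mc P^*_{s_N+T_c}f)\le C_\Lambda C_W^2e^{-2c_Ws_N/N^3}C_T\Ent_{\pi_N}(f)\le\frac12\Ent_{\pi_N}(f)$, which is where the logarithm in \eqref{eq:TN-bound} comes from. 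Your phrases ``$\mathsf{W}_E$ geometry'' and ``additional delay'' point this way, but the contraction must be stated and proved, and the iteration you actually describe does not use it.

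Your part (i) follows the paper's architecture: conditioning on the orthogonal residual noise, a $2N$-dimensional Gaussian pushforward, a normalized Jacobian within $O(N^3\delta_N)$ of the identity, a Hilbert--Schmidt variation of order $N^3\Lambda_N$, and averaging over the residual by convexity. Two steps need correcting.

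First, Lemma~\ref{lem:gramian-bound} bounds the Lyapunov Gramian $G_N$ from above and says nothing about the inverse of the boundary controllability Gramian. You need the controllability Gramian on $[0,\tau_cN^3]$ to be bounded above and below by $N$-independent constants in the energy norm. The upper bound comes from the dual boundary estimate (Lemma~\ref{lem:admissible-dual-boundary} with $K=0$); the lower bound comes from the $O(1)$ control cost in Proposition~\ref{prop:harmonic-control}. With the $O(N^3)$ inverse you assert, three things go wrong. Your own formula would bound the harmonic $\mathsf{KL}$ only by $O(N^3)\norm{z-z'}_E^2$, since the reference propagator at time $\tau_cN^3$ contracts only by a constant factor. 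Smallness of $N^3\delta_N$ would no longer put the normalized Jacobian near the identity. And the Hilbert--Schmidt factor would acquire extra powers of $N$.

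Second, your accounting of the conditional $\mathsf{KL}$ drops a first-order term. For the normalized maps $\Phi_z$ and $\tau=\Phi_z^{-1}\circ\Phi_{z'}=I+h$, one has $\mathsf{KL}(\tau_\#\gamma_{2N}\mid\gamma_{2N})=\E\bigl[G\cdot h(G)+\frac12|h(G)|^2-\log\det(I+Dh(G))\bigr]$. A second-order expansion of $\log\det$ alone leaves $\E\tr Dh(G)$, which the Hilbert--Schmidt norm controls only with a $\sqrt{2N}$ loss. The dimension-free bound comes from Gaussian integration by parts, $\E[G\cdot h(G)]=\E[\tr Dh(G)]$. This cancels the first-order term and leaves $|\tr H-\log\det(I+H)|\le C\norm{H}_{\mathrm{HS}}^2$ for $\norm{H}_{\mathrm{op}}\le 2/9$. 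Finally, $D\tau(u)-I$ compares $D\Phi_{z'}(u)$ with $D\Phi_z(\tau(u))$. So you also need the Hilbert--Schmidt Lipschitz bound on $D\Phi$ in the Gaussian variable, not only in the initial state.
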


A concrete class satisfying the additional hypothesis
$\sup_NN^3\Lambda_N<\infty$ in part~(iii) is given by
\[
   W_N(q)=N^{-3}\left(
      \sum_{i=1}^NU(q_i)+\sum_{i=1}^{N-1}R(q_{i+1}-q_i)\right)
\]
with $U'''$ and $R'''$ uniformly bounded.  Indeed, each directional third
derivative involves only a uniformly bounded number of neighboring
coordinates, so the finite-range structure gives $\Lambda_N\le CN^{-3}$ and
hence $\sup_NN^3\Lambda_N<\infty$.

The proof is given in Section~\ref{sec:stlsi}.  We first isolate finitely many
Gaussian components of the boundary noise that, over a time of order $N^3$,
can move the chain in every phase-space direction.  After conditioning on the
rest of the noise, the state at the end of this interval depends smoothly on
these Gaussian variables.  Weak anharmonicity keeps this dependence close to
that of the harmonic chain, while the regularity assumption controls how it
changes with the initial state.  A change of variables then compares the
transition laws from two different initial states and proves
\eqref{eq:kernel-entropy}.  Together with contraction of the dynamics and the
transport inequality for the invariant measure, this comparison shows that
entropy decreases by a uniform proportion after a suitable time interval.  The
entropy-production identity then gives the boundary space-time LSI.

\section{Estimates for harmonic chains}\label{sec:inputs}

Before proving the main results, we collect estimates that quantify how
boundary damping and noise propagate through the harmonic chain.  Two classes
of reference systems appear.  We first define the drift and Gramian for the
general connected, uniformly elliptic tridiagonal reference allowed in
Assumption~\ref{ass:admissible-potential}.  We then specialize to the
homogeneous pinned chain, for which the Gramian, relaxation time, and
controllability time all have the sharp $N^3$ scale.

We use standard finite-dimensional control terminology; see, for instance,
\cite{SontagControl}.  For a general reference in the first class, the
reference equation is linear, with constant drift matrix
\begin{equation}\label{eq:A0}
   A_N^0=\begin{pmatrix}0&I_N\\-K_N^0&-\Gamma_\partial\end{pmatrix}.
\end{equation}
All eigenvalues of $A_N^0$ have negative real part (that is, $A_N^0$ is
Hurwitz).  Indeed, suppose
$A_N^0(q,p)=\lambda(q,p)$ for a nonzero complex eigenvector.  Then
$p=\lambda q$ and
\[
   (\lambda^2I_N+\lambda\Gamma_\partial+K_N^0)q=0 .
\]
Multiplying on the left by $q^*$ gives
\[
   \lambda^2|q|^2+\lambda q^*\Gamma_\partial q
      +q^*K_N^0q=0 .
\]
If $\Im\lambda=0$, the real part rules out $\lambda\ge0$.  If
$\Im\lambda\ne0$, the imaginary part gives
$2(\Re\lambda)|q|^2+q^*\Gamma_\partial q=0$, hence
$\Re\lambda\le0$.  In the equality case, $q^*\Gamma_\partial q=0$,
so $q_1=q_N=0$; the eigenvalue equation reduces to
$K_N^0q=-(\lambda^2)q$, and the boundary recurrence for the tridiagonal
reference chain, with nonzero nearest-neighbor bonds, forces $q=0$, hence
also $p=0$.  Thus no nonzero eigenvector has $\Re\lambda=0$, and $A_N^0$ is
Hurwitz.

We define the Gramian
\begin{equation}
    G_N=\int_0^\infty e^{tA_N^0}e^{t(A_N^0)^{\top}}\dd t.
   \label{eq:GN}
\end{equation}
The matrix $G_N$ is symmetric and positive definite, and satisfies
\begin{equation}\label{eq:GN-Lyapunov}
   A_N^0G_N+G_N(A_N^0)^{\top}=-I_{2N}.
\end{equation}  More concretely, for
every $v\in\R^{2N}$,
\[
   v^\top G_Nv=\int_0^\infty
      \bigl|e^{t(A_N^0)^\top}v\bigr|^2\dd t.
\]
Thus $G_N$ records the total squared size of a solution of the adjoint reference equation
over all times.  Its operator norm is the largest such accumulated
size among unit initial vectors and quantifies the time scale over which
perturbations can build up.

\medskip 

We now specialize to the homogeneous
pinned chain as the reference harmonic chain.  We first derive the deterministic $N^3$ bounds and then the
controlled-path estimate used in Section~\ref{sec:stlsi}.  The first input is a direct
consequence of the Lyapunov construction in
\cite{Menegaki}*{Proposition~1.1 and Lemmas~6.5--6.10}, after translating to our notation.

\begin{lemma}[Gramian estimate]\label{lem:gramian-bound}
For the homogeneous pinned reference chain,
\[
   \norm{G_N}\le CN^3 .
\]
The constant $C$ is independent of $N$ and may depend on
$\nu,\kappa,\gamma$.
\end{lemma}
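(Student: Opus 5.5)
The plan is to compare $G_N$ with an explicit quadratic Lyapunov matrix rather than computing $G_N$ directly. The comparison principle is elementary. Suppose $Q_N$ is symmetric and positive definite and satisfies the \emph{Lyapunov inequality}
\[
   A_N^0Q_N+Q_N(A_N^0)^{\top}\le-I_{2N}.
\]
Set $X=Q_N-G_N$. Subtracting \eqref{eq:GN-Lyapunov} gives $A_N^0X+X(A_N^0)^{\top}=-S$ with $S\ge0$. Since $A_N^0$ is Hurwitz, $X=\int_0^\infty e^{tA_N^0}Se^{t(A_N^0)^{\top}}\dd t\ge0$, so $0\le G_N\le Q_N$ and $\norm{G_N}\le\norm{Q_N}$. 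It therefore suffices to produce such a $Q_N$ with $\norm{Q_N}\le CN^3$.

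To build $Q_N$, I would take the Lyapunov matrix $\Pi_N$ of \cite{Menegaki}*{Proposition~1.1}, written in our coordinates. I would use it in the form of two properties. The first is uniform equivalence: $c_1I_{2N}\le\Pi_N\le c_2I_{2N}$, with $c_1,c_2$ independent of $N$. The second is a decay estimate at rate $N^{-3}$:
\[
   (A_N^0)^{\top}\Pi_N+\Pi_NA_N^0\le-\frac{c_3}{N^3}\,\Pi_N .
\]
Menegaki's Lemmas~6.5--6.10 build $\Pi_N$ as the energy matrix $\diag(K_N^0,I_N)$ plus small off-diagonal corrections: first a position--momentum cross term $q\cdot p$-type at each bond, then a telescoping sequence of bond-by-bond couplings. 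These carry dissipation from $p_1,p_N$ inward, each step costing a power of $N$ that is compensated by choosing the coefficients small.

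The estimate above is for $A_N^0$ acting on the right, i.e.\ for the forward flow. The Gramian involves the adjoint flow $e^{t(A_N^0)^{\top}}$, so I need the corresponding inequality for $(A_N^0)^{\top}$. I would obtain it by conjugation. With $J=\diag(K_N^0,I_N)$, one checks $(A_N^0)^{\top}=J^{-1}\widetilde A_NJ$, where $\widetilde A_N$ is the drift of the same chain with the sign of the $q$--$p$ coupling flipped. That is the reversed-momentum chain, and it is handled by replacing $p$ by $-p$ in Menegaki's construction. Since $J$ and $J^{-1}$ are bounded uniformly in $N$ by \eqref{eq:K-elliptic}, this transfers the two properties of $\Pi_N$ to a matrix $\widehat\Pi_N$ with the same scaling, now satisfying
\[
   A_N^0\widehat\Pi_N+\widehat\Pi_N(A_N^0)^{\top}\le-\frac{c_3'}{N^3}\,\widehat\Pi_N .
\]

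Then I set $Q_N=(N^3/(c_3'c_1'))\widehat\Pi_N$, where $c_1'$ is the lower equivalence constant of $\widehat\Pi_N$. This gives
\[
   A_N^0Q_N+Q_NA_N^{0\top}\le-\frac{1}{c_1'}\,\widehat\Pi_N\le-I_{2N},
   \qquad
   \norm{Q_N}\le\frac{c_2'}{c_3'c_1'}\,N^3,
\]
and the comparison principle concludes the proof.

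Equivalently, $\widehat\Pi_N$ yields $|e^{t(A_N^0)^{\top}}v|^2\le(c_2'/c_1')e^{-c_3't/N^3}|v|^2$, and integrating in $t$ gives the same bound on $v^{\top}G_Nv$.

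The main obstacle is the translation step. Menegaki's lemmas are stated for her normalization (mass, friction, and temperature scaling) and for the forward generator. I must verify two things. First, the equivalence constants of $\Pi_N$ are genuinely $O(1)$, not merely the twisted-gradient bounds that carry $O(N^3)$ factors. Second, the time-reversal conjugation preserves both the equivalence and the $N^{-3}$ rate.

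If her equivalence constants turned out to degrade with $N$, I would fall back on a direct argument. Observability of $(A_N^0)^{\top}$ from $(p_1,p_N)$ on a window $[0,\tau N^3]$ would combine with the adjoint energy identity $\frac{\dd}{\dd t}\norm{w}_{E,*}^2=-2\gamma(w_{p,1}^2+w_{p,N}^2)$ to give a uniform contraction of $\norm{\cdot}_{E,*}$ by a factor $1/2$ per window. Summing the geometric series then bounds $\int_0^\infty|e^{t(A_N^0)^{\top}}v|^2\dd t$ by $CN^3|v|^2$ via \eqref{eq:norm-equivalence}.
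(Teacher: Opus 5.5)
Your comparison principle is correct. Passing from the forward to the adjoint flow by the momentum flip is also the right device; the paper uses it in the proof of Theorem~\ref{thm:harmonic}. The order of your conjugation is reversed, though: $(A_N^0)^{\top}=J(\mathfrak RA_N^0\mathfrak R)J^{-1}$, so the transferred matrix is $J^{-1}\mathfrak R\Pi_N\mathfrak RJ^{-1}$.

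The real problem is the input you take from \cite{Menegaki}, and the risk you flagged does materialize. Her Lyapunov (twisting) matrix does not have $N$-independent equivalence constants. As the paper recalls in its related-work discussion, it has norm $O(N^3)$ and its inverse has norm $O(1)$. So $c_1\sim1$ and $c_2\sim N^3$. Combined with only the relative decay $(A_N^0)^{\top}\Pi_N+\Pi_NA_N^0\le-c_3N^{-3}\Pi_N$, your rescaled matrix has $\norm{Q_N}\sim N^6$. The semigroup version gives the same loss: $\int_0^\infty|e^{t(A_N^0)^{\top}}v|^2\dd t\lesssim (c_2/c_1)N^3|v|^2\sim N^6|v|^2$. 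A decay inequality measured relative to a matrix with condition number $N^3$ cannot by itself give better than $N^6$, so your main route does not reach the lemma.

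What you need from the construction is a dissipation bound against the identity, not against $\Pi_N$. That is, a symmetric $\Pi_N\ge0$ with $(A_N^0)^{\top}\Pi_N+\Pi_NA_N^0\le-cI_{2N}$, where $c$ is independent of $N$, and $\norm{\Pi_N}\le CN^3$. By your comparison principle and the momentum flip, the existence of such a $\Pi_N$ is equivalent to the lemma, so the cited Lyapunov construction has to enter in this form. Given it, $Q_N=c^{-1}\norm{J}^2J^{-1}\mathfrak R\Pi_N\mathfrak RJ^{-1}$ satisfies $A_N^0Q_N+Q_N(A_N^0)^{\top}\le-I_{2N}$ and $\norm{Q_N}\le CN^3$ directly, with no multiplication by $N^3$. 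A matrix with your two properties does exist, for instance $\diag(K_N^0,I_N)+\epsilon N^{-3}\Pi_N$, but building it already requires this identity-type bound.

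Your fallback does not fill the gap either. By the adjoint energy identity, a uniform observability inequality $\int_0^{\tau N^3}(w_{p,1}^2+w_{p,N}^2)\dd t\ge c\norm{w(0)}_{E,*}^2$ is equivalent to the lemma itself. It is the dual of the control estimate in Proposition~\ref{prop:harmonic-control}, which the paper derives from the lemma. You give no argument for it.
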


\begin{theorem}[Relaxation of the homogeneous harmonic chain]\label{thm:harmonic}
For the homogeneous pinned reference chain, there are constants
$M_{\mathrm{h}},\alpha_0>0$, independent of $N$ and depending only on
$\nu,\kappa,\gamma$, such that every solution of the reference equation satisfies
\begin{equation}\label{eq:harmonic-decay}
  \norm{z_t^0}_E\le M_{\mathrm{h}}e^{-\alpha_0t/N^3}\norm{z_0}_E,
  \qquad t\ge0.
\end{equation}
\end{theorem}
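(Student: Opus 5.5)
The plan is to build a Lyapunov functional for the reference equation $\dot z^0=A_N^0z^0$ that is uniformly equivalent to the energy norm and decays at rate of order $N^{-3}$. The plain energy $\norm{z}_E^2$ only satisfies
\[
   \frac{\dd}{\dd t}\norm{z_t^0}_E^2=-2\gamma\bigl(p_1^2+p_N^2\bigr)\le0,
\]
which is degenerate. I will therefore add a small twisting correction. The corrected functional is $\Phi(z)=\norm{z}_E^2+\epsilon\, z^\top S_N z$, with $S_N$ symmetric, $\norm{S_N}\lesssim1$, and $\epsilon$ small but independent of $N$; then $\Phi\simeq\norm{\cdot}_E^2$ by \eqref{eq:norm-equivalence}. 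The target estimate is
\[
   \frac{\dd}{\dd t}\Phi(z_t^0)\le-\frac{c}{N^3}\Phi(z_t^0).
\]
Grönwall and the norm equivalence then give \eqref{eq:harmonic-decay} with $M_{\mathrm h}$ equal to the square root of the equivalence ratio.

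Rather than constructing $S_N$ by hand, I will get it from the Gramian. Set $P=G_N^{-1}$, or better the solution $X$ of $(A_N^0)^\top X+XA_N^0=-I_{2N}$, that is,
\[
   X=\int_0^\infty e^{t(A_N^0)^\top}e^{tA_N^0}\dd t .
\]
Since the reference chain is symmetric under the energy duality, $X$ is related to $G_N$ by conjugation with $\diag(K_N^0,I)$, up to $O(1)$ factors. Lemma~\ref{lem:gramian-bound} then gives $\norm{X}\le CN^3$. Along solutions, $\frac{\dd}{\dd t}z^\top Xz=-|z|^2$.

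The decay follows if $X$ also has a lower bound $X\ge c\,I$ with $c$ independent of $N$. This holds because $A_N^0$ is bounded, $\norm{A_N^0}\le C(M_0,\gamma)$, so $|e^{tA_N^0}z|\ge e^{-Ct}|z|$ and hence
\[
   z^\top Xz\ge\int_0^\infty e^{-2Ct}\dd t\,|z|^2 .
\]
Combining the two bounds,
\[
   \frac{\dd}{\dd t}z^\top Xz=-|z|^2\le-\frac{1}{CN^3}\,z^\top Xz,
\]
which yields $z_t^\top Xz_t\le e^{-t/(CN^3)}z_0^\top Xz_0$.

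Converting back to the energy norm loses the factor $\norm{X}/c\sim N^3$ in the prefactor. That would give only $M_{\mathrm h}\sim N^{3/2}$, not a constant. This is the main obstacle: the statement requires an $N$-independent prefactor.

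To get an $O(1)$ prefactor, I will proceed in two regimes.
\begin{enumerate}[label=(\alph*)]
\item For $t\le N^3$, I use that the energy is nonincreasing, so $\norm{z_t}_E\le\norm{z_0}_E$.
\item For long times, I need the decay over one period of length $\tau N^3$ to be uniform: there is $\tau$ independent of $N$ with $\norm{z_{\tau N^3}}_E\le\frac12\norm{z_0}_E$. Iterating this contraction then gives \eqref{eq:harmonic-decay} with $M_{\mathrm h}=2$.
\end{enumerate}

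Proving the uniform contraction in (b) is exactly the sharp content of the Becker--Menegaki / Menegaki analysis. I would import it through the Lyapunov construction cited for Lemma~\ref{lem:gramian-bound}. That reference provides a matrix $P_N$ with $c\,\diag(K_N^0,I)\le P_N\le C\,\diag(K_N^0,I)$ and $(A_N^0)^\top P_N+P_NA_N^0\le -cN^{-3}P_N$, with both $c$ and $C$ independent of $N$. Since $P_N$ is equivalent to the energy quadratic form with constants not depending on $N$, Grönwall applied to $z^\top P_Nz$ gives \eqref{eq:harmonic-decay} directly.

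If I had to build $P_N$ myself, I would diagonalize $K_N^0$ in its sine eigenbasis. The damping $\Gamma_\partial$ couples mode $k$ to mode $j$ through the coefficients $\sin(\pi k/(N+1))\sin(\pi j/(N+1))\cdot 2/(N+1)$. The slowest modes, with $k\sim N$, are damped at rate of order $k^2/N^3$. The twisting correction would be a mode-wise $q\cdot p$ term, weighted so that each mode decays at its own rate; a Schur-type estimate would control the off-diagonal couplings uniformly in $N$.
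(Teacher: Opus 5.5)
There is a genuine gap: the uniform one-window contraction in (b) is never proved. You outsource it to a matrix $P_N$ with $c\,\diag(K_N^0,I)\le P_N\le C\,\diag(K_N^0,I)$ and decay rate $cN^{-3}$, which you attribute to the Lyapunov construction behind Lemma~\ref{lem:gramian-bound}. That construction does not supply such a $P_N$. According to the paper's discussion of related work, the twisting matrix there has norm $O(N^3)$ and an inverse of norm $O(1)$. This is the same condition-number loss you diagnosed for $X$, and it is why the Wasserstein estimate there carries an $O(N^{3/2})$ prefactor. A spectral-gap result does not close the gap either: $A_N^0$ is non-normal, so the gap fixes only the asymptotic rate, not an $N$-uniform prefactor. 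Your fallback (sine modes, mode-wise twisting, Schur bounds) is an unverified sketch of exactly the hard step.

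The missing idea is elementary and uses only what you already have. Use the upper bound $X\le CN^3 I$ in integrated form, and never the lower bound $X\ge cI$. Integrate $\frac{\dd}{\dd t}z_t^\top Xz_t=-|z_t|^2$ over $[0,T]$ and drop the nonnegative terminal term. Together with the norm equivalence \eqref{eq:norm-equivalence}, this gives
\[
  \int_0^T\norm{z_s}_E^2\dd s\le C\int_0^T|z_s|^2\dd s\le C\,z_0^\top Xz_0\le CN^3\norm{z_0}_E^2 .
\]
Since $s\mapsto\norm{z_s}_E$ is nonincreasing, $T\norm{z_T}_E^2\le\int_0^T\norm{z_s}_E^2\dd s$. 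Taking $T=\tau N^3$ with $\tau$ large gives $\norm{z_T}_E\le\frac12\norm{z_0}_E$, and your iteration in (b) then finishes with $M_{\mathrm h}=2$. This is exactly the paper's proof, which uses the transported Gramian $(S\mathfrak RS)G_N(S\mathfrak RS)$ where you use $X$; the two play the same role.

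Equivalently, your opening ansatz already works with $S_N=N^{-3}X$. The functional $\Phi(z)=\norm{z}_E^2+N^{-3}z^\top Xz$ satisfies $\norm{z}_E^2\le\Phi(z)\le C\norm{z}_E^2$. It also satisfies $\frac{\dd}{\dd t}\Phi(z_t)\le-N^{-3}|z_t|^2\le-cN^{-3}\Phi(z_t)$. So it is precisely the $N$-uniformly equivalent quadratic Lyapunov function you were trying to import.
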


\begin{proof}
\emph{Energy balance and contraction.}  Differentiating the energy along a
solution gives
\begin{equation}\label{eq:harmonic-energy-balance}
 \begin{aligned}
   \frac{\dd}{\dd t}\norm{z_t^0}_E^2
      &=2\langle K_N^0q_t^0,p_t^0\rangle
        +2\langle p_t^0,-K_N^0q_t^0-\Gamma_\partial p_t^0\rangle \\
      &=-2p_t^{0,\top}\Gamma_\partial p_t^0
       =-2\gamma\bigl(|p_1^0(t)|^2+|p_N^0(t)|^2\bigr)\le0 .
\end{aligned}
\end{equation}
Writing $U_N^0(t)=e^{tA_N^0}$, the trajectory with initial state $z_0$ is
$z_t^0=U_N^0(t)z_0$.  Since \eqref{eq:harmonic-energy-balance} shows that
$t\mapsto\norm{z_t^0}_E$ is nonincreasing,
\[
   \norm{U_N^0(t)z_0}_E\le\norm{z_0}_E
   \qquad\text{for every }z_0\in\R^{2N}.
\]
Taking the supremum over nonzero $z_0$ gives
\begin{equation}\label{eq:harmonic-contraction}
   \norm{U_N^0(t)}_{E\to E}\le1,\qquad t\ge0 .
\end{equation}

\emph{Relaxation over one window.}  To obtain strict decay, we first seek a
uniform bound on the time integral of the forward trajectory
$e^{tA_N^0}z_0$.  Lemma~\ref{lem:gramian-bound}, however, directly controls
the adjoint trajectory: by \eqref{eq:GN},
\[
   \int_0^\infty\bigl|e^{t(A_N^0)^\top}v\bigr|^2\dd t
      =v^\top G_Nv\le CN^3|v|^2.
\]
We therefore need to relate solutions of the forward and adjoint equations.
For the harmonic chain, time reversal provides this relation.

To make this relation explicit, let
$S=\diag\bigl((K_N^0)^{1/2},I_N\bigr)$, which converts the energy norm into the
Euclidean norm, and let $\mathfrak R=\diag(I_N,-I_N)$ be the momentum flip.
A direct computation gives
\[
   SA_N^0S^{-1}
      =\begin{pmatrix}0&(K_N^0)^{1/2}\\-(K_N^0)^{1/2}&-\Gamma_\partial\end{pmatrix}
\]
and therefore
\[
   (SA_N^0S^{-1})^{\top}=\mathfrak R\,(SA_N^0S^{-1})\,\mathfrak R .
\]
Thus, after passing to energy coordinates, the adjoint dynamics is the forward
dynamics with the momenta reversed.  Transporting the adjoint Gramian $G_N$
through these changes of coordinates produces a Lyapunov matrix for the
forward equation.  Namely, define
\[
   \widetilde G_N=(S\mathfrak R S)G_N(S\mathfrak R S).
\]
For the homogeneous chain,
\[
   \nu I_N\le K_N^0\le(\nu+4\kappa)I_N,
\]
so $S$ and $S^{-1}$ are bounded uniformly in $N$.  The block-diagonal matrices $S$ and
$\mathfrak R$ commute, so a direct substitution using the transpose relation
and \eqref{eq:GN-Lyapunov} gives
\[
   (A_N^0)^{\top}\widetilde G_N+\widetilde G_NA_N^0
      =-S^4\le-cI_{2N}.
\]
Moreover, Lemma~\ref{lem:gramian-bound} and the uniform bound on $S$ yield
\[
   0\le \widetilde G_N
      \le\norm{S}^4\norm{G_N}I_{2N}\le CN^3I_{2N}.
\]
Thus,
\[
   \frac{\dd}{\dd t}(z_t^0)^{\top}\widetilde G_Nz_t^0
      =(z_t^0)^{\top}
        \bigl((A_N^0)^{\top}\widetilde G_N+\widetilde G_NA_N^0\bigr)z_t^0
      \le-c|z_t^0|^2 .
\]
Integrating on $[0,T]$ and using $\widetilde G_N\ge0$ gives
\[
   c\int_0^T |z_s^0|^2\dd s
      \le z_0^{\top}\widetilde G_Nz_0
         -(z_T^0)^{\top}\widetilde G_Nz_T^0
      \le C N^3|z_0|^2 .
\]
By \eqref{eq:K-elliptic}, the Euclidean norm and the energy norm are uniformly
equivalent, so
\[
   \int_0^T\norm{z_s^0}_E^2\dd s\le CN^3\norm{z_0}_E^2,\qquad T\ge0 .
\]
For $0\le s\le T$, the semigroup property and
\eqref{eq:harmonic-contraction} give
$\norm{z_T^0}_E\le\norm{z_s^0}_E$.  Therefore
\[
   T\norm{z_T^0}_E^2\le\int_0^T\norm{z_s^0}_E^2\dd s
      \le CN^3\norm{z_0}_E^2 .
\]
Choose $\tau_0>0$ so large that $\theta:=C/\tau_0\le\tfrac12$.  With
$T=\tau_0N^3$,
\begin{equation}\label{eq:one-window}
   \norm{U_N^0(\tau_0N^3)}_{E\to E}^2\le\theta\le\tfrac12 .
\end{equation}
For general $t\ge0$, write $t=k\tau_0N^3+r$ with $k\ge0$ an integer and
$0\le r<\tau_0N^3$.  Combining \eqref{eq:one-window} with the contraction
\eqref{eq:harmonic-contraction} gives
\[
   \norm{U_N^0(t)}_{E\to E}
      \le \theta^{k/2}
      \le \theta^{-1/2}\exp\!\left(-\frac{\log(1/\theta)}{2\tau_0}\frac{t}{N^3}\right).
\]
This proves \eqref{eq:harmonic-decay}, with
$M_{\mathrm{h}}=\theta^{-1/2}$ and
$\alpha_0=(2\tau_0)^{-1}\log(1/\theta)$.
\end{proof}

We next complement the relaxation estimate with a controllability estimate on
the same $N^3$ time scale.  To this end, we allow an external control to act
through the two boundary momenta.
For $T>0$, an initial state $z_0\in\R^{2N}$, and a control
$u\in L^2([0,T];\R^2)$, the boundary-controlled harmonic reference is
\begin{equation}\label{eq:harmonic-controlled-system}
   \dot z(t)=A_N^0z(t)+B_{\partial,N}u(t),\qquad z(0)=z_0.
\end{equation}
Here $B_{\partial,N}$ is the boundary input matrix from \eqref{eq:sigma}; for
$u=(u_{\mathrm{L}},u_{\mathrm{R}})$,
\[
   B_{\partial,N}u
      =\binom{0}{\sqrt{2\gamma T_1}\,u_{\mathrm{L}}e_1
          +\sqrt{2\gamma T_N}\,u_{\mathrm{R}}e_N}.
\]
We prove the following estimate for this boundary-controlled system.

\begin{prop}[Boundary-controlled harmonic reference chain]\label{prop:harmonic-control}
Under the hypotheses of Theorem~\ref{thm:harmonic}, there are
constants $\tau_c,C_0>0$, depending only on
$\nu,\kappa,\gamma,\underline T,\overline T$, such that the following holds.  Let
$T_c=\tau_cN^3$.  There is a bounded linear operator
\[
   \cR_N^0:\R^{2N}\to L^2([0,T_c];\R^2)
\]
such that, for every target $y\in\R^{2N}$, the control $u=\cR_N^0y$ drives the
solution of \eqref{eq:harmonic-controlled-system} with $z_0=0$ to $y$ at time
$T_c$ and satisfies
\begin{equation}\label{eq:harmonic-control}
  z(T_c)=y,\qquad
  \int_0^{T_c}|u|^2\dd t\le C_0\norm{y}_E^2,\qquad
  \int_0^{T_c}\norm{z(t)}_E^2\dd t\le C_0N^3\norm{y}_E^2 .
\end{equation}
\end{prop}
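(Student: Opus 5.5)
We will build the control from the controllability Gramian of the harmonic reference over the window $[0,T_c]$, and take $\cR_N^0$ to be the minimum-energy control. For $T>0$ set
\[
   Q_T=\int_0^{T}e^{sA_N^0}B_{\partial,N}B_{\partial,N}^{\top}e^{s(A_N^0)^{\top}}\dd s .
\]
If $Q_{T_c}$ is invertible, the control $u(t)=B_{\partial,N}^{\top}e^{(T_c-t)(A_N^0)^{\top}}Q_{T_c}^{-1}y$ steers $0$ to $y$. Its cost is $\int_0^{T_c}|u|^2=y^{\top}Q_{T_c}^{-1}y$. The cost bound in \eqref{eq:harmonic-control} is therefore equivalent to the lower bound $Q_{T_c}\ge C_0^{-1}S^2$ in the quadratic-form sense, where $S$ is the energy-to-Euclidean map from the proof of Theorem~\ref{thm:harmonic}. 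Equivalently, we need an observability inequality in the dual norm:
\[
   \int_0^{T_c}\bigl|B_{\partial,N}^{\top}e^{s(A_N^0)^{\top}}v\bigr|^2\dd s\ \ge\ C_0^{-1}\norm{v}_{E,*}^2 .
\]

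\textbf{Observability (main step).} We prove the observability inequality by the same time-reversal device used for Theorem~\ref{thm:harmonic}. Conjugating by $S$ and $\mathfrak R$ turns the adjoint flow into the forward flow, so it suffices to show the following for the forward harmonic chain: the boundary momentum dissipation over a window of length $\tau_cN^3$ captures a fixed fraction of the initial energy. By the energy balance \eqref{eq:harmonic-energy-balance},
\[
   \norm{z_0^0}_E^2-\norm{z_T^0}_E^2=2\gamma\int_0^T\bigl(|p_1^0|^2+|p_N^0|^2\bigr)\dd t .
\]
Applying \eqref{eq:one-window} with $T_c=\tau_cN^3$ and $\tau_c\ge\tau_0$ gives $\norm{z_{T_c}^0}_E^2\le\frac12\norm{z_0^0}_E^2$. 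Hence
\[
   2\gamma\int_0^{T_c}\bigl(|p_1^0|^2+|p_N^0|^2\bigr)\ge\tfrac12\norm{z_0^0}_E^2 .
\]
Since $B_{\partial,N}^{\top}$ reads off exactly $\sqrt{2\gamma T_i}\,p_i$, and $T_i\ge\underline T$, this is the observability inequality. I expect the main obstacle to be the bookkeeping here. We must check that the conjugation carries $B_{\partial,N}^{\top}$ to the boundary-momentum observation: $\mathfrak R$ only flips signs and $S$ is the identity on momenta, so this should work. We must also check that the dual norm $\norm{\cdot}_{E,*}$ on $v$ matches the energy norm of the conjugated initial datum $S^{-1}\mathfrak R v$ up to uniform constants; this follows from \eqref{eq:norm-equivalence}.

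\textbf{Trajectory bound.} Write the controlled trajectory by Duhamel. A direct route uses the observability result on subwindows, but the cleaner argument runs the energy identity for the controlled system:
\[
   \frac{\dd}{\dd t}\norm{z}_E^2=-2\gamma\bigl(|p_1|^2+|p_N|^2\bigr)+2\,p\cdot\bigl(\sqrt{2\gamma T_1}u_{\mathrm L}e_1+\sqrt{2\gamma T_N}u_{\mathrm R}e_N\bigr).
\]
Absorbing the forcing term into the damping by Young's inequality gives $\frac{\dd}{\dd t}\norm{z}_E^2\le C|u|^2$. Thus $\sup_{t\le T_c}\norm{z(t)}_E^2\le C\int_0^{T_c}|u|^2\le CC_0\norm{y}_E^2$. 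Integrating over $[0,T_c]$ with $T_c=\tau_cN^3$ yields $\int_0^{T_c}\norm{z}_E^2\le C_0'N^3\norm{y}_E^2$.

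Linearity and boundedness of $\cR_N^0$ are immediate from the explicit formula, and all constants depend only on $\nu,\kappa,\gamma,\underline T,\overline T$.
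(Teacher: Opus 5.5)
Your proof is correct, but it takes a different route from the paper.

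\textbf{The paper's route.} The paper never forms the controllability Gramian. It writes down an explicit path to the target by running a free relaxation backwards, $\tilde z_y(t)=\mathfrak R U_N^0(T_c-t)\mathfrak R y$. The resulting anti-damping $+\Gamma_\partial\tilde p_y$ is supplied by a boundary control proportional to $(\tilde p_{y,1},\tilde p_{y,N})$. The unwanted starting point $\widehat U_N^0(T_c)y$ is then removed by inverting $I_{2N}-U_N^0(T_c)\widehat U_N^0(T_c)$, which lies within $1/2$ of the identity once $\tau_c$ is large. The cost bound is the energy balance along the reversed free path, and the path bound is the integrated decay estimate.

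\textbf{Your route.} You use the minimum-energy control and reduce the cost bound to a lower observability inequality for the adjoint flow. You derive it from the same two ingredients, boundary dissipation and the one-window decay \eqref{eq:one-window}. The bookkeeping you were worried about is exact, not just up to constants. With $z_0=S^{-1}\mathfrak R S^{-1}v$ one has $\norm{z_0}_E=\norm{v}_{E,*}$ and $\norm{e^{s(A_N^0)^{\top}}v}_{E,*}=\norm{e^{sA_N^0}z_0}_E$. The two boundary momenta agree up to sign. Hence $\int_0^{T_c}|B_{\partial,N}^{\top}e^{s(A_N^0)^{\top}}v|^2\dd s\ge\frac{\underline T}{2}\norm{v}_{E,*}^2$ with no loss. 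Equivalently, you can integrate the dual energy identity of Proposition~\ref{prop:dual-boundary-energy} and apply the decay to the adjoint flow. Your path bound (a uniform energy bound from Young's inequality, multiplied by the window length $\tau_cN^3$) is simpler than the paper's.

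\textbf{What each buys.} Your route gives the optimal control, with cost exactly $y^{\top}Q_{T_c}^{-1}y$, and a clean duality picture. Your lower bound, combined with the upper bound of Lemma~\ref{lem:admissible-dual-boundary} at $K=0$, gives $\frac{\underline T}{2}S^{-2}\le Q_{T_c}\le C_{\mathrm B}S^{-2}$. This is the Gramian form of the bounds on $\mathcal M_N$ and $\mathcal M_N^{-1}$ in Lemma~\ref{lem:normalized-controls}. The paper's route gives an explicit control built from a single free trajectory, without inverting a Gramian.

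\textbf{A small slip.} The quadratic-form condition equivalent to your cost bound is $Q_{T_c}\ge C_0^{-1}S^{-2}$, not $C_0^{-1}S^{2}$. Your dual-norm observability inequality is the correct statement. Since $S^{\pm2}$ are uniformly comparable to $I_{2N}$, the slip only affects constants.
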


\begin{proof}
Fix $\tau_c$ large enough that the decay estimate for the harmonic chain from
\eqref{eq:harmonic-decay} satisfies
$M_{\mathrm{h}}^2e^{-2\alpha_0\tau_c}<1/2$.

\emph{A preliminary path to the target.}  We first reverse a freely relaxing
solution of the reference equation.  The reversed damping can be supplied by a boundary
control.  Let $\mathfrak R$ be the momentum flip from the proof of
Theorem~\ref{thm:harmonic}.  For a target $y$, define
\[
   \tilde z_y(t)=\mathfrak R U_N^0(T_c-t)\mathfrak R y,
   \qquad 0\le t\le T_c .
\]
Writing $\tilde z_y=(\tilde q_y,\tilde p_y)$, this path satisfies
$\tilde z_y(T_c)=y$ and
\[
   \dot{\tilde q}_y=\tilde p_y,\qquad
   \dot{\tilde p}_y=-K_N^0\tilde q_y+\Gamma_\partial\tilde p_y .
\]
Compared with the uncontrolled equation, the damping has changed from
$-\Gamma_\partial\tilde p_y$ to $+\Gamma_\partial\tilde p_y$.  The difference
is produced by the boundary control
\[
   \tilde u_{y,{\mathrm{L}}}(t)=\sqrt{\frac{2\gamma}{T_1}}\,\tilde p_{y,1}(t),
   \qquad
   \tilde u_{y,{\mathrm{R}}}(t)=\sqrt{\frac{2\gamma}{T_N}}\,\tilde p_{y,N}(t),
\]
because $B_{\partial,N}\tilde u_y=(0,2\Gamma_\partial\tilde p_y)$.  Thus
$\tilde z_y$ reaches $y$ at time $T_c$, although it does not generally start
from the origin.

Since the momentum flip preserves both the energy norm and
$p^{\top}\Gamma_\partial p$, the energy balance
\eqref{eq:harmonic-energy-balance}, applied to
$s\mapsto U_N^0(s)\mathfrak R y$, gives
\[
   \int_0^{T_c}|\tilde u_y(t)|^2\dd t
      \le\frac{2}{\underline T}\int_0^{T_c}\tilde p_y^{\top}\Gamma_\partial\tilde p_y\dd t
      \le\frac{1}{\underline T}\norm{y}_E^2 ,
\]
and \eqref{eq:harmonic-decay} gives
\[
   \int_0^{T_c}\norm{\tilde z_y(t)}_E^2\dd t
      =\int_0^{T_c}\norm{U_N^0(s)\mathfrak R y}_E^2\dd s
      \le C N^3\norm{y}_E^2 .
\]

\emph{Correction of the initial point.}  Introduce the momentum-reversed
solution operator
\[
   \widehat U_N^0(T_c):=\mathfrak R U_N^0(T_c)\mathfrak R ,
\]
so that $\tilde z_y(0)=\widehat U_N^0(T_c)y$.  Moreover,
$\norm{\widehat U_N^0(T_c)}_{E\to E}\le M_{\mathrm{h}}e^{-\alpha_0\tau_c}$.  If the
same control $\tilde u_y$ is instead applied from the origin, the terminal
point becomes
\[
   \bigl[I_{2N}-U_N^0(T_c)\widehat U_N^0(T_c)\bigr]y.
\]
By \eqref{eq:harmonic-decay} and our choice of $\tau_c$,
\[
   \norm{U_N^0(T_c)\widehat U_N^0(T_c)}_{E\to E}
      \le M_{\mathrm{h}}^2e^{-2\alpha_0\tau_c}
   <\frac12 .
\]
Hence $I_{2N}-U_N^0(T_c)\widehat U_N^0(T_c)$ is invertible with uniformly bounded
inverse.  To reach a prescribed target $y$, set
\[
   x=\bigl[I_{2N}-U_N^0(T_c)\widehat U_N^0(T_c)\bigr]^{-1}y
\]
and use the preliminary control associated with $x$:
\[
   \cR_N^0y:=\tilde u_x.
\]
The corresponding trajectory from the origin is
\begin{equation}\label{eq:corrected-control-path}
   z(t)=\tilde z_x(t)-U_N^0(t)\widehat U_N^0(T_c)x.
\end{equation}
At $t=0$ the two terms cancel, while at $t=T_c$ their difference is
$\bigl[I_{2N}-U_N^0(T_c)\widehat U_N^0(T_c)\bigr]x=y$.  Hence the control steers the
system from the origin to the prescribed target.

\emph{Control and path bounds.}  Since
$\norm{x}_E\le C\norm{y}_E$, the preliminary control estimate gives
\[
   \int_0^{T_c}|\cR_N^0y(t)|^2\dd t
      \le C\norm{y}_E^2 .
\]
For the second term in \eqref{eq:corrected-control-path},
\eqref{eq:harmonic-decay} gives
\[
   \int_0^{T_c}
      \norm{U_N^0(t)\widehat U_N^0(T_c)x}_E^2\dd t
      \le CN^3\norm{x}_E^2.
\]
Combining this with the path estimate for $\tilde z_x$ yields
\[
   \int_0^{T_c}\norm{z(t)}_E^2\dd t\le C N^3\norm{y}_E^2 .
\]
Increasing the constant if necessary gives \eqref{eq:harmonic-control}.
\end{proof}

\section{Proof of the dimension-free full-gradient LSI}\label{sec:lsi}

The proof transfers the Gaussian LSI from finite-dimensional
approximations of the boundary noise to every finite-time transition law and
then to the NESS\@.  The key input is a horizon- and dimension-uniform boundary
estimate for the adjoint propagator, based only on the first
variational equation and the bounded Hessian of the perturbation.

We first record the energy identity for the adjoint reference equation.  Consider the
dual energy $\frac12\norm{w}_{E,*}^2$ from
\eqref{eq:energy-dual-norms} along this equation
\begin{equation}\label{eq:dual-harmonic-flow}
   \dot w=(A_N^0)^{\top}w,\qquad
   \dot w_q=-K_N^0w_p,\qquad
   \dot w_p=w_q-\Gamma_\partial w_p .
\end{equation}
The bulk terms cancel in the energy balance, leaving only the boundary
dissipation.

\begin{prop}[Dual boundary-energy identity]\label{prop:dual-boundary-energy}
For every solution of \eqref{eq:dual-harmonic-flow}, one has
\begin{equation}\label{eq:Vdot-harm}
   \frac12\frac{\dd}{\dd s}\norm{w(s)}_{E,*}^2
      =-\ip{w_p}{\Gamma_\partial w_p}.
\end{equation}
\end{prop}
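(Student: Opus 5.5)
The identity is a direct computation: differentiate the dual energy along \eqref{eq:dual-harmonic-flow}, use the symmetry of $K_N^0$ to cancel the two bulk terms, and keep only the damping. No estimates are involved.

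First expand the dual energy. Since $K_N^0$ is symmetric and positive definite by \eqref{eq:K-elliptic}, $(K_N^0)^{-1}$ is symmetric, and
\[
   \frac12\norm{w}_{E,*}^2=\frac12 w_q^{\top}(K_N^0)^{-1}w_q+\frac12|w_p|^2 .
\]
Differentiating in $s$ along a solution of \eqref{eq:dual-harmonic-flow}, the symmetry of $(K_N^0)^{-1}$ gives
\[
   \frac12\frac{\dd}{\dd s}\norm{w}_{E,*}^2
   =w_q^{\top}(K_N^0)^{-1}\dot w_q+w_p^{\top}\dot w_p .
\]

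Next substitute the equations $\dot w_q=-K_N^0w_p$ and $\dot w_p=w_q-\Gamma_\partial w_p$. The first term becomes $-w_q^{\top}(K_N^0)^{-1}K_N^0w_p=-w_q^{\top}w_p$. The second term becomes $w_p^{\top}w_q-w_p^{\top}\Gamma_\partial w_p$. The cross terms $\mp w_q^{\top}w_p$ cancel, and what remains is $-\ip{w_p}{\Gamma_\partial w_p}=-\gamma(|w_{p,1}|^2+|w_{p,N}|^2)$, which is \eqref{eq:Vdot-harm}.

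The only step needing care is confirming that the componentwise form in \eqref{eq:dual-harmonic-flow} agrees with $(A_N^0)^{\top}$. From \eqref{eq:A0},
\[
   (A_N^0)^{\top}=\begin{pmatrix}0&-K_N^0\\ I_N&-\Gamma_\partial\end{pmatrix},
\]
using $(K_N^0)^{\top}=K_N^0$ and $\Gamma_\partial^{\top}=\Gamma_\partial$. This matrix gives exactly $\dot w_q=-K_N^0w_p$ and $\dot w_p=w_q-\Gamma_\partial w_p$. The cancellation is the dual counterpart of the forward energy balance \eqref{eq:harmonic-energy-balance}: the dual norm uses $(K_N^0)^{-1}$ precisely so that it absorbs the $K_N^0$ in the $w_q$ equation.
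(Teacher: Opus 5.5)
Your proposal is correct and follows the paper's own proof: differentiate the dual energy, substitute $\dot w_q=-K_N^0w_p$ and $\dot w_p=w_q-\Gamma_\partial w_p$, and observe that the cross terms cancel, leaving only $-\ip{w_p}{\Gamma_\partial w_p}$. Your extra check that the componentwise system agrees with $(A_N^0)^{\top}$ is a harmless addition; it is correct since $K_N^0$ and $\Gamma_\partial$ are symmetric.
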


\begin{proof}
Differentiating the dual energy along \eqref{eq:dual-harmonic-flow} gives
\[
   \frac12\frac{\dd}{\dd s}\norm{w}_{E,*}^2
   =\ip{w_p}{\dot w_p}+\ip{(K_N^0)^{-1}w_q}{\dot w_q}.
\]
Substituting $\dot w_q=-K_N^0w_p$ and
$\dot w_p=w_q-\Gamma_\partial w_p$, the Hamiltonian cross terms cancel:
\[
   \ip{w_p}{w_q-\Gamma_\partial w_p}
      -\ip{(K_N^0)^{-1}w_q}{K_N^0w_p}
   =-\ip{w_p}{\Gamma_\partial w_p},
\]
which proves \eqref{eq:Vdot-harm}.
\end{proof}

We now extend this identity to the variational equations generated by the
anharmonic chain.  The first step is a boundary estimate for the corresponding
adjoint propagator, uniform in both the time horizon and the chain length.

\begin{lemma}[Boundary estimate for the adjoint propagator]\label{lem:admissible-dual-boundary}
Assume Assumption~\ref{ass:admissible-potential}, with $\theta_0$ chosen
sufficiently small.  Let $K_t=K_t^{\top}$ be measurable with
$\norm{K_t}_{\mathrm{op}}\le\delta_N$, and let $U_K(t,s)$ be the evolution operator of
\[
   \dot z_t=\left(A_N^0+
      \begin{pmatrix}0&0\\-K_t&0\end{pmatrix}\right)z_t,
   \qquad t\ge s.
\]
Then, for every $0\le r<t$ and $v\in\R^{2N}$,
\begin{equation}\label{eq:admissible-dual-boundary}
   \int_r^t
      \left|B_{\partial,N}^{\top}U_K(t,s)^{\top}v\right|^2\dd s
      \le C_{\mathrm{B}}\norm{v}_{E,*}^2,
\end{equation}
where $C_{\mathrm{B}}$ depends only on $m_0,M_0,\gamma,\underline T,\overline T$ and is
independent of $N,r,t$, and the coefficient path $K$.
\end{lemma}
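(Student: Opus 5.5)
The plan is to work directly with the adjoint trajectory and combine two Lyapunov functionals: the dual energy of Proposition~\ref{prop:dual-boundary-energy}, which produces the boundary term, and the Gramian quadratic form $w^{\top}G_Nw$, which controls the time integral of $|w|^2$. Fix $t$ and $v$, and set $w(\sigma):=U_K(t,t-\sigma)^{\top}v$ for $0\le\sigma\le t-r$. Differentiating $U_K(t,s)$ in $s$ shows that $w$ solves the forward-in-$\sigma$ equation
\[
   \dot w=(A_N^0)^{\top}w+E_\sigma w,\qquad
   E_\sigma:=\begin{pmatrix}0&-K_{t-\sigma}\\0&0\end{pmatrix},\qquad w(0)=v,
\]
understood in the Carath\'eodory sense, since $K$ is only measurable. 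In components, $\dot w_q=-K_N^0w_p-K_{t-\sigma}w_p$ and $\dot w_p=w_q-\Gamma_\partial w_p$. The left side of \eqref{eq:admissible-dual-boundary} equals $\int_0^{t-r}|B_{\partial,N}^{\top}w(\sigma)|^2\dd\sigma=2\gamma\int_0^{t-r}(T_1w_{p_1}^2+T_Nw_{p_N}^2)\dd\sigma$, which is at most $2\overline T\int_0^{t-r}\ip{w_p}{\Gamma_\partial w_p}\dd\sigma$.

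\emph{Step 1 (perturbed dual energy).} Repeating the computation in Proposition~\ref{prop:dual-boundary-energy}, the Hamiltonian terms still cancel, and only one extra term appears:
\[
   \tfrac12\tfrac{\dd}{\dd\sigma}\norm{w}_{E,*}^2
   =-\ip{w_p}{\Gamma_\partial w_p}-\ip{(K_N^0)^{-1}w_q}{K_{t-\sigma}w_p}.
\]
The error term is bounded by $m_0^{-1}\delta_N|w|^2$. Integrating over $[0,t-r]$ and discarding the nonnegative terminal energy gives
\[
   \int_0^{t-r}\ip{w_p}{\Gamma_\partial w_p}\dd\sigma
   \le\tfrac12\norm{v}_{E,*}^2+m_0^{-1}\delta_N\int_0^{t-r}|w|^2\dd\sigma .
\]

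\emph{Step 2 (uniform $L^2$-in-time bound via the Gramian).} This is the crucial step, because it must hold uniformly in the horizon and in $N$. Using \eqref{eq:GN-Lyapunov},
\[
   \tfrac{\dd}{\dd\sigma}\,w^{\top}G_Nw=-|w|^2+2w^{\top}G_NE_\sigma w
   \le-(1-2\norm{G_N}\delta_N)|w|^2 .
\]
Assumption~\ref{ass:admissible-potential}(ii) gives $\norm{G_N}\delta_N\le\Theta_N\le\theta_0\le\frac14$. Integrating and using $G_N\ge0$ then yields
\[
   \int_0^{t-r}|w|^2\dd\sigma\le2v^{\top}G_Nv\le2\norm{G_N}|v|^2 .
\]

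\emph{Step 3 (combine).} Inserting the bound of Step 2 into Step 1, the error becomes $2m_0^{-1}\delta_N\norm{G_N}|v|^2$. By the definition of $\Theta_N$ this is at most $2m_0^{-1}\theta_0|v|^2$, and by \eqref{eq:norm-equivalence} that is at most $2m_0^{-1}\theta_0\max\{M_0,1\}\norm{v}_{E,*}^2$. Hence \eqref{eq:admissible-dual-boundary} holds with $C_{\mathrm B}=\overline T\bigl(1+4m_0^{-1}\max\{M_0,1\}\theta_0\bigr)$. This constant is independent of $N$, $r$, $t$ and $K$. Only part (ii) of the admissibility assumption, together with the Hurwitz property needed to define $G_N$, is used; the bond condition (i) plays no role here.

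The main obstacle is the horizon-uniform control in Step 2. A direct Gr\"onwall estimate on the dual energy would lose a factor $e^{C\delta_N(t-r)}$. The Gramian Lyapunov functional avoids this, since the smallness of $\norm{G_N}\delta_N$ exactly absorbs the perturbation. The only delicate technical points are justifying the differentiation for measurable $K$ (the functions involved are absolutely continuous) and tracking the $m_0^{-1}$ factor, which is the reason $\Theta_N$ includes $\max\{\norm{G_N},m_0^{-1}\}$.
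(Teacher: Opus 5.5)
Your proof is correct and follows the paper's argument: the Gramian Lyapunov functional $w^{\top}G_Nw$ gives the horizon-uniform bound $\int_0^{t-r}|w|^2\dd\sigma\le 2\norm{G_N}\,|v|^2$, and this bound absorbs the Hessian error term in the perturbed dual-energy identity, as in the paper; you only present the two steps in the opposite order and make $C_{\mathrm B}$ explicit. One small correction to your closing remark: you bound the error $2m_0^{-1}\delta_N\norm{G_N}$ using only $\delta_N\norm{G_N}\le\Theta_N$, and $C_{\mathrm B}$ may depend on $m_0$, so this lemma does not need the $m_0^{-1}$ entry in the maximum defining $\Theta_N$.
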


\begin{proof}
Put $T=t-r$ and
\[
   w(\tau)=U_K(t,t-\tau)^{\top}v,
   \qquad 0\le\tau\le T.
\]
Then $w(0)=v$ and
\[
   \dot w(\tau)=
   \left(A_N^0+
      \begin{pmatrix}0&0\\-K_{t-\tau}&0\end{pmatrix}\right)^{\top}w(\tau).
\]
We first control its time-integrated Euclidean norm.  Since $G_N$ solves
\eqref{eq:GN-Lyapunov},
\begin{align*}
   \frac{\dd}{\dd\tau}\bigl(w^{\top}G_Nw\bigr)
   &=-|w|^2
     +w^{\top}\left[
        \begin{pmatrix}0&0\\-K_{t-\tau}&0\end{pmatrix}G_N
        +G_N
        \begin{pmatrix}0&-K_{t-\tau}\\0&0\end{pmatrix}
       \right]w\\
   &\le -(1-2\delta_N\norm{G_N})|w|^2
    \le-\tfrac12|w|^2,
\end{align*}
where the last inequality follows from
$\delta_N\norm{G_N}\le\Theta_N\le\theta_0$ after choosing $\theta_0\le1/4$.  Integration gives
\begin{equation}\label{eq:adjoint-integrated-euclidean}
   \int_0^T|w(\tau)|^2\dd\tau
      \le2v^{\top}G_Nv
      \le2\norm{G_N}\,|v|^2.
\end{equation}

Write $w=(w_q,w_p)$.  Repeating the dual-energy calculation of
Proposition~\ref{prop:dual-boundary-energy}, now with the Hessian perturbation,
gives
\[
   \frac12\frac{\dd}{\dd\tau}\norm{w(\tau)}_{E,*}^2
   =-\langle w_p,\Gamma_\partial w_p\rangle
     -w_q^{\top}(K_N^0)^{-1}K_{t-\tau}w_p .
\]
Integrating on $[0,T]$, discarding the nonnegative terminal energy, and using
\eqref{eq:adjoint-integrated-euclidean},
\begin{align*}
   \int_0^T\langle w_p,\Gamma_\partial w_p\rangle\dd\tau
   &\le\tfrac12\norm{v}_{E,*}^2
      +\frac{\delta_N}{m_0}\int_0^T|w_q|\,|w_p|\dd\tau\\
   &\le\tfrac12\norm{v}_{E,*}^2
      +C\delta_N\norm{G_N}\,|v|^2\\
   &\le C\norm{v}_{E,*}^2.
\end{align*}
In the last step we used the norm equivalence \eqref{eq:norm-equivalence}, in
the form $|v|^2\le\max\{M_0,1\}\norm{v}_{E,*}^2$, and
$\delta_N\norm{G_N}\le\theta_0$.  Finally,
\[
   |B_{\partial,N}^{\top}w|^2
   =2\gamma\sum_{i\in\partial}T_i|w_{p,i}|^2
   \le2\overline T\langle w_p,\Gamma_\partial w_p\rangle.
\]
Changing variables $s=t-\tau$ proves
\eqref{eq:admissible-dual-boundary}.
\end{proof}

We next transfer the Gaussian logarithmic Sobolev inequality to each
finite-time transition law.  Following the standard finite-dimensional
approximation of Wiener space underlying Gross's inequality
\cite{Gross1975}, we approximate the Brownian paths by polygonal paths
depending on finitely many Gaussian coordinates.  This allows us to
differentiate with respect to the noise in a finite-dimensional setting.

\begin{lemma}[Uniform LSI for finite-time transition laws]\label{lem:finite-horizon-lsi}
Fix $T>0$ and $z\in\R^{2N}$.
Under Assumption~\ref{ass:admissible-potential}, every
$u\in C_c^\infty(\R^{2N})$ satisfies
\begin{equation}\label{eq:finite-horizon-lsi}
   \Ent_{\mc P_T(z,\cdot)}(u^2)
      \le2C_{\mathrm{B}}\int\norm{\nabla u(z')}_{E,*}^2\,\mc P_T(z,\dd z'),
\end{equation}
with the same constant $C_{\mathrm{B}}$ as in
Lemma~\ref{lem:admissible-dual-boundary}.  In particular, the constant is
independent of $T,z$, and $N$.
\end{lemma}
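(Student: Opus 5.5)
The plan is to realize $\mc P_T(z,\cdot)$ as the law of a smooth function of finitely many Gaussian variables, up to an approximation error, and then apply Gross's inequality. Fix $T$, $z$ and $u\in C_c^\infty$. For $n\ge1$, let $W^{(n)}$ be the polygonal interpolation of $W$ on the dyadic grid $t_k=kT/2^n$. It is determined by the Gaussian vector $\xi=(W_{t_k}-W_{t_{k-1}})_k\in\R^{2\cdot2^n}$, and its derivative is piecewise constant: $\dot W^{(n)}=\xi_k/h$ on $[t_{k-1},t_k)$, with $h=T/2^n$. Let $Z^{(n)}$ solve the random ODE $\dot Z=b(Z)+B_{\partial,N}\dot W^{(n)}$ from $z$. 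The drift is globally Lipschitz, since $\nabla^2 V_N$ is bounded, so $Z_T^{(n)}=\Phi_n(\xi)$ is a $C^1$ (indeed smooth) function of $\xi$. The additive noise means no Stratonovich correction appears. By Wong--Zakai, $Z^{(n)}_T\to Z_T$ in probability; for additive noise and a Lipschitz drift this even follows from Gronwall applied to $Z-Z^{(n)}$ minus the noise difference. So $\Ent_{\mathrm{Law}(\Phi_n(\xi))}(u^2)\to\Ent_{\mc P_T(z,\cdot)}(u^2)$, and the right-hand side converges as well, since $u$ and $\nabla u$ are bounded and continuous.

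For fixed $n$, apply Gross's LSI for the standard Gaussian after rescaling. Since $\xi$ has covariance $hI$, we get
\[
\Ent(u^2\circ\Phi_n)\le 2h\,\E|\nabla_\xi(u\circ\Phi_n)|^2 .
\]
By the chain rule, $\partial_{\xi_k}(u\circ\Phi_n)=D\Phi_n^{\top}[\xi_k]\nabla u(Z_T)$. The variation of $Z_T$ with respect to $\xi_k$ is $\int_{t_{k-1}}^{t_k}U_K(T,s)B_{\partial,N}h^{-1}\dd s$, where $U_K$ is the propagator of the linearized equation along the trajectory $Z^{(n)}$, with $K_t=\nabla^2W_N(q^{(n)}_t)$. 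This $K_t$ is symmetric with $\norm{K_t}_{\mathrm{op}}\le\delta_N$, which is exactly the setting of Lemma~\ref{lem:admissible-dual-boundary}. Setting $v=\nabla u(Z_T)$, Cauchy--Schwarz on each cell gives
\[
h\sum_k\Bigl|h^{-1}\int_{t_{k-1}}^{t_k}B_{\partial,N}^{\top}U_K(T,s)^{\top}v\,\dd s\Bigr|^2\le\int_0^T|B_{\partial,N}^{\top}U_K(T,s)^{\top}v|^2\dd s .
\]
Lemma~\ref{lem:admissible-dual-boundary} bounds the right-hand side by $C_{\mathrm B}\norm{v}_{E,*}^2$, pathwise and uniformly in $K$. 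Taking expectations yields the inequality for the law of $\Phi_n(\xi)$:
\[
\Ent(u^2)\le2C_{\mathrm B}\,\E\norm{\nabla u(Z^{(n)}_T)}_{E,*}^2 .
\]
Letting $n\to\infty$ then gives \eqref{eq:finite-horizon-lsi}.

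The main obstacle is the approximation step. We need the convergence $Z^{(n)}_T\to Z_T$, and we need entropy to pass to the limit. The latter is easy here because $u^2$ is bounded and continuous, so $u^2\log u^2$ is bounded too, and convergence in law suffices. For the former, I would write $Y=Z-B_{\partial,N}W$ and $Y^{(n)}=Z^{(n)}-B_{\partial,N}W^{(n)}$. Both solve ODEs with drift $b(Y+B_{\partial,N}\cdot)$, so Gronwall gives $\sup_t|Y-Y^{(n)}|\le Ce^{LT}\sup_t|W-W^{(n)}|\to0$ almost surely. A secondary point is that the lemma is pathwise in $K$. This matters because $K$ depends on the noise through the trajectory; since the bound is uniform over all admissible coefficient paths, no measurability or independence issue arises.
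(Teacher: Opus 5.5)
Your proposal is correct and is essentially the paper's own proof. Your dyadic polygonal interpolation is the L\'evy--Ciesielski truncation the paper uses, since the first $2^n$ Haar functions per noise component span the indicators of the level-$n$ dyadic cells, and your cell-wise Cauchy--Schwarz after rescaling Gross's inequality is exactly the paper's Bessel-inequality step. It is followed, as in the paper, by the pathwise application of Lemma~\ref{lem:admissible-dual-boundary} and the same Gronwall limit.
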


\begin{proof}
Since $\nabla^2V_N=K_N^0+\nabla^2W_N$ is bounded, the drift $b$ in
\eqref{eq:SDE-vector} is globally Lipschitz, and the solution satisfies the
pathwise integral equation
\begin{equation}\label{eq:additive-pathwise-equation}
   Z_t=z+\int_0^t b(Z_s)\dd s+B_{\partial,N}W_t.
\end{equation}

Let $(e_j)_{j\ge1}$ be the dyadic Haar orthonormal basis of
$L^2([0,T];\R^2)$, ordered so that the primitives
\[
   h_j(t)=\int_0^t e_j(s)\dd s
\]
give the L\'evy--Ciesielski expansion.  If $(\xi_j)_{j\ge1}$ are independent
standard Gaussian variables, then the polygonal paths
\[
   W_t^{(m)}=\sum_{j=1}^m\xi_jh_j(t)
\]
converge to $W$ uniformly on $[0,T]$ almost surely.  Let $Z^{(m)}$ solve
\eqref{eq:additive-pathwise-equation} with $W$ replaced by $W^{(m)}$.  Global
Lipschitz continuity of $b$, followed by Gronwall's inequality, gives
\begin{equation}\label{eq:polygonal-solution-convergence}
   \sup_{0\le t\le T}|Z_t^{(m)}-Z_t|
      \le \norm{B_{\partial,N}}e^{T\norm{Db}_\infty}
          \sup_{0\le t\le T}|W_t^{(m)}-W_t|
      \longrightarrow0
\end{equation}
almost surely.

For fixed $m$, the state $Z_T^{(m)}$ at time $T$ is a $C^1$ function of
$\xi=(\xi_1,\ldots,\xi_m)$, whose law is
$\gamma_m:=\mathcal{N}(0,I_m)$.  Set
$K_s^{(m)}=\nabla^2W_N(q_s^{(m)})$, so that the linearized propagator along
$Z^{(m)}$ is $U_{K^{(m)}}(t,s)$ in the notation of
Lemma~\ref{lem:admissible-dual-boundary}.  Differentiating the integral
equation with respect to $\xi_j$ yields
\[
   \partial_{\xi_j}Z_T^{(m)}
      =\int_0^T U_{K^{(m)}}(T,s)B_{\partial,N}e_j(s)\dd s.
\]
Apply Gross's Gaussian logarithmic Sobolev inequality \cite{Gross1975} on
$\R^m$ to the function $\xi\mapsto u(Z_T^{(m)}(\xi))$.  The preceding identity
and Bessel's inequality in $L^2([0,T];\R^2)$ give
\begin{align*}
   \Ent_{\gamma_m}\bigl(u(Z_T^{(m)}(\xi))^2\bigr)
   &\le2\E_{\gamma_m}\sum_{j=1}^m
      \bigl|\partial_{\xi_j}u(Z_T^{(m)})\bigr|^2\\
   &\le2\E_{\gamma_m}\int_0^T
      \left|B_{\partial,N}^{\top}U_{K^{(m)}}(T,s)^{\top}
         \nabla u(Z_T^{(m)})\right|^2\dd s .
\end{align*}
For each realization of the Gaussian coordinates, the transpose
$U_{K^{(m)}}(T,s)^{\top}\nabla u(Z_T^{(m)})$ solves the adjoint equation
appearing in Lemma~\ref{lem:admissible-dual-boundary}.  Since $K_s^{(m)}$ is
symmetric and $\norm{K_s^{(m)}}_{\mathrm{op}}\le\delta_N$, the lemma applies
pathwise and yields
\begin{equation}\label{eq:finite-gaussian-LSI-bound}
   \Ent_{\gamma_m}\bigl(u(Z_T^{(m)}(\xi))^2\bigr)
      \le2C_{\mathrm{B}}\E_{\gamma_m}\norm{\nabla u(Z_T^{(m)}(\xi))}_{E,*}^2.
\end{equation}

It remains to let $m\to\infty$.  Since $u$ and $\nabla u$ are bounded and
continuous, \eqref{eq:polygonal-solution-convergence} and dominated
convergence give
\[
   u(Z_T^{(m)})\longrightarrow u(Z_T)\quad\text{in every }L^p,
   \qquad
   \E\norm{\nabla u(Z_T^{(m)})}_{E,*}^2
      \longrightarrow\E\norm{\nabla u(Z_T)}_{E,*}^2.
\]
The entropies converge as well: this is immediate if $\E u(Z_T)^2=0$; otherwise
$\E u(Z_T^{(m)})^2$ stays bounded away from zero, and dominated convergence
applies to
\[
   u(Z_T^{(m)})^2
   \log\frac{u(Z_T^{(m)})^2}{\E u(Z_T^{(m)})^2}.
\]
Passing to the limit in
\eqref{eq:finite-gaussian-LSI-bound} proves
\eqref{eq:finite-horizon-lsi}.
\end{proof}

\begin{proof}[Proof of Theorem~\ref{thm:main-lsi}]
Fix $z\in\R^{2N}$.  Proposition~\ref{prop:fixedN-mixing} gives
\[
   \mc P_T(z,\cdot)\longrightarrow\pi_N
   \qquad\text{in total variation as }T\to\infty.
\]
For $u\in C_c^\infty(\R^{2N})$, the functions $u^2$, $u^2\log u^2$, and
$\norm{\nabla u}_{E,*}^2$ are bounded.  Hence
Lemma~\ref{lem:finite-horizon-lsi} passes to the limit and gives
\begin{equation}\label{eq:NESS-function-LSI-energy}
   \Ent_{\pi_N}(u^2)
      \le2C_{\mathrm{B}}\int\norm{\nabla u}_{E,*}^2\dd\pi_N
      \le C\int|\nabla u|^2\dd\pi_N,
\end{equation}
where the last constant is uniform by the norm equivalence
\eqref{eq:norm-equivalence}, in the form
$\norm{v}_{E,*}^2\le\max\{1/m_0,1\}|v|^2$.

We extend \eqref{eq:NESS-function-LSI-energy} to the domain of the closed full
Sobolev form introduced after \eqref{eq:full-fisher}.  By definition, for every
such $u$ there are $u_n\in C_c^\infty(\R^{2N})$ with $u_n\to u$ in
$L^2(\pi_N)$ and
$\int|\nabla(u_n-u)|^2\dd\pi_N\to0$.  Hence $u_n^2\to u^2$ in
$L^1(\pi_N)$.  Relative entropy is lower semicontinuous for $L^1$ convergence
of nonnegative functions, while the Dirichlet integrals converge.  Passing to
the limit proves \eqref{eq:NESS-function-LSI-energy} throughout the closed form
domain.

For a density $f$, put $u=\sqrt f$.  If
$\cI_{\mathrm{full}}(f)=4\int|\nabla u|^2\dd\pi_N=+\infty$, there is nothing to
prove.  Otherwise $u$ belongs to the closed full Sobolev form domain and $\int u^2\dd\pi_N=1$, so
\eqref{eq:NESS-function-LSI-energy} gives
\[
   \Ent_{\pi_N}(f)
      \le C_{\mathrm{LSI}}\,\cI_{\mathrm{full}}(f),
\]
with $C_{\mathrm{LSI}}$ independent of $N$.  This proves
\eqref{eq:main-lsi}.
\end{proof}

\section{Proof of the boundary space-time LSI}\label{sec:stlsi}

We prove Theorem~\ref{thm:main-stlsi} by showing that entropy decreases by a
uniform proportion over a suitable time interval.  The argument has three stages.
First, the relaxation estimate for the harmonic chain is shown to persist under weak anharmonicity.
Second, we extract $2N$ Gaussian directions from the boundary noise and use a
conditional change of variables to compare transition kernels in relative
entropy.  Third, this kernel estimate is combined with Wasserstein contraction
and Talagrand inequality to obtain the entropy contraction.

\subsection{Robust deterministic estimates under weak anharmonicity}\label{sec:robust}

Section~\ref{sec:inputs} gave relaxation and controllability on the harmonic
$N^3$ time scale.  Here we show that the relaxation estimate is stable under
the weak Hessian perturbation in Assumption~\ref{ass:weak}.  Let
$K_t=K_t^\top$ be measurable, and consider the linear equation for
$z_t=(x_t,y_t)$:
\begin{equation}\label{eq:linear-family}
   \dot x_t=y_t,\qquad\dot y_t=-(K_N^0+K_t)x_t-\Gamma_\partial y_t,
   \qquad\norm{K_t}_{\mathrm{op}}\le\delta_N,\ \ N^3\delta_N\le\theta_0,
\end{equation}
with evolution operator $U_K(t,s)$.  For $z=(x,y)$, the perturbation of the
reference drift satisfies
\[
   \begin{pmatrix}0&0\\-K_t&0\end{pmatrix}z=(0,-K_tx),
   \qquad
   \norm{(0,-K_tx)}_E\le C\delta_N\norm{z}_E .
\]

Equation~\eqref{eq:linear-family} arises naturally from the nonlinear
dynamics.  Indeed, if $(q_t,p_t)$ and $(q'_t,p'_t)$ are two solutions of
\eqref{eq:SDE} driven by the same Brownian motion, then their difference
$z_t=(q_t-q'_t,p_t-p'_t)$ solves \eqref{eq:linear-family} with
\[
   K_t=\int_0^1\nabla^2W_N(q'_t+s(q_t-q'_t))\dd s .
\]
The reference dynamics decays at rate $O(N^{-3})$, whereas the perturbation of
its drift has size $O(\delta_N)$.  Stability therefore requires $\delta_N$ to
be small compared with $N^{-3}$, which is exactly the condition
$N^3\delta_N\le\theta_0$ in Assumption~\ref{ass:weak}.

\begin{lemma}[Robust exponential stability]\label{lem:robust-stability}
If $N^3\delta_N\le\theta_0$, then
\begin{equation}\label{eq:robust-decay}
   \norm{U_K(t,s)z}_E
      \le M_{\mathrm{h}}e^{-\alpha_0(t-s)/(2N^3)}\norm{z}_E,
      \qquad 0\le s\le t.
\end{equation}
\end{lemma}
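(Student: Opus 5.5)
The plan is to treat \eqref{eq:linear-family} as a perturbation of the reference flow $U_N^0(t)=e^{tA_N^0}$ and combine the variation-of-constants formula with Gronwall's inequality. This should give exactly the prefactor $M_{\mathrm{h}}$ from Theorem~\ref{thm:harmonic} and halve the rate. Write $P_r=\begin{pmatrix}0&0\\-K_r&0\end{pmatrix}$, so that $A_N^0+P_r$ is the drift of \eqref{eq:linear-family}. For fixed $s$ and $z$, Duhamel's formula (valid for measurable bounded $K$, since both sides solve the same linear Carath\'eodory ODE) gives
\[
   U_K(t,s)z=U_N^0(t-s)z+\int_s^tU_N^0(t-r)P_rU_K(r,s)z\dd r .
\]

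The first step is the perturbation bound $\norm{P_rw}_E\le C_1\delta_N\norm{w}_E$, noted just before the lemma. Its $p$-component is $-K_rx$, so $\norm{P_rw}_E=|K_rx|\le\delta_N|x|\le\delta_N\nu^{-1/2}\norm{w}_E$, using $K_N^0\ge\nu I_N$. Taking energy norms in the Duhamel formula and applying \eqref{eq:harmonic-decay} to both $U_N^0$ factors gives
\[
   \norm{U_K(t,s)z}_E\le M_{\mathrm{h}}e^{-\alpha_0(t-s)/N^3}\norm{z}_E
      +M_{\mathrm{h}}C_1\delta_N\int_s^te^{-\alpha_0(t-r)/N^3}\norm{U_K(r,s)z}_E\dd r .
\]

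Next, set $\varphi(t)=e^{\alpha_0(t-s)/N^3}\norm{U_K(t,s)z}_E$. Multiplying through by $e^{\alpha_0(t-s)/N^3}$, the inequality becomes
\[
   \varphi(t)\le M_{\mathrm{h}}\norm{z}_E+M_{\mathrm{h}}C_1\delta_N\int_s^t\varphi(r)\dd r .
\]
Gronwall's lemma then yields $\varphi(t)\le M_{\mathrm{h}}\norm{z}_E\,e^{M_{\mathrm{h}}C_1\delta_N(t-s)}$, that is,
\[
   \norm{U_K(t,s)z}_E\le M_{\mathrm{h}}\exp\!\Bigl(-\bigl(\tfrac{\alpha_0}{N^3}-M_{\mathrm{h}}C_1\delta_N\bigr)(t-s)\Bigr)\norm{z}_E .
\]

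The only remaining point is the choice of the threshold, which is where the $N^3$ scaling enters. We need $M_{\mathrm{h}}C_1\delta_N\le\alpha_0/(2N^3)$, which is equivalent to $N^3\delta_N\le\alpha_0/(2M_{\mathrm{h}}C_1)$. Since $M_{\mathrm{h}},\alpha_0,C_1$ depend only on $\nu,\kappa,\gamma$, this holds once $\theta_0\le\alpha_0/(2M_{\mathrm{h}}C_1)$, which is a permissible smallness requirement in Assumption~\ref{ass:weak}. This gives \eqref{eq:robust-decay}.

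No step is a serious obstacle. The delicate point is only that the perturbation must be measured in the energy norm, uniformly in $N$, so that the Gronwall exponent $M_{\mathrm{h}}C_1\delta_N$ is compared against the $N^{-3}$ decay rate rather than against any $N$-dependent norm-equivalence constant. An alternative would be a Lyapunov argument with the matrix $\widetilde G_N$ from the proof of Theorem~\ref{thm:harmonic}. However, that route would only give an $O(N^{3/2})$ prefactor, so the Duhamel approach is preferable because it keeps the constant $M_{\mathrm{h}}$.
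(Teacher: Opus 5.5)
Your proof is correct and follows the paper's own argument essentially step for step. Both use Duhamel's formula around $U_N^0$, the bound $\norm{(0,-K_rx)}_E\le C\delta_N\norm{w}_E$ (your explicit $C_1=\nu^{-1/2}$), the rescaling $e^{\alpha_0(t-s)/N^3}\norm{U_K(t,s)z}_E$ followed by Gronwall, and the choice of $\theta_0$ with $CM_{\mathrm{h}}N^3\delta_N\le\alpha_0/2$.
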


\begin{proof}
Put $a_N=\alpha_0/N^3$.  Duhamel's formula and
\eqref{eq:harmonic-decay} give, with $f(t)=\norm{U_K(t,s)z}_E$,
\[
   f(t)\le M_{\mathrm{h}}e^{-a_N(t-s)}\norm{z}_E
      +C\delta_NM_{\mathrm{h}}\int_s^t e^{-a_N(t-r)}f(r)\,\dd r.
\]
Set $g(t)=e^{a_N(t-s)}f(t)$.  Then
\[
   g(t)\le M_{\mathrm{h}}\norm{z}_E
      +C\delta_NM_{\mathrm{h}}\int_s^t g(r)\,\dd r.
\]
Gronwall's inequality therefore yields
\[
   f(t)\le M_{\mathrm{h}}
      \exp\!\left[-\bigl(a_N-C\delta_NM_{\mathrm{h}}\bigr)(t-s)\right]
      \norm{z}_E.
\]
Choose $\theta_0$ so that
$C M_{\mathrm{h}}N^3\delta_N\le\alpha_0/2$.  This proves
\eqref{eq:robust-decay}.
\end{proof}

\subsection{Finite-dimensional Gaussian coordinates in the boundary noise}\label{sec:gaussian-component}

We now turn to the transition-kernel estimate.  The goal is to choose $2N$
orthonormal directions in the boundary-noise path space whose images under the
reference controllability operator form a uniformly well-conditioned basis of
phase space.

Let
\[
   \mathcal B_N^0u
      :=\int_0^{T_c}U_N^0(T_c-s)B_{\partial,N}u_s\,\dd s
\]
be the controllability operator for the reference equation at time $T_c$.
Thus $\mathcal B_N^0u$ is the state at time $T_c$ of the reference equation
started from zero and driven by the boundary control $u$.  Proposition~\ref{prop:harmonic-control}
provides a right inverse
$\cR_N^0:\R^{2N}\to L^2([0,T_c];\R^2)$ with
$\mathcal B_N^0\cR_N^0=I_{2N}$.

The right inverse need not be isometric.  The next lemma orthonormalizes its
range without losing the uniform bounds on the resulting trajectories and
their states at time $T_c$.  The operator $\mathcal S_N$ maps phase-space coordinates
to orthonormal control directions, while $\mathcal M_N=\mathcal B_N^0\mathcal S_N$ maps those
coordinates to the corresponding states at time $T_c$.  Uniform bounds on
$\mathcal M_N$ and $\mathcal M_N^{-1}$ ensure that these states form a well-conditioned basis of
phase space.

\begin{lemma}[Normalized control directions for the reference equation]\label{lem:normalized-controls}
There is a constant $C>0$, depending only on
$\nu,\kappa,\gamma,\underline T,\overline T$, such that, for every $N$, there
is an operator $\mathcal S_N:\R^{2N}\to L^2([0,T_c];\R^2)$ for which, with
$\mathcal M_N:=\mathcal B_N^0\mathcal S_N$,
\begin{equation}\label{eq:SN-MN}
   \mathcal S_N^*\mathcal S_N=I_{2N},
   \qquad
   \norm{\mathcal M_N}_{E\to E}
      +\norm{\mathcal M_N^{-1}}_{E\to E}\le C.
\end{equation}
If $z_a^0(t)$ is the solution of \eqref{eq:harmonic-controlled-system} with
$z_a^0(0)=0$ and control $\mathcal S_Na$, then
\begin{equation}\label{eq:gaussian-direction-path-bound}
   \int_0^{T_c}\norm{z_a^0(t)}_E^2\,\dd t
      \le CN^3\norm{a}_E^2.
\end{equation}
\end{lemma}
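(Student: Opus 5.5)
The plan is to orthonormalize the range of the right inverse $\cR_N^0$ from Proposition~\ref{prop:harmonic-control} by a polar-type correction. Write $\cR:=\cR_N^0$, viewed as a map from $(\R^{2N},\langle\cdot,\cdot\rangle_E)$ to $L^2([0,T_c];\R^2)$. The first step is to record its two key bounds. From $\mathcal B_N^0\cR=I_{2N}$, the range of $\cR$ has dimension $2N$ and $\cR$ is injective. From \eqref{eq:harmonic-control}, the Gram operator $Q:=\cR^*\cR$ on $\R^{2N}$ (adjoint taken with respect to the energy inner product) satisfies $\langle Qy,y\rangle_E\le C_0\norm{y}_E^2$.

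The second step is a matching lower bound for $Q$. I would get it from boundedness of $\mathcal B_N^0$ in energy norm: for any control $u$,
\[
   \norm{\mathcal B_N^0u}_E\le C_1\norm{u}_{L^2},
\]
uniformly in $N$. This follows from the energy balance \eqref{eq:harmonic-energy-balance} with forcing: along the controlled equation,
\[
   \frac{\dd}{\dd t}\norm{z}_E^2=-2p^\top\Gamma_\partial p+2p^\top(\text{input}).
\]
The input is only boundary momenta, and the bound $2p_iu_i\sqrt{2\gamma T_i}\le 2\gamma p_i^2+T_iu_i^2$ absorbs into the dissipation, so
\[
   \norm{z(T_c)}_E^2\le\overline T\int_0^{T_c}|u|^2\,\dd t.
\]
Then $\norm{y}_E=\norm{\mathcal B_N^0\cR y}_E\le C_1\norm{\cR y}_{L^2}$, so $Q\ge C_1^{-2}I$ in the energy inner product.

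The third step is the construction. Choose a Euclidean structure for $a$ compatible with the requirement $\mathcal S_N^*\mathcal S_N=I_{2N}$; I take the adjoint Euclidean on $\R^{2N}$, as in the statement. Let $Q_{\mathrm{euc}}:=\cR^{\top}\cR$ be the Euclidean Gram matrix. By the two bounds above and the norm equivalence \eqref{eq:norm-equivalence}, $Q_{\mathrm{euc}}$ is bounded above and below by $N$-independent constants. Define
\[
   \mathcal S_N:=\cR\,Q_{\mathrm{euc}}^{-1/2}.
\]
Then $\mathcal S_N^*\mathcal S_N=I_{2N}$ and
\[
   \mathcal M_N=\mathcal B_N^0\mathcal S_N=Q_{\mathrm{euc}}^{-1/2}.
\]
Its norm and the norm of its inverse are uniformly bounded in Euclidean norm, hence in the $E\to E$ norm by \eqref{eq:norm-equivalence}, which gives \eqref{eq:SN-MN}.

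For the path bound \eqref{eq:gaussian-direction-path-bound}, the control $\mathcal S_Na$ equals $\cR y$ with $y=Q_{\mathrm{euc}}^{-1/2}a$, and $\norm{y}_E\le C\norm{a}_E$. The trajectory from zero driven by $\cR y$ is exactly the one in Proposition~\ref{prop:harmonic-control}, so its time-integrated energy is at most $C_0N^3\norm{y}_E^2\le CN^3\norm{a}_E^2$. The only real obstacle is the uniform lower bound on $Q$, which rests on the forced energy inequality bounding $\mathcal B_N^0$ independently of $N$ and $T_c$. Once that holds, the rest is linear algebra.
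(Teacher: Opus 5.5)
Your proposal is correct and follows the paper's argument. It uses the same polar normalization $\mathcal S_N=\cR_N^0\bigl((\cR_N^0)^*\cR_N^0\bigr)^{-1/2}$, with the upper Gram bound from the control cost in \eqref{eq:harmonic-control} and the lower bound from uniform boundedness of $\mathcal B_N^0$ combined with $\mathcal B_N^0\cR_N^0=I_{2N}$.

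The only real variation is how you bound $\mathcal B_N^0$. The paper applies Lemma~\ref{lem:admissible-dual-boundary} with $K=0$ and returns to the primal side through $\norm{y}_E=\sup_{\norm{v}_{E,*}=1}v^{\top}y$. You instead run the forced energy balance forward and absorb the boundary input into the dissipation by Young's inequality, which gives $\norm{\mathcal B_N^0u}_E^2\le\overline T\norm{u}_{L^2}^2$ directly. These are the primal and dual readings of the same energy identity; Proposition~\ref{prop:dual-boundary-energy} yields the same constant $\overline T$. Your version avoids the duality step.

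On conventions, the paper takes the adjoint with respect to $\langle\cdot,\cdot\rangle_E$. This is what makes $u_j=\mathcal S_Ne_j$ orthonormal in $L^2$ for the $E$-orthonormal basis chosen right after the lemma. Your Euclidean adjoint still proves the lemma as stated, but downstream it would require a Euclidean-orthonormal basis and extra norm-equivalence constants. Your energy-adjoint Gram operator $Q$ already satisfies $\overline T^{-1}I\le Q\le C_0I$. So taking $\mathcal S_N=\cR Q^{-1/2}$ would give the paper's normalization without the detour through $Q_{\mathrm{euc}}$.
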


\begin{proof}
Lemma~\ref{lem:admissible-dual-boundary}, applied with $K=0$, $r=0$, and
$t=T_c$, gives by duality
\[
\begin{aligned}
   \norm{\mathcal B_N^0u}_E
      &=\sup_{\norm{v}_{E,*}=1}
         \int_0^{T_c}
          \left\langle B_{\partial,N}^{\top}
             U_N^0(T_c-s)^{\top}v,u_s\right\rangle\dd s\\
      &\le C_{\mathrm{B}}^{1/2}\norm{u}_{L^2([0,T_c];\R^2)}.
\end{aligned}
\]
Thus $\mathcal B_N^0$ is uniformly bounded from
$L^2([0,T_c];\R^2)$ to $\R^{2N}$ equipped with the energy norm.  The control bound in
Proposition~\ref{prop:harmonic-control} gives
\[
   (\cR_N^0)^*\cR_N^0\le CI_{2N}.
\]
Conversely,
\[
   \norm{y}_E
      =\norm{\mathcal B_N^0\cR_N^0y}_E
      \le C\norm{\cR_N^0y}_{L^2([0,T_c];\R^2)},
\]
so $cI_{2N}\le(\cR_N^0)^*\cR_N^0$.  Define
\[
   \mathcal S_N=\cR_N^0\bigl((\cR_N^0)^*\cR_N^0\bigr)^{-1/2},
   \qquad
   \mathcal M_N=\mathcal B_N^0\mathcal S_N
      =\bigl((\cR_N^0)^*\cR_N^0\bigr)^{-1/2}.
\]
This proves \eqref{eq:SN-MN}.  The control $\mathcal S_Na$ equals
$\cR_N^0\bigl((\cR_N^0)^*\cR_N^0\bigr)^{-1/2}a$, and
\eqref{eq:gaussian-direction-path-bound} follows from the path estimate in
\eqref{eq:harmonic-control} and the uniform bound on
$\bigl((\cR_N^0)^*\cR_N^0\bigr)^{-1/2}$.
\end{proof}

Choose a basis $(e_j)_{j=1}^{2N}$ of $\R^{2N}$ orthonormal with respect to
$\langle\cdot,\cdot\rangle_E$ and let
$u_j=\mathcal S_Ne_j\in L^2([0,T_c];\R^2)$.  Since
$\mathcal S_N^*\mathcal S_N=I_{2N}$, the $u_j$ are orthonormal in
$L^2([0,T_c];\R^2)$.  On the Wiener space over $[0,T_c]$, define
\begin{equation}\label{eq:xi-def}
   \xi_j=\int_0^{T_c}u_j(s)\cdot\dd W_s,
   \qquad
   \xi=(\xi_1,\ldots,\xi_{2N})\sim\gamma_{2N},
   \qquad \gamma_{2N}:=\mathcal{N}(0,I_{2N}),
\end{equation}
and let $h_j(t)=\int_0^tu_j(s)\,\dd s$.  The orthogonal Gaussian residual
\begin{equation}\label{eq:Wperp}
   W_t^\perp=W_t-\sum_{j=1}^{2N}\xi_jh_j(t)
\end{equation}
is independent of $\xi$.  Indeed, the pair $(W^\perp,\xi)$ is jointly Gaussian
and $\operatorname{Cov}(W_t^\perp,\xi_j)=0$ for every $t,j$.  Thus
\eqref{eq:Wperp} decomposes the full noise as
\[
   W=W^\perp+\sum_{j=1}^{2N}\xi_jh_j,
\]
where the residual path $W^\perp$ and the standard Gaussian vector $\xi$ are
independent.

Fix a realization of the residual path $\omega\in C_0([0,T_c];\R^2)$ and an initial state
$z\in\R^{2N}$.  We regard the state at time $T_c$ as a function of the Gaussian coordinates $\xi\in\R^{2N}$.  Let
$Z_t^{z,\omega,\xi}$ denote the trajectory starting from $z$ and driven by
$\omega+\sum_j\xi_jh_j$, and define
\[
   F_z^\omega(\xi):=Z_{T_c}^{z,\omega,\xi}.
\]
Thus $F_z^\omega$ maps the selected Gaussian coordinates to the state at time
$T_c$.  Because the noise is additive and the drift is globally Lipschitz,
the pathwise equation is well posed for every continuous $\omega$; the
smoothness of the anharmonic perturbation makes $F_z^\omega$ a $C^2$ function
of $\xi$.

We next express this state in the coordinates determined by the reference
controls.  Using the basis $(e_j)_{j=1}^{2N}$ fixed above to identify phase
space with $\R^{2N}$, set
\begin{equation}\label{eq:Phi-def}
   \Phi_z^\omega(\xi)=\mathcal M_N^{-1}F_z^\omega(\xi).
\end{equation}
Because $(e_j)$ is orthonormal for $\langle\cdot,\cdot\rangle_E$, in these
coordinates the energy inner product is the Euclidean one; consequently
$\norm{\cdot}_{E\to E}$ and $\norm{\cdot}_{\mathrm{HS},E}$ coincide with the
Euclidean operator and Hilbert--Schmidt norms, and $\xi\sim\gamma_{2N}$ is a
standard Gaussian vector. 
For the harmonic chain, the derivative of $F_z^\omega$ with respect to $\xi$
is $\mathcal M_N$, so $D\Phi_z^\omega=I_{2N}$.  Weak anharmonicity keeps this
derivative close to the identity.  This allows us to compare the conditional
laws at time $T_c$ by a finite-dimensional change of variables.
The construction is motivated by Malliavin controllability, but after
conditioning on the residual path, the argument involves only an ordinary
finite-dimensional Gaussian map.

\begin{lemma}[Uniform conditional diffeomorphism]\label{lem:conditional-diffeo}
For $\theta_0$ in Assumption~\ref{ass:weak} sufficiently small, for all
$z,\omega,\xi$,
\begin{equation}\label{eq:Dphi-close}
   \norm{D\Phi_z^\omega(\xi)-I_{2N}}_{E\to E}\le\frac1{10}.
\end{equation}
Consequently every $\Phi_z^\omega$ is an orientation-preserving global $C^1$
diffeomorphism.
\end{lemma}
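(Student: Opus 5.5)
We compute $DF_z^\omega(\xi)$ by differentiating the pathwise equation in $\xi_j$, as in the proof of Lemma~\ref{lem:finite-horizon-lsi}. Since the noise is additive, $\partial_{\xi_j}Z_t$ solves the linearized equation \eqref{eq:linear-family} along the trajectory, with $K_t=\nabla^2W_N(q_t^{z,\omega,\xi})$, zero initial datum and forcing $B_{\partial,N}u_j(t)$. Hence, for $a\in\R^{2N}$ (in the $E$-orthonormal coordinates), $DF_z^\omega(\xi)a=\zeta_a(T_c)$, where
\[
   \dot\zeta_a=A_N^0\zeta_a+\begin{pmatrix}0&0\\-K_t&0\end{pmatrix}\zeta_a+B_{\partial,N}(\mathcal S_Na)(t),\qquad \zeta_a(0)=0 .
\]
The reference counterpart $z_a^0$ (with $K\equiv0$) satisfies $z_a^0(T_c)=\mathcal M_Na$. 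Therefore
\[
   D\Phi_z^\omega(\xi)a-a=\mathcal M_N^{-1}\bigl(\zeta_a(T_c)-z_a^0(T_c)\bigr).
\]
By \eqref{eq:SN-MN}, it suffices to show $\norm{\zeta_a(T_c)-z_a^0(T_c)}_E\le c_1\norm{a}_E$ with $c_1$ small, uniformly in $z,\omega,\xi$.

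The difference $d=\zeta_a-z_a^0$ solves $\dot d=(A_N^0+\mathcal K_t)d+\mathcal K_t z_a^0$ with $d(0)=0$, where $\mathcal K_t$ denotes the Hessian perturbation block. Duhamel's formula with the perturbed propagator gives
\[
   d(T_c)=\int_0^{T_c}U_K(T_c,s)\,\mathcal K_s z_a^0(s)\,\dd s .
\]
By Lemma~\ref{lem:robust-stability}, $\norm{U_K(T_c,s)}_{E\to E}\le M_{\mathrm h}$. The block $\mathcal K_s$ has $E$-norm at most $C\delta_N$. Cauchy--Schwarz together with \eqref{eq:gaussian-direction-path-bound} then yields
\[
   \norm{d(T_c)}_E\le CM_{\mathrm h}\delta_N\,T_c^{1/2}\Bigl(\int_0^{T_c}\norm{z_a^0}_E^2\dd t\Bigr)^{1/2}
   \le C\delta_N(\tau_cN^3)^{1/2}(N^3)^{1/2}\norm{a}_E=C\tau_c^{1/2}N^3\delta_N\norm{a}_E .
\]
By \eqref{eq:weak-bound}, this is at most $C\theta_0\norm{a}_E$. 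Choosing $\theta_0$ small, and noting that $\tau_c$ is fixed by Proposition~\ref{prop:harmonic-control}, we get $\norm{\mathcal M_N^{-1}}_{E\to E}\,C\theta_0\le 1/10$, which proves \eqref{eq:Dphi-close}. The key bookkeeping check, and the main obstacle, is that the $N^3$ factors combine exactly into $N^3\delta_N$. This works only because we use the integrated path bound \eqref{eq:gaussian-direction-path-bound} rather than a pointwise bound, which would lose an extra $N^{3/2}$. The $E$-norm smallness of $\mathcal K_s$ uses $K_N^0\ge\nu I$ and the $\norm{\cdot}_E$ comparison to the Euclidean norm.

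For the diffeomorphism statement, $\Phi_z^\omega$ is $C^1$ (indeed $C^2$) in $\xi$. Moreover $\psi(\xi)=\Phi_z^\omega(\xi)-\xi$ is globally $1/10$-Lipschitz, since its derivative has norm at most $1/10$ everywhere and $\R^{2N}$ is convex. This has two consequences. First, $|\Phi(\xi)-\Phi(\xi')|\ge\tfrac{9}{10}|\xi-\xi'|$, so $\Phi$ is injective. Second, for any target $y$, the map $\xi\mapsto y-\psi(\xi)$ is a contraction, so by Banach's fixed-point theorem $\Phi$ is surjective. The inverse is $C^1$ by the inverse function theorem, since $D\Phi$ is invertible with $\norm{D\Phi-I}\le1/10$. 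Finally, $\det D\Phi>0$ because the segment $I+s(D\Phi-I)$, $s\in[0,1]$, stays invertible, so $\Phi$ is orientation preserving.
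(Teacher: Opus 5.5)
Your proof is correct and follows the paper's argument: Duhamel with the perturbed propagator $U_K$, Lemma~\ref{lem:robust-stability}, Cauchy--Schwarz against \eqref{eq:gaussian-direction-path-bound}, multiplication by $\mathcal M_N^{-1}$, and then the contraction-mapping and homotopy arguments for the diffeomorphism. The one difference is that you bound the propagator by $M_{\mathrm{h}}$ and pay $T_c^{1/2}=(\tau_cN^3)^{1/2}$, where the paper keeps the weight $e^{-\alpha_0(T_c-r)/(2N^3)}$; this gives the same $N^3\delta_N$ because $\tau_c$ is fixed before $\theta_0$.

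Your aside that a pointwise bound would lose $N^{3/2}$ is not accurate, though it does not affect the proof. Lemma~\ref{lem:admissible-dual-boundary} with $K=0$ and duality give $\sup_{t\le T_c}\norm{z_a^0(t)}_E\le C\norm{a}_E$, and integrating this over $[0,T_c]$ with $T_c=\tau_cN^3$ again yields $C\tau_cN^3\delta_N\norm{a}_E$.
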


\begin{proof}
Let $q_t$ denote the position component of the trajectory
$Z_t^{z,\omega,\xi}$ defined above.  Define its Jacobian with respect to the
Gaussian coordinates by
\[
   \cJ_t:=D_\xi Z_t^{z,\omega,\xi}:\R^{2N}\to\R^{2N}.
\]
It solves
\[
   \dot{\cJ}_t=A_t\cJ_t+B_{\partial,N}\mathcal S_N(t),
   \qquad \cJ_0=0,
\]
where $A_t$ is of the form \eqref{eq:linear-family} with
$K_t=\nabla^2W_N(q_t)$ and $\norm{K_t}_{\mathrm{op}}\le\delta_N$.  The reference
trajectory corresponding to a direction $a\in\R^{2N}$ is $z_a^0(t)$ from
Lemma~\ref{lem:normalized-controls}.  Variation of constants gives
\[
   \cJ_{T_c}a-z_a^0(T_c)
      =\int_0^{T_c}U_K(T_c,r)
         \binom{0}{-K_r(z_a^0(r))_q}\,\dd r.
\]
Lemma~\ref{lem:robust-stability} bounds the propagator in this integral by
\[
   \norm{U_K(T_c,r)}_{E\to E}
      \le M_{\mathrm{h}}e^{-\alpha_0(T_c-r)/(2N^3)}.
\]
Together with
$\norm{(0,-K_r(z_a^0(r))_q)}_E\le C\delta_N\norm{z_a^0(r)}_E$, this gives
\[
\begin{aligned}
   \norm{\cJ_{T_c}a-z_a^0(T_c)}_E
   &\le C\delta_N\int_0^{T_c}
      e^{-\alpha_0(T_c-r)/(2N^3)}\norm{z_a^0(r)}_E\,\dd r\\
   &\le C\delta_N
      \left(\int_0^{T_c}e^{-\alpha_0(T_c-r)/N^3}\,\dd r\right)^{1/2}
      \left(\int_0^{T_c}\norm{z_a^0(r)}_E^2\,\dd r\right)^{1/2} \\
   &\le CN^3\delta_N\norm{a}_E.
\end{aligned}
\]
Since $z_a^0(T_c)=\mathcal M_Na$, multiplication by the uniformly bounded $\mathcal M_N^{-1}$
gives \eqref{eq:Dphi-close} after decreasing $\theta_0$.

The bound \eqref{eq:Dphi-close} and the contraction mapping theorem, applied
to $\xi\mapsto y+\xi-\Phi_z^\omega(\xi)$, show that $\Phi_z^\omega$ is a global
$C^1$ diffeomorphism.  Moreover,
$I_{2N}+t(D\Phi_z^\omega-I_{2N})$ is invertible for $0\le t\le1$, so
$\det D\Phi_z^\omega>0$.
\end{proof}

To compare the conditional laws from two initial states, we need to control
both the displacement of the maps $\Phi_z^\omega$ and the variation of their
Jacobians.  The latter enters the Gaussian change-of-variables formula through
a log-determinant term and is therefore estimated in Hilbert--Schmidt norm.
Here $\norm{\cdot}_{\mathrm{HS},E}$ denotes the Hilbert--Schmidt norm with respect
to $\langle\cdot,\cdot\rangle_E$; it is uniformly equivalent to the Euclidean
Hilbert--Schmidt norm.

\begin{lemma}[Dependence on the initial state and Gaussian coordinates]\label{lem:Phi-derivative-Lipschitz}
For every fixed residual path $\omega$ and all $z,z'\in\R^{2N}$ and
$\xi,\eta\in\R^{2N}$,
\begin{align}
   \sup_{0\le t\le T_c}
      \norm{Z_t^{z,\omega,\xi}-Z_t^{z',\omega,\eta}}_E
      &\le C\bigl(\norm{z-z'}_E+|\xi-\eta|\bigr),
      \label{eq:path-difference}\\
   |\Phi_z^\omega(\xi)-\Phi_{z'}^\omega(\xi)|
      &\le C\norm{z-z'}_E, \label{eq:Phi-start}\\
   \norm{D\Phi_z^\omega(\xi)-D\Phi_{z'}^\omega(\eta)}_{\mathrm{HS},E}
      &\le CN^3\Lambda_N
         \bigl(\norm{z-z'}_E+|\xi-\eta|\bigr).
      \label{eq:Dphi-HS}
\end{align}
\end{lemma}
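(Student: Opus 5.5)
For \eqref{eq:path-difference}, the plan is to compare the two trajectories through their difference. Write $\delta_t=Z_t^{z,\omega,\xi}-Z_t^{z',\omega,\eta}$; because the noise is additive, the residual path $\omega$ cancels. By the mean-value construction after \eqref{eq:linear-family}, $\delta$ solves
\[
\dot\delta_t=A_t\delta_t+B_{\partial,N}\mathcal S_N(\xi-\eta)(t),
\qquad \delta_0=z-z',
\]
where $A_t$ has the form \eqref{eq:linear-family} with an averaged Hessian $K_t$, $\norm{K_t}_{\mathrm{op}}\le\delta_N$. Duhamel's formula splits $\delta_t$ into a homogeneous part and a forced part. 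The homogeneous part $U_K(t,0)(z-z')$ is bounded by $M_{\mathrm h}\norm{z-z'}_E$ using Lemma~\ref{lem:robust-stability}. For the forced part $\int_0^tU_K(t,s)B_{\partial,N}u_s\,\dd s$ with $u=\mathcal S_N(\xi-\eta)$, I argue by duality exactly as in the proof of Lemma~\ref{lem:normalized-controls}. Lemma~\ref{lem:admissible-dual-boundary} holds for any admissible $K$ and any $r<t$, so this term has energy norm at most $C_{\mathrm B}^{1/2}\norm{u}_{L^2}=C_{\mathrm B}^{1/2}|\xi-\eta|$, since $\mathcal S_N$ is an isometry. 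Both bounds are uniform in $t\le T_c$, which proves \eqref{eq:path-difference}. Taking $\xi=\eta$ at $t=T_c$ and applying the uniformly bounded $\mathcal M_N^{-1}$ from \eqref{eq:SN-MN} then gives \eqref{eq:Phi-start}.

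For \eqref{eq:Dphi-HS}, recall that $D\Phi_z^\omega(\xi)=\mathcal M_N^{-1}\cJ_{T_c}$, where $\cJ$ solves $\dot\cJ_t=A_t\cJ_t+B_{\partial,N}\mathcal S_N(t)$ with $K_t=\nabla^2W_N(q_t)$, as in Lemma~\ref{lem:conditional-diffeo}. Let $\cJ'$ be the corresponding Jacobian along the primed trajectory, with Hessian $K'_t$. The forcing terms are identical, so $\Delta_t=\cJ_t-\cJ'_t$ satisfies
\[
\dot\Delta_t=A_t\Delta_t+\binom{0}{-(K_t-K'_t)(\cJ'_t)_q},
\qquad \Delta_0=0,
\]
and therefore
\[
\Delta_{T_c}=\int_0^{T_c}U_K(T_c,r)\binom{0}{-(K_r-K'_r)(\cJ'_r)_q}\dd r .
\]
I will estimate the Hilbert--Schmidt norm by applying $\Delta_{T_c}$ to the $E$-orthonormal basis $e_j$ and summing the squares. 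The mean-value theorem gives $K_r-K'_r=\int_0^1 D(\nabla^2W_N)(\cdot)[q_r-q'_r]\,\dd s$, so $\norm{K_r-K'_r}_{\mathrm{HS}}\le\Lambda_N|q_r-q'_r|$, and $|q_r-q'_r|$ is controlled by \eqref{eq:path-difference}. Using $\norm{AB}_{\mathrm{HS}}\le\norm{A}_{\mathrm{HS}}\norm{B}_{\mathrm{op}}$, the integrand has HS norm at most
\[
C\,e^{-\alpha_0(T_c-r)/(2N^3)}\,\Lambda_N\bigl(\norm{z-z'}_E+|\xi-\eta|\bigr)\norm{\cJ'_r}_{E\to E}.
\]
The propagator factor comes from Lemma~\ref{lem:robust-stability}. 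It then suffices to show $\norm{\cJ'_r}_{E\to E}\le C$ uniformly in $r\le T_c$. This follows from the same duality argument with Lemma~\ref{lem:admissible-dual-boundary} applied on $[0,r]$, because $\cJ'_ra$ is the response at time $r$ to the control $\mathcal S_Na$, which has $L^2$ norm $|a|$. Integrating over $r\in[0,T_c]$ costs at most $T_c=\tau_cN^3$, which gives $CN^3\Lambda_N(\norm{z-z'}_E+|\xi-\eta|)$. Multiplying by $\mathcal M_N^{-1}$ and using norm equivalence yields \eqref{eq:Dphi-HS}.

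The main obstacle is making sure the Hilbert--Schmidt bound does not pick up a dimensional factor such as $\sqrt{2N}$. An operator-norm bound on $\cJ'$ would be fine on its own, but a naive computation with the HS norm of $\cJ'$ would introduce $\sqrt N$. The point is that $\Lambda_N$ is defined as the HS norm of the Hessian variation in a single unit direction. So the HS structure must be carried by $K_r-K'_r$, which varies in the one direction $q_r-q'_r$, while $\cJ'_r$ and $U_K$ enter only through operator norms. I will verify that the ordering $U_K\cdot(K-K')\cdot\cJ'$ permits exactly this: operator norm on the left factor, HS on the middle factor, operator norm on the right factor. I also need to check that the restricted map $(0,w)\mapsto(0,(K-K')w)$ keeps its HS norm comparable in the energy norm, which holds by the uniform norm equivalence \eqref{eq:norm-equivalence}.
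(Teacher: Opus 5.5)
Your proposal is correct and follows the paper's proof essentially step for step. For \eqref{eq:path-difference} you use the same mean-Hessian Duhamel splitting, with Lemma~\ref{lem:robust-stability} and the duality form of Lemma~\ref{lem:admissible-dual-boundary}; for \eqref{eq:Dphi-HS} you use the same representation $\int_0^{T_c}U_K(T_c,r)(A_r-A'_r)\cJ'_r\,\dd r$ with the operator--HS--operator splitting. The only difference is cosmetic: you bound the time integral of the exponential factor by $T_c=\tau_cN^3$ rather than by $2N^3/\alpha_0$, which yields the same $CN^3$.
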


\begin{proof}
Set $Z_1(t)=Z_t^{z,\omega,\xi}$ and $Z_2(t)=Z_t^{z',\omega,\eta}$, and denote
their position components by $q_1(t)$ and $q_2(t)$, respectively.  Their
difference solves
\[
   \frac{\dd}{\dd t}\bigl(Z_1(t)-Z_2(t)\bigr)
      =\widetilde A_t\bigl(Z_1(t)-Z_2(t)\bigr)
      +B_{\partial,N}\mathcal S_N(\xi-\eta)(t),
   \qquad Z_1(0)-Z_2(0)=z-z',
\]
where $\widetilde A_t$ has the form \eqref{eq:linear-family}; denote its
Hessian perturbation by $\widetilde K_t$, the mean of $\nabla^2W_N$ along the
line segment joining the two position components.  Duhamel's formula gives
\[
   Z_1(t)-Z_2(t)=U_{\widetilde K}(t,0)(z-z')
      +\int_0^tU_{\widetilde K}(t,s)B_{\partial,N}
         \mathcal S_N(\xi-\eta)(s)\,\dd s.
\]
Lemma~\ref{lem:robust-stability} bounds the first term by
$C\norm{z-z'}_E$.  Lemma~\ref{lem:admissible-dual-boundary}, applied on
$[0,t]$, bounds the second term by
\[
   \norm{\int_0^tU_{\widetilde K}(t,s)
      B_{\partial,N}\mathcal S_N(\xi-\eta)(s)\,\dd s}_E
      \le C\norm{\mathcal S_N(\xi-\eta)}_{L^2([0,T_c];\R^2)}
      =C|\xi-\eta|,
\]
uniformly in $t$.  This proves \eqref{eq:path-difference}.  If $\xi=\eta$,
the forced term vanishes; evaluating at $T_c$ and using
$\Phi_z^\omega=\mathcal M_N^{-1}F_z^\omega$ and the uniform bound on
$\mathcal M_N^{-1}$ proves \eqref{eq:Phi-start}.

For $i=1,2$, let $\cJ_i(t):\R^{2N}\to\R^{2N}$ be the Jacobian of
$Z_i(t)$ with respect to the Gaussian coordinates.  These Jacobians solve
\[
   \dot{\cJ}_i(t)=A_i(t)\cJ_i(t)
      +B_{\partial,N}\mathcal S_N(t),
   \qquad \cJ_i(0)=0.
\]
Here $A_i$ has the form \eqref{eq:linear-family}, with Hessian perturbation
$K_i(t)=\nabla^2W_N(q_i(t))$.  Applying
Lemma~\ref{lem:admissible-dual-boundary} to the representation
\[
   \cJ_i(t)=\int_0^tU_{K_i}(t,s)B_{\partial,N}\mathcal S_N(s)\,\dd s
\]
and using that $\mathcal S_N$ is an isometry gives
$\sup_{t\le T_c}\norm{\cJ_i(t)}_{E\to E}\le C$.  Subtracting the two
Jacobian equations yields
\[
   \cJ_1(T_c)-\cJ_2(T_c)
      =\int_0^{T_c}U_{K_1}(T_c,r)
         \bigl(A_1(r)-A_2(r)\bigr)\cJ_2(r)\,\dd r.
\]
The matrices $A_1(r)$ and $A_2(r)$ differ only in their lower-left blocks:
\[
   A_1(r)-A_2(r)
      =\begin{pmatrix}
          0&0\\
          -\bigl(K_1(r)-K_2(r)\bigr)&0
        \end{pmatrix}.
\]
Since $K_i(r)=\nabla^2W_N(q_i(r))$, applying the fundamental theorem of
calculus along the line segment from $q_2(r)$ to $q_1(r)$ and using the
definition of $\Lambda_N$ gives
\[
   \norm{K_1(r)-K_2(r)}_{\mathrm{HS}}
      \le\Lambda_N|q_1(r)-q_2(r)|.
\]
Uniform equivalence of the Euclidean and energy norms, followed by
\eqref{eq:path-difference}, therefore gives
\[
   \norm{A_1(r)-A_2(r)}_{\mathrm{HS},E}
      \le C\Lambda_N
         \bigl(\norm{z-z'}_E+|\xi-\eta|\bigr).
\]
Using the integral representation above,
$\norm{ABC}_{\mathrm{HS},E}
 \le\norm{A}_{E\to E}\,\norm{B}_{\mathrm{HS},E}\,\norm{C}_{E\to E}$,
Lemma~\ref{lem:robust-stability}, and the uniform bound on $\cJ_2$, we obtain
\[
\begin{aligned}
   \norm{\cJ_1(T_c)-\cJ_2(T_c)}_{\mathrm{HS},E}
   &\le C\Lambda_N\bigl(\norm{z-z'}_E+|\xi-\eta|\bigr)
      \int_0^{T_c}e^{-\alpha_0(T_c-r)/(2N^3)}\,\dd r\\
   &\le CN^3\Lambda_N
      \bigl(\norm{z-z'}_E+|\xi-\eta|\bigr).
\end{aligned}
\]
Finally,
\[
   D\Phi_z^\omega(\xi)-D\Phi_{z'}^\omega(\eta)
      =\mathcal M_N^{-1}\bigl(\cJ_1(T_c)-\cJ_2(T_c)\bigr).
\]
The uniform bound on $\mathcal M_N^{-1}$ now proves \eqref{eq:Dphi-HS}.
\end{proof}

We are now ready to compare the transition kernels.  The preceding estimates
verify the hypotheses of the Gaussian pushforward estimate in
Appendix~\ref{app:gaussian-pushforward} for each fixed residual path; averaging
over that path then gives the desired entropy bound.

\begin{prop}[Relative entropy between transition kernels]\label{prop:kernel-entropy}
Under Assumption~\ref{ass:weak}, estimate \eqref{eq:kernel-entropy} holds at
$T_c=\tau_cN^3$.
\end{prop}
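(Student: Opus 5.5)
Conditioning on the residual path is the starting point.  By the independence of $W^\perp$ and $\xi$ from \eqref{eq:Wperp}, the transition kernel is a mixture,
\[
   \mc P_{T_c}(z,\cdot)=\int \bigl(F_z^\omega\bigr)_\#\gamma_{2N}\,\mathbb P^\perp(\dd\omega),
\]
and likewise for $z'$ with the same mixing law $\mathbb P^\perp$ of $W^\perp$.  Joint convexity of relative entropy, or equivalently the chain rule with identical $\omega$-marginals, then gives
\[
   \mathsf{KL}\bigl(\mc P_{T_c}(z',\cdot)\mid\mc P_{T_c}(z,\cdot)\bigr)
   \le\int\mathsf{KL}\bigl((F_{z'}^\omega)_\#\gamma_{2N}\mid(F_z^\omega)_\#\gamma_{2N}\bigr)\,\mathbb P^\perp(\dd\omega).
\]
Since $F=\mathcal M_N\Phi$ with $\mathcal M_N$ a fixed invertible linear map, relative entropy is unchanged if we replace $F$ by $\Phi$.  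It therefore suffices to prove, for each fixed $\omega$,
\[
   \mathsf{KL}\bigl((\Phi_{z'}^\omega)_\#\gamma_{2N}\mid(\Phi_z^\omega)_\#\gamma_{2N}\bigr)\le C[1+(N^3\Lambda_N)^2]\norm{z-z'}_E^2 .
\]

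For fixed $\omega$, write $\Phi=\Phi_z^\omega$ and $\Psi=\Phi_{z'}^\omega$.  Both are global $C^1$ diffeomorphisms by Lemma~\ref{lem:conditional-diffeo}, and relative entropy is invariant under the common bijection $\Phi^{-1}$.  This reduces the problem to bounding $\mathsf{KL}(\Theta_\#\gamma_{2N}\mid\gamma_{2N})$ with $\Theta=\Phi^{-1}\circ\Psi$.  The map $\Theta$ has the following properties:
\begin{itemize}
\item it is bi-Lipschitz with constants close to $1$, because $D\Phi$ and $D\Psi$ lie within $1/10$ of $I_{2N}$;
\item its displacement satisfies $|\Theta(\xi)-\xi|\le C|\Psi(\xi)-\Phi(\xi)|\le C\norm{z-z'}_E$ by \eqref{eq:Phi-start};
\item its Jacobian satisfies $\norm{D\Theta(\xi)-I}_{\mathrm{HS}}\le CN^3\Lambda_N\norm{z-z'}_E$.
\end{itemize}
The Jacobian bound comes from $D\Theta(\xi)=D\Phi(\eta)^{-1}D\Psi(\xi)$ with $\eta=\Theta(\xi)$.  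One writes $D\Psi(\xi)-D\Phi(\eta)$ and applies \eqref{eq:Dphi-HS} with $|\xi-\eta|\le C\norm{z-z'}_E$.

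Changing variables, with $\rho=\Theta_\#\gamma_{2N}$ and the density of $\gamma_{2N}$ written in Gaussian form, gives
\[
   \mathsf{KL}(\rho\mid\gamma_{2N})
   =\E_{\gamma_{2N}}\Bigl[\tfrac12|\Theta(\xi)|^2-\tfrac12|\xi|^2-\log\det D\Theta(\xi)\Bigr].
\]
The Gaussian pushforward estimate of the appendix is presumably this computation.  I would prove it by comparing with the identity through the interpolation $\Theta_s=\xi+s(\Theta-\xi)$, or more directly as follows.  Write $\Theta(\xi)=\xi+g(\xi)$.  The quadratic terms produce $\E[\xi\cdot g]+\tfrac12\E|g|^2$.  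Gaussian integration by parts turns $\E[\xi\cdot g]$ into $\E[\tr Dg]$, which cancels the linear part of $-\log\det(I+Dg)$.  What remains is
\[
   \tfrac12\E|g|^2+\E\bigl[\tr Dg-\log\det(I+Dg)\bigr].
\]
Because $\norm{Dg}_{\mathrm{op}}$ is small, the bracket is bounded by $C\norm{Dg}_{\mathrm{HS}}^2$.  Inserting the displacement and Jacobian bounds yields $C\norm{z-z'}_E^2+C(N^3\Lambda_N)^2\norm{z-z'}_E^2$, as required.  Since the constants are uniform in $\omega$, averaging over $\omega$ finishes the proof.

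The main obstacle is the regularity needed for the integration by parts.  The map $g$ is only $C^1$, since $\Phi$ is $C^2$.  We also need $\E[\xi\cdot g]=\E\tr Dg$, which requires $g$ to be bounded with bounded derivative; both hold because $|g|\le C\norm{z-z'}_E$ and $Dg$ is bounded.  A second point is to make sure that the Hilbert--Schmidt bound, and not the operator-norm bound, controls the log-determinant remainder, so that the estimate is dimension-free.  This is exactly why the statement involves $\Lambda_N$ in Hilbert--Schmidt norm.  Finally, justifying the invertibility and the bi-Lipschitz property of $\Theta$ uses only Lemma~\ref{lem:conditional-diffeo}.
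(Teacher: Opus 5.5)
Your proposal is correct and follows the paper's proof essentially step for step. It conditions on $W^\perp$ and uses convexity (the paper phrases this as the chain rule plus data processing), removes $\mathcal M_N$ by invariance of relative entropy, and then reproves the appendix estimate (Lemma~\ref{lem:Gaussian-images}) via $\Phi_z^{-1}\circ\Phi_{z'}$, Gaussian integration by parts, and the dimension-free bound $|\tr H-\log\det(I+H)|\le C\norm{H}_{\mathrm{HS}}^2$. The only extra care the paper takes is noting that $\omega\mapsto(F_z^\omega)_\#\gamma_{2N}$ is a Borel kernel, obtained from joint continuity of $F$, so the disintegration is legitimate; also, $\Theta$ is in fact $C^2$, not just $C^1$, though $C^1$ suffices for the integration by parts.
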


\begin{proof}
Fix a residual path $\omega$.  For the family $\Phi_z^\omega$, with
$r(z,z')=\norm{z-z'}_E$, Lemma~\ref{lem:conditional-diffeo} gives the
derivative bound required in Lemma~\ref{lem:Gaussian-images}, while
\eqref{eq:Phi-start} and \eqref{eq:Dphi-HS} give its remaining hypotheses.  The Gaussian pushforward estimate therefore
yields
\[
   \mathsf{KL}\bigl((\Phi_{z'}^\omega)_\#\gamma_{2N}
      \mid(\Phi_z^\omega)_\#\gamma_{2N}\bigr)
      \le C\bigl[1+(N^3\Lambda_N)^2\bigr]\norm{z-z'}_E^2.
\]
Set $\mu_z^\omega=(F_z^\omega)_\#\gamma_{2N}$.  Since
$F_z^\omega=\mathcal M_N\Phi_z^\omega$ and the same invertible map
$\mathcal M_N$ is applied to both measures, relative entropy is unchanged.
Hence
\[
   \mathsf{KL}(\mu_{z'}^\omega\mid\mu_z^\omega)
      \le C\bigl[1+(N^3\Lambda_N)^2\bigr]\norm{z-z'}_E^2.
\]
The Gronwall estimate used in \eqref{eq:polygonal-solution-convergence} also
shows that $(z,\omega,\xi)\mapsto F_z^\omega(\xi)$ is jointly continuous, so
$\omega\mapsto\mu_z^\omega$ is a Borel probability kernel.

Let $\rho_N$ be the law of $W^\perp$.  By \eqref{eq:Wperp} and the independence
of $W^\perp$ and $\xi$, the transition kernel disintegrates as
\[
   \mc P_{T_c}(z,\dd y)
      =\int\mu_z^\omega(\dd y)\,\rho_N(\dd\omega).
\]
The chain rule for relative entropy and the data-processing inequality for the
projection $(\omega,y)\mapsto y$ give
\begin{align*}
   \mathsf{KL}\bigl(\mc P_{T_c}(z',\cdot)\mid
      \mc P_{T_c}(z,\cdot)\bigr)
   &\le \mathsf{KL}\bigl(
      \rho_N(\dd\omega)\mu_{z'}^\omega(\dd y)
      \mid \rho_N(\dd\omega)\mu_z^\omega(\dd y)\bigr)\\
   &=\int\mathsf{KL}(\mu_{z'}^\omega\mid\mu_z^\omega)\,
      \rho_N(\dd\omega)\\
   &\le C\bigl[1+(N^3\Lambda_N)^2\bigr]\norm{z-z'}_E^2. \qedhere
\end{align*}
\end{proof}

\begin{remark}[Conditioning and adaptedness]
The map
$(\Phi_z^\omega)^{-1}\circ\Phi_{z'}^\omega$ may depend on the entire residual
path $\omega$ and is therefore generally nonadapted.  It would be inadmissible
as an It\^o--Girsanov control.  Here it is used only in a finite-dimensional
change of variables after conditioning on $W^\perp$, where no
progressive-measurability requirement
exists.  Thus conditioning avoids the adaptedness obstruction that would arise
in a Girsanov control argument.
\end{remark}

\subsection{Entropy contraction}\label{sec:proof-stlsi}

We now combine the relative-entropy estimate for transition kernels with
contraction and a transport inequality.
For a probability measure $\mu$, write $\mu\mc P_t$ for its law after time
$t$; when $\mu=f\pi_N$, one has
$\mu\mc P_t=(\mc P_t^*f)\pi_N$.

\begin{prop}[$\mathsf{W}_E$-contraction]\label{prop:WE-contract}
Under Assumption~\ref{ass:weak}, there are $c_W,C_W>0$, depending only on
$\nu,\kappa,\gamma$, and independent of $N$, such that
\begin{equation}\label{eq:WE-contract}
   \mathsf{W}_E(\mu_1\mc P_t,\mu_2\mc P_t)
      \le C_W e^{-c_Wt/N^3}\mathsf{W}_E(\mu_1,\mu_2)
\end{equation}
for all $\mu_1,\mu_2$ with finite second moment in the energy norm.
\end{prop}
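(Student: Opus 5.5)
The plan is a synchronous coupling: drive two copies of the chain by the same boundary Brownian motion, use Lemma~\ref{lem:robust-stability} to make their difference contract deterministically, and then optimize over the initial coupling.

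\emph{Pathwise contraction.} Given $\mu_1,\mu_2$ with finite second energy moment, fix a coupling $\lambda\in\Pi(\mu_1,\mu_2)$. On one probability space, take $(Z_0,Z_0')\sim\lambda$ and an independent two-dimensional Brownian motion $W$. Let $Z_t,Z_t'$ solve \eqref{eq:SDE-vector} from $Z_0,Z_0'$, both driven by this same $W$. The noise is additive and $b$ is globally Lipschitz, since $\nabla^2V_N=K_N^0+\nabla^2W_N$ is bounded. Hence both solutions exist pathwise, and $Z_t\sim\mu_1\mc P_t$, $Z_t'\sim\mu_2\mc P_t$. The noise cancels in the difference $\Delta_t=Z_t-Z_t'$, which satisfies the linear equation \eqref{eq:linear-family} pathwise with
\[
   K_t=\int_0^1\nabla^2W_N\bigl(q_t'+s(q_t-q_t')\bigr)\dd s .
\]
This $K_t$ is symmetric, measurable in $t$ (it is continuous along the continuous paths), and satisfies $\norm{K_t}_{\mathrm{op}}\le\delta_N$. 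The paper notes exactly this reduction after \eqref{eq:linear-family}. Under Assumption~\ref{ass:weak} we have $N^3\delta_N\le\theta_0$, so Lemma~\ref{lem:robust-stability} applies for every realization and gives
\[
   \norm{Z_t-Z_t'}_E=\norm{U_K(t,0)\Delta_0}_E\le M_{\mathrm h}e^{-\alpha_0t/(2N^3)}\norm{Z_0-Z_0'}_E .
\]
The constants $M_{\mathrm h}$ and $\alpha_0$ come from Theorem~\ref{thm:harmonic} and depend only on $\nu,\kappa,\gamma$. The robust lemma only shrinks $\theta_0$, so it adds no further dependence.

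\emph{Passing to $\mathsf W_E$.} The pair $(Z_t,Z_t')$ is a coupling of $\mu_1\mc P_t$ and $\mu_2\mc P_t$. Squaring the pathwise bound and taking expectations gives
\[
   \mathsf W_E(\mu_1\mc P_t,\mu_2\mc P_t)^2\le\E\norm{Z_t-Z_t'}_E^2\le M_{\mathrm h}^2e^{-\alpha_0t/N^3}\int\norm{z-z'}_E^2\,\lambda(\dd z,\dd z').
\]
Taking the infimum over $\lambda$ gives \eqref{eq:WE-contract} with $C_W=M_{\mathrm h}$ and $c_W=\alpha_0/2$. If $\mathsf W_E(\mu_1,\mu_2)=\infty$ there is nothing to prove. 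The finite second moment hypothesis ensures the finite-moment couplings exist and that the output laws also have finite second moments, so the contraction can be iterated later.

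\emph{Main obstacle.} The only point needing care is justifying that the deterministic Lemma~\ref{lem:robust-stability} applies pathwise to the random coefficient $K_t$. This needs two facts. First, the difference process is an absolutely continuous solution of the linear ODE, because the noise cancels exactly. Second, the lemma's bound is uniform over all measurable symmetric coefficient paths with $\norm{K_t}_{\mathrm{op}}\le\delta_N$. Both hold here, so no stochastic calculus or Lyapunov function for the nonlinear dynamics is needed.
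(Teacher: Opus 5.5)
Your proposal is correct and follows the paper's own proof. Both use a synchronous coupling through the same boundary Brownian motion, reduce the difference to \eqref{eq:linear-family} with the mean Hessian, apply Lemma~\ref{lem:robust-stability} pathwise, and then take expectations and optimize over initial couplings; your explicit constants $C_W=M_{\mathrm h}$ and $c_W=\alpha_0/2$ are exactly what this argument yields.
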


\begin{proof}
Couple the initial laws and drive both solutions by the same Brownian motion.
Their difference solves \eqref{eq:linear-family} with the mean Hessian between
the two configurations.  Lemma~\ref{lem:robust-stability} gives the pathwise
contraction.  Taking expectations and optimizing over initial couplings proves
\eqref{eq:WE-contract}.
\end{proof}

The dimension-free full-gradient LSI yields the required quadratic transport
inequality.

\begin{prop}[Uniform Talagrand inequality for the invariant measure]\label{prop:T2}
Under Assumption~\ref{ass:weak}, with
\[
   C_T:=4C_{\mathrm{LSI}}\max\{\nu+4\kappa,1\},
\]
the invariant measure satisfies the Talagrand $T_2$ inequality
\begin{equation}\label{eq:T2}
   \mathsf{W}_E(\mu,\pi_N)^2
      \le C_T \mathsf{KL}(\mu\mid\pi_N)
\end{equation}
for every probability measure $\mu\ll\pi_N$.
\end{prop}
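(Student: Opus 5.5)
The plan is to apply the Otto--Villani theorem: a logarithmic Sobolev inequality implies a Talagrand $T_2$ inequality with constant controlled by the LSI constant. Two points need checking. First, the Euclidean full-gradient LSI of Theorem~\ref{thm:main-lsi} must be converted into the normalization used by Otto--Villani. Second, the transport cost must be switched from Euclidean to the energy norm.

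\textbf{Step 1: applicability of Theorem~\ref{thm:main-lsi}.} As noted after Assumption~\ref{ass:weak}, for $\theta_0$ small Assumption~\ref{ass:weak} implies Assumption~\ref{ass:admissible-potential}. Hence Theorem~\ref{thm:main-lsi} gives, for every density $f$,
\[
\Ent_{\pi_N}(f)\le C_{\mathrm{LSI}}\cI_{\mathrm{full}}(f)=C_{\mathrm{LSI}}\int\frac{|\nabla f|^2}{f}\dd\pi_N .
\]
In the normalization $\Ent(f)\le\frac1{2\rho}\int|\nabla f|^2/f$, this is an LSI with $\rho=1/(2C_{\mathrm{LSI}})$.

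\textbf{Step 2: Otto--Villani.} Otto--Villani then yields the Euclidean inequality
\[
\mathsf W_2^{\mathrm{Euc}}(\mu,\pi_N)^2\le\frac2\rho\,\mathsf{KL}(\mu\mid\pi_N)=4C_{\mathrm{LSI}}\,\mathsf{KL}(\mu\mid\pi_N).
\]
This is the main point needing care, because $\pi_N$ is not explicit and not log-concave. One must therefore use the general Otto--Villani result, valid for any probability measure on $\R^{2N}$ satisfying an LSI; the Bobkov--Gentil--Ledoux Hamilton--Jacobi proof requires no curvature. The hypotheses that matter are the following: $\pi_N$ has a smooth positive density (Proposition~\ref{prop:fixedN-mixing}) and finite second moment; the LSI holds for all smooth compactly supported functions, which was established in \eqref{eq:NESS-function-LSI-energy}. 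If needed, one first treats $\mu$ with bounded, compactly supported density and then approximates, using lower semicontinuity of $\mathsf W$ and monotone convergence of $\mathsf{KL}$ under truncation.

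\textbf{Step 3: change of cost.} By \eqref{eq:norm-equivalence} with $M_0=\nu+4\kappa$, one has $\norm{z-z'}_E^2\le\max\{\nu+4\kappa,1\}|z-z'|^2$. Evaluating the energy-norm cost on any coupling therefore gives $\mathsf W_E^2\le\max\{\nu+4\kappa,1\}\,(\mathsf W_2^{\mathrm{Euc}})^2$, and hence
\[
\mathsf W_E(\mu,\pi_N)^2\le 4C_{\mathrm{LSI}}\max\{\nu+4\kappa,1\}\,\mathsf{KL}(\mu\mid\pi_N)=C_T\,\mathsf{KL}(\mu\mid\pi_N).
\]
If $\mathsf{KL}=\infty$, the inequality is trivial.

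Alternatively, one could work directly with the sharper energy-dual LSI in \eqref{eq:NESS-function-LSI-energy} and apply Otto--Villani in the metric $\norm\cdot_E$. That route gives a better constant, but the stated $C_T$ matches the Euclidean route above, so I would follow that.
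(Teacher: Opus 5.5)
Your proposal is correct and follows the paper's proof essentially verbatim. Assumption~\ref{ass:weak} implies admissibility via Lemma~\ref{lem:gramian-bound}, Theorem~\ref{thm:main-lsi} gives the Euclidean LSI, Otto--Villani turns it into a Euclidean $T_2$ inequality with constant $4C_{\mathrm{LSI}}$ (your normalization $\rho=1/(2C_{\mathrm{LSI}})$ is right), and the comparison $\norm{z}_E^2\le\max\{\nu+4\kappa,1\}|z|^2$ yields $C_T$. Your side remark is also accurate: applying Otto--Villani directly to the energy-dual form \eqref{eq:NESS-function-LSI-energy}, in energy coordinates, would give the sharper constant $2C_{\mathrm{B}}$.
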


\begin{proof}
The Gramian bound in Lemma~\ref{lem:gramian-bound} shows that
Assumption~\ref{ass:weak} implies the admissibility smallness required in
Theorem~\ref{thm:main-lsi}.  Thus the theorem applies. The
full-gradient LSI therefore gives the Euclidean Talagrand inequality.  For the homogeneous chain,
$\norm{z}_E^2\le\max\{\nu+4\kappa,1\}|z|^2$, and this comparison gives
\eqref{eq:T2}.
\end{proof}

\begin{prop}[Entropy smoothing at time $T_c$]\label{prop:smoothing}
Under Assumption~\ref{ass:weak}, every probability measure $\mu$ with finite
second moment in the energy norm satisfies
\begin{equation}\label{eq:entropy-smoothing}
   \mathsf{KL}(\mu\mc P_{T_c}\mid\pi_N)
      \le C\bigl[1+(N^3\Lambda_N)^2\bigr]
      \mathsf{W}_E(\mu,\pi_N)^2.
\end{equation}
\end{prop}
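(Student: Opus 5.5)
The plan is to combine the kernel estimate of Proposition~\ref{prop:kernel-entropy} with convexity of relative entropy along an optimal coupling. I use the invariance $\pi_N\mc P_{T_c}=\pi_N$ to write both $\mu\mc P_{T_c}$ and $\pi_N$ as mixtures of transition kernels indexed by a coupling.

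First, suppose $\mathsf{W}_E(\mu,\pi_N)<\infty$; otherwise there is nothing to prove. This holds because $\mu$ has finite second moment in the energy norm and so does $\pi_N$ by Proposition~\ref{prop:fixedN-mixing}. Let $\lambda\in\Pi(\mu,\pi_N)$ be an optimal coupling for $\mathsf{W}_E$, which exists by lower semicontinuity and tightness. Alternatively, take a near-optimal coupling and let the slack tend to zero at the end. Then
\[
   \mu\mc P_{T_c}=\int\mc P_{T_c}(z',\cdot)\,\lambda(\dd z',\dd z),
   \qquad
   \pi_N=\pi_N\mc P_{T_c}=\int\mc P_{T_c}(z,\cdot)\,\lambda(\dd z',\dd z).
\]

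Second, apply the joint convexity of $\mathsf{KL}$ to these two mixtures over the common mixing measure $\lambda$. Concretely, this uses the chain rule and data processing on the product space $(z',z,y)$, exactly as in the proof of Proposition~\ref{prop:kernel-entropy}, with the projection onto $y$. This gives
\[
   \mathsf{KL}(\mu\mc P_{T_c}\mid\pi_N)
   \le\int\mathsf{KL}\bigl(\mc P_{T_c}(z',\cdot)\mid\mc P_{T_c}(z,\cdot)\bigr)\,\lambda(\dd z',\dd z).
\]

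Third, insert \eqref{eq:kernel-entropy} pointwise in $(z',z)$. The integral becomes $C[1+(N^3\Lambda_N)^2]\int\norm{z-z'}_E^2\dd\lambda$, which equals the right side of \eqref{eq:entropy-smoothing} by optimality of $\lambda$.

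The only delicate point is measurability: $(z',z)\mapsto\mc P_{T_c}(z',\cdot)$ must be a Borel kernel so that the joint measures $\lambda(\dd z',\dd z)\mc P_{T_c}(z',\dd y)$ are well defined. This follows from the Feller continuity of the additive-noise SDE, via the Gronwall estimate \eqref{eq:polygonal-solution-convergence}, so I expect no real obstacle.
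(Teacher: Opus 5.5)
Your proposal is correct and follows the paper's proof essentially verbatim. The paper also writes $\mu\mc P_{T_c}$ and $\pi_N=\pi_N\mc P_{T_c}$ as mixtures over a common coupling $\lambda$, applies joint convexity of relative entropy, inserts the kernel bound \eqref{eq:kernel-entropy} pointwise, and then optimizes over $\lambda$; fixing an optimal coupling in advance, as you do, instead of optimizing at the end is an immaterial difference.
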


\begin{proof}
Let $\lambda\in\Pi(\pi_N,\mu)$.  Since $\pi_N\mc P_{T_c}=\pi_N$, joint
convexity of relative entropy and Proposition~\ref{prop:kernel-entropy} give
\[
\begin{aligned}
   \mathsf{KL}(\mu\mc P_{T_c}\mid\pi_N)
   &\le\int
      \mathsf{KL}\bigl(\mc P_{T_c}(y,\cdot)
         \mid\mc P_{T_c}(x,\cdot)\bigr)\,\lambda(\dd x,\dd y)\\
   &\le C\bigl[1+(N^3\Lambda_N)^2\bigr]
      \int\norm{x-y}_E^2\,\lambda(\dd x,\dd y).
\end{aligned}
\]
Optimizing over $\lambda$ proves \eqref{eq:entropy-smoothing}.
\end{proof}

\begin{proof}[Proof of Theorem~\ref{thm:main-stlsi}]
Proposition~\ref{prop:kernel-entropy} proves part (i).  We turn to the entropy
contraction needed for parts (ii) and (iii).  Write
\[
   C_{\mathsf{KL},N}=C\bigl[1+(N^3\Lambda_N)^2\bigr]
\]
for the constant in \eqref{eq:entropy-smoothing}.  Before applying that
smoothing estimate over the interval of length $T_c$, we let the dynamics
contract in Wasserstein distance for the additional time
\begin{equation}\label{eq:sN-def}
   s_N=\frac{N^3}{2c_W}
      \log_+\bigl(2C_{\mathsf{KL},N}C_W^2C_T\bigr),
   \qquad
   T_\ast=T_c+s_N,
\end{equation}
where $\log_+r=\max\{0,\log r\}$.  Let $f$ have finite entropy and put
$\mu=f\pi_N$.  Proposition~\ref{prop:T2} gives
$\mathsf{W}_E(\mu,\pi_N)<\infty$.  Since $\pi_N$ has a finite second moment by
Proposition~\ref{prop:fixedN-mixing}, so does $\mu$, and the smoothing and
contraction estimates apply.  Using them successively gives
\[
\begin{aligned}
   \Ent_{\pi_N}(\mc P_{T_c+s_N}^*f)
      &=\mathsf{KL}((\mu\mc P_{s_N})\mc P_{T_c}\mid\pi_N)\\
      &\le C_{\mathsf{KL},N}\mathsf{W}_E(\mu\mc P_{s_N},\pi_N)^2\\
      &\le C_{\mathsf{KL},N}C_W^2 e^{-2c_Ws_N/N^3}
         \mathsf{W}_E(\mu,\pi_N)^2\\
      &\le C_{\mathsf{KL},N}C_W^2C_T e^{-2c_Ws_N/N^3}
         \Ent_{\pi_N}(f)\\
      &\le\tfrac12\Ent_{\pi_N}(f).
\end{aligned}
\]
The last inequality is exactly the choice of $s_N$ in \eqref{eq:sN-def}.
Since $T_\ast=T_c+s_N$, we have proved the one-step contraction
\begin{equation}\label{eq:finite-time-entropy-proof}
   \Ent_{\pi_N}(\mc P_{T_\ast}^*f)
      \le\tfrac12\Ent_{\pi_N}(f).
\end{equation}
Because $T_c=\tau_cN^3$ and
$C_{\mathsf{KL},N}=C[1+(N^3\Lambda_N)^2]$, the definition of $s_N$ also gives
the observation-time bound \eqref{eq:TN-bound}.

Apply the integrated entropy-dissipation identity of
Lemma~\ref{lem:integrated-entropy} on $[0,T_\ast]$.  Together with
\eqref{eq:finite-time-entropy-proof}, it yields
\[
   \int_0^{T_\ast}\cI_\partial(\mc P_t^*f)\,\dd t
      =\Ent_{\pi_N}(f)-\Ent_{\pi_N}(\mc P_{T_\ast}^*f)
      \ge\tfrac12\Ent_{\pi_N}(f).
\]
This proves part (ii).

If $\sup_N N^3\Lambda_N<\infty$, then \eqref{eq:TN-bound} gives
$T_\ast\le\tau_*N^3$.  Enlarging the integration interval in
\eqref{eq:quant-stlsi} proves \eqref{eq:main-stlsi}.  For the decay estimate,
let $k=\lfloor t/T_\ast\rfloor$.  Iterating
\eqref{eq:finite-time-entropy-proof} $k$ times and using monotonicity of entropy
gives
\[
   \Ent_{\pi_N}(\mc P_t^*f)
      \le 2^{-k}\Ent_{\pi_N}(f)
      \le 2\exp\!\left(-\frac{(\log2)t}{T_\ast}\right)
      \Ent_{\pi_N}(f)
      \le 2e^{-c_*t/N^3}\Ent_{\pi_N}(f),
\]
where $c_*=(\log2)/\tau_*$.  This proves
\eqref{eq:main-entropy-decay} and part (iii).
\end{proof}

\appendix
\section{Gaussian pushforward estimate}\label{app:gaussian-pushforward}

\begin{lemma}[Relative entropy of Gaussian pushforwards]\label{lem:Gaussian-images}
Let $\gamma_d=\mathcal{N}(0,I_d)$ and let $(\Phi_x)_{x\in X}$ be $C^1$ global
diffeomorphisms of $\R^d$.  Suppose that, for a metric $r$ on $X$ and all
$x,y\in X$ and $u,v\in\R^d$,
\begin{align*}
   \norm{D\Phi_x(u)-I_d}_{\mathrm{op}}&\le\frac1{10},\\
   |\Phi_x(u)-\Phi_y(u)|&\le A r(x,y),\\
   \norm{D\Phi_x(u)-D\Phi_y(v)}_{\mathrm{HS}}
      &\le L\bigl(r(x,y)+|u-v|\bigr).
\end{align*}
Then
\begin{equation}\label{eq:Gaussian-image-entropy}
   \mathsf{KL}\bigl((\Phi_y)_\#\gamma_d\mid(\Phi_x)_\#\gamma_d\bigr)
      \le C\bigl[A^2+(1+A)^2L^2\bigr]r(x,y)^2,
\end{equation}
where $C$ is independent of $d,A,L$.
\end{lemma}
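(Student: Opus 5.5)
The plan is to reduce the statement to a single explicit Gaussian change-of-variables computation in which the dimension-dependent trace terms cancel. Relative entropy is invariant under applying the same bijection to both measures, so applying $\Phi_x^{-1}$ gives
\[
   \mathsf{KL}\bigl((\Phi_y)_\#\gamma_d\mid(\Phi_x)_\#\gamma_d\bigr)
   =\mathsf{KL}\bigl(\Psi_\#\gamma_d\mid\gamma_d\bigr),
   \qquad \Psi:=\Phi_x^{-1}\circ\Phi_y .
\]
The map $\Psi$ is a $C^1$ diffeomorphism. The density of $\Psi_\#\gamma_d$ at $\Psi(u)$ is $\gamma_d(u)/\det D\Psi(u)$, and $\det D\Psi>0$ by the near-identity bound below. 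Writing $\Psi(u)=u+g(u)$, this yields
\[
   \mathsf{KL}(\Psi_\#\gamma_d\mid\gamma_d)
   =\E_{\gamma_d}\Bigl[u\cdot g(u)+\tfrac12|g(u)|^2-\log\det\bigl(I_d+Dg(u)\bigr)\Bigr].
\]

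The first key step is the cancellation. Gaussian integration by parts gives $\E[u\cdot g]=\E[\tr Dg]$. This is legitimate because $g$ turns out to be bounded and $C^1$ with bounded derivative. For the log-determinant, suppose a real matrix $M$ has $\norm{M}_{\mathrm{op}}\le\frac12$. Its (complex) eigenvalues $\lambda_i$ then satisfy $|\lambda_i|\le\frac12$, so $\Re\log(1+\lambda_i)\ge\Re\lambda_i-|\lambda_i|^2$. By Schur's inequality, $\sum|\lambda_i|^2\le\norm{M}_{\mathrm{HS}}^2$. Together these give
\[
   -\log\det(I+M)\le-\tr M+\norm{M}_{\mathrm{HS}}^2 .
\]
The trace terms therefore cancel exactly, and we obtain
\[
   \mathsf{KL}\le\tfrac12\E|g|^2+\E\norm{Dg}_{\mathrm{HS}}^2 .
\]
No dimension-dependent constant enters this bound.

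The second step bounds $g$ and $Dg$ using the hypotheses.

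\emph{Bound on $g$.} Since $\norm{D\Phi_x-I}_{\mathrm{op}}\le\frac1{10}$, the inverse $\Phi_x^{-1}$ is $\frac{10}{9}$-Lipschitz. Hence
\[
   |g(u)|=|\Phi_x^{-1}(\Phi_y(u))-\Phi_x^{-1}(\Phi_x(u))|\le\tfrac{10}{9}A\,r(x,y).
\]

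\emph{Bound on $Dg$.} Set $v=\Psi(u)$. Then
\[
   Dg(u)=D\Phi_x(v)^{-1}\bigl[D\Phi_y(u)-D\Phi_x(v)\bigr].
\]
Its operator norm is at most $\frac{10}{9}\cdot\frac{2}{10}<\frac12$, which justifies the log-determinant expansion above. Its Hilbert--Schmidt norm is at most
\[
   \tfrac{10}{9}L\bigl(r(x,y)+|u-v|\bigr)=\tfrac{10}{9}L\bigl(r(x,y)+|g(u)|\bigr)\le C(1+A)L\,r(x,y).
\]
Inserting both bounds gives \eqref{eq:Gaussian-image-entropy} with an absolute constant $C$.

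I expect the main obstacle to be the log-determinant step, because a naive bound on $-\log\det$ would cost a factor of $d$. The point is that the linear trace term must be matched exactly by the Gaussian integration by parts, leaving only the second-order Hilbert--Schmidt remainder. The Schur-inequality bound for non-symmetric $Dg$ handles that remainder. The remaining technical points are routine: justifying the integration by parts from boundedness of $g$ and $Dg$, and the change-of-variables formula for a global $C^1$ diffeomorphism.
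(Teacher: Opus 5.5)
Your proposal is correct and follows essentially the same route as the paper: reduce to $\mathsf{KL}(\tau_\#\gamma_d\mid\gamma_d)$ with $\tau=\Phi_x^{-1}\circ\Phi_y$, apply the Gaussian change of variables, cancel the trace term by Gaussian integration by parts, and use the same bounds $|h|\le\frac{10}{9}A\,r(x,y)$ and $\norm{Dh}_{\mathrm{HS}}\le C(1+A)L\,r(x,y)$. The only cosmetic difference is that you prove the one-sided log-determinant inequality through eigenvalues and Schur's inequality, whereas the paper invokes the logarithm series together with Schatten H\"older.
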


\begin{proof}
Fix $x,y\in X$, set $R=r(x,y)$, and let
\[
   \tau=\Phi_x^{-1}\circ\Phi_y,
   \qquad h(u)=\tau(u)-u.
\]
Invariance of relative entropy under $\Phi_x$ reduces the left side of
\eqref{eq:Gaussian-image-entropy} to
$\mathsf{KL}(\tau_\#\gamma_d\mid\gamma_d)$.  The first assumption and the
fundamental theorem of calculus imply that, for all $a,b\in\R^d$,
\[
   |\Phi_x(a)-\Phi_x(b)|\ge\frac9{10}|a-b|.
\]
Taking $a=\tau(u)$ and $b=u$, and using
$\Phi_x(\tau(u))=\Phi_y(u)$, we obtain
\[
   \frac9{10}|h(u)|
      \le |\Phi_y(u)-\Phi_x(u)|
      \le AR,
   \qquad\text{hence}\qquad
   |h(u)|\le\frac{10}{9}AR.
\]
Moreover,
\[
   D\tau(u)-I_d
      =D\Phi_x(\tau(u))^{-1}
         \bigl(D\Phi_y(u)-D\Phi_x(\tau(u))\bigr).
\]
The assumptions therefore give
\[
   \norm{Dh(u)}_{\mathrm{op}}\le\frac29,
   \qquad
   \norm{Dh(u)}_{\mathrm{HS}}
      \le C(1+A)LR.
\]
In particular, $\det D\tau>0$.  For $G\sim\gamma_d$, the Gaussian
change-of-variables formula yields
\[
\begin{aligned}
   \mathsf{KL}(\tau_\#\gamma_d\mid\gamma_d)
      &=\E\left[
         \frac{|\tau(G)|^2-|G|^2}{2}-\log\det D\tau(G)
      \right]\\
      &=\E\left[
         G\cdot h(G)+\frac12|h(G)|^2
         -\log\det(I_d+Dh(G))
      \right].
\end{aligned}
\]
Gaussian integration by parts gives
$\E[G\cdot h(G)]=\E[\tr Dh(G)]$.  For any matrix $H$ with
$\norm{H}_{\mathrm{op}}\le2/9$, the logarithm series and Schatten H\"older's
inequality give the dimension-free bound
\[
   \left|\tr H-\log\det(I_d+H)\right|
      \le C\norm{H}_{\mathrm{HS}}^2.
\]
Applying this estimate with $H=Dh(G)$ gives
\[
   \mathsf{KL}(\tau_\#\gamma_d\mid\gamma_d)
      \le \frac12\E|h(G)|^2+C\E\norm{Dh(G)}_{\mathrm{HS}}^2
      \le C\bigl[A^2+(1+A)^2L^2\bigr]R^2. \qedhere
\]
\end{proof}

\bibliographystyle{amsxport}
\bibliography{refs}

\end{document}